\definecolor{light-gray}{gray}{0.80}
\newcommand{\ourttfontsize}{\normalsize}
\newcommand{\op}[1]{\texttt{#1}\xspace}
\newcommand{\deqName}{\texttt{\ourttfontsize{}deq}\xspace}
\newcommand{\enqName}{\texttt{\ourttfontsize{}enq}\xspace}
\newcommand{\deq}[1]{\ensuremath \deqName(#1)\xspace}
\newcommand{\enq}[1]{\ensuremath \enqName(#1)\xspace}
\newcommand{\emptyValue}{\texttt{\ourttfontsize{}empty}\xspace}
\newcommand{\insName}{\texttt{\ourttfontsize{}ins}\xspace}
\newcommand{\remName}{\texttt{\ourttfontsize{}rem}\xspace}
\newcommand{\ins}[1]{\ensuremath \insName(#1)\xspace}
\newcommand{\rem}[1]{\ensuremath \remName(#1)\xspace}
\newcommand{\headName}{\texttt{\ourttfontsize{}head}\xspace}
\newcommand{\sizeName}{\texttt{\ourttfontsize{}size}\xspace}
\newcommand{\peekName}{\texttt{\ourttfontsize{}peek}\xspace}
\newcommand{\topName}{\texttt{\ourttfontsize{}top}\xspace}
\newcommand{\head}[1]{\ensuremath \headName(#1)\xspace}
\newcommand{\topp}[1]{\ensuremath \topName(#1)\xspace}
\newcommand{\size}[1]{\ensuremath \sizeName(#1)\xspace}
\newcommand{\peek}[1]{\ensuremath \peekName(#1)\xspace}
\newcommand{\Emp}{\texttt{\ourttfontsize{}Emp}\xspace}
\newcommand{\iName}{\texttt{i}\xspace}
\newcommand{\rName}{\texttt{r}\xspace}
\newcommand{\ii}[1]{\ensuremath \iName(#1)\xspace}
\newcommand{\rr}[1]{\ensuremath \rName(#1)\xspace}
\newcommand{\emptyName}{\texttt{\ourttfontsize{}empty}\xspace}
\newcommand{\eempty}[1]{\emptyName(#1)\xspace}
\newcommand{\pushName}{\texttt{\ourttfontsize{}push}\xspace}
\newcommand{\popName}{\texttt{\ourttfontsize{}pop}\xspace}
\newcommand{\push}[1]{\ensuremath \pushName(#1)\xspace}
\newcommand{\pop}[1]{\ensuremath \popName(#1)\xspace}
\newcommand{\Ins}{\ensuremath\texttt{\ourttfontsize{}Ins}}
\newcommand{\Rem}{\ensuremath\texttt{\ourttfontsize{}Rem}}
\newcommand{\DOb}{\ensuremath\texttt{\ourttfontsize{}DOb}}
\newcommand{\SOb}{\ensuremath\texttt{\ourttfontsize{}SOb}}
\newcommand{\abOrder}[1]{\ensuremath{\prec_{#1}}}
\newcommand{\precOrder}[1]{\ensuremath{<_{#1}}}
\newcommand{\history}[1]{\ensuremath{\mathbf{#1}}\xspace}
\newcommand{\Complete}[1]{\ensuremath{\rm{\texttt{Complete({#1})}}}}
\newcommand{\crazy}{\textsf{\scriptsize NearlyQ}}
\newtheorem{proposition}{Proposition}
\newcommand{\qedd}{\hfill\ensuremath{\diamond}}
\newcommand{\OurTheorem}[3]{\bigbreak\noindent\textbf{Theorem~#1 (#2).} {\it #3}}
\newcommand{\OurProposition}[3]{\bigbreak\noindent\textbf{Proposition~#1 (#2).} {\it #3}}
\begin{document}

\title{Local Linearizability\footnote{This paper is an extended version of~\cite{CONCUR16}}}

\author[1]{Andreas~Haas}
\author[2]{Thomas~A.~Henzinger}
\author[3]{Andreas~Holzer}
\author[4]{Christoph~M.~Kirsch}
\author[1]{Michael~Lippautz}
\author[1]{Hannes~Payer}
\author[5]{Ali~Sezgin}
\author[4]{Ana~Sokolova}
\author[6,7]{Helmut~Veith}
\authorrunning{A.~Haas et al.}
\affil[1]{Google Inc.}
\affil[2]{IST Austria, Austria}
\affil[3]{University of Toronto, Canada}
\affil[4]{University of Salzburg, Austria}
\affil[5]{University of Cambridge, UK}
\affil[6]{Vienna University of Technology, Austria}
\affil[7]{\rm{\it{Forever in our hearts}}}

\subjclass{D.3.1 [Programming Languages]: Formal Definitions and Theory---Semantics; E.1 [Data Structures]: Lists, stacks, and queues; D.1.3 [Software]: Programming Techniques---Concurrent Programming}

\keywords{(concurrent) data structures, relaxed semantics, linearizability}

\maketitle

\begin{abstract}
The semantics of concurrent data structures is usually given by a sequential 
specification and a consistency condition. 
Linearizability is the most popular consistency condition due to its 
simplicity and general applicability.
Nevertheless, for applications that do not require all guarantees offered by 
linearizability, recent research has focused on improving performance and
scalability of concurrent data structures by relaxing their semantics. 

In this paper, we present local linearizability, a relaxed consistency
condition that is applicable to \emph{container-type} concurrent data 
structures like pools, queues, and stacks.
While linearizability requires that the effect of each operation is observed 
by all threads at the same time, local linearizability only requires that for 
each thread~T, the effects of its local insertion operations and the effects 
of those removal operations that remove values inserted by~T are observed by 
all threads at the same time.
We investigate theoretical and practical properties of local linearizability
and its relationship to many existing consistency conditions.
We present a generic implementation method for 
locally linearizable data structures that uses existing linearizable data structures as building blocks. 
Our implementations show performance and scalability improvements over the 
original building blocks and outperform the fastest existing
container-type implementations.
\end{abstract}

\section{Introduction}

Concurrent data structures are pervasive all along the software stack, from
operating system code to application software and beyond.
Both correctness and performance are imperative for concurrent data
structure implementations.
Correctness is usually specified by relating concurrent executions, admitted by
the implementation, with sequential executions, admitted by the sequential
version of the data structure. The latter form the \emph{sequential specification} of the data structure.
This relationship is formally captured by {\em consistency conditions}, such as
linearizability, sequential consistency, or quiescent consistency~\cite{Herlihy:AMP08}.

Linearizability~\cite{HerlihyW90} is the most accepted consistency condition for concurrent data
structures due to its simplicity and general applicability.
It guarantees that the effects of all operations by all threads are observed consistently. 
This global visibility requirement imposes the need of extensive
synchronization among threads which may in turn jeopardize performance and
scalability.
In order to enhance performance and scalability of implementations, recent research has explored relaxed sequential specifications~\cite{Henzinger:POPL13,Shavit:CACM11,Afek:OPODIS10},  resulting in 
well-performing implementations of concurrent data structures~\cite{Afek:OPODIS10,Haas:CF13,Henzinger:POPL13,Kirsch:PACT13,Rihani:CORR14,Alistarh:PPOPP15}. 
Except for~\cite{Jagadeesan:ICALP14}, the space of alternative consistency conditions that relax linearizability has been left unexplored to a large extent.
In this paper, we explore (part of) this gap by investigating 
\emph{local linearizability}, a novel consistency condition that is applicable to a large 
class of concurrent data structures that we call \emph{container-type} data 
structures, or \emph{containers} for short.
Containers include pools, queues, and stacks. 
A fine-grained spectrum of consistency conditions enables us to 
describe the semantics of concurrent implementations more precisely, e.g.,
we show in our appendix that work stealing queues~\cite{Michael:PPOPP09} which could only be proven to be linearizable wrt pool are actually locally linearizable wrt double-ended queue.

\begin{wrapfigure}[10]{r}{0.38\textwidth}
	\centering
  \scalebox{0.75}{
	\begin{tikzpicture}
							
		\node (T1) at (0.25,.75) {$T_1$};
		\node (T2) at (0.25,0) {$T_2$};
							
		\draw[dotted,->] (0.75, .75) -- (6.75, .75);
		\draw[dotted,->] (0.75, 0) -- (6.75, 0);
		\draw[|-|] (1,.75) -- node[above] {$\enq{1}$} (2,.75);
		\draw[|-|] (4,.75) -- node[above] {$\deq{2}$} (5,.75);
		\draw[|-|] (2.5,0) -- node[above] {$\enq{2}$} (3.5,0);
		\draw[|-|] (5.5,0) -- node[above] {$\deq{1}$} (6.5,0);
							
		\draw[dashed, rounded corners=3pt] (1.5, 1.35) -- (0.9, 1.35) -- (0.9, .6) -- (3.5, .6) -- (5.5, -0.2) -- (6.6, -0.2)  -- (6.6, .6) -- (3.9, .6) -- (1.9, 1.35) -- (1.5, 1.35);
							
		\draw[rounded corners=3pt] (3, -0.2) -- (3.6, -0.2) -- (5.1, .55) -- (5.1, 1.35) -- (4.0, 1.35) -- (2.4, .5) -- (2.4, -0.2) -- (3, -0.2);
							
	\end{tikzpicture}}\\[3pt]
	\begin{tabularx}{15em}{@{}X@{}}
	\footnotesize The thread-induced history of thread~$T_1$ is 
enclosed by a dashed line while the thread-induced history of thread~$T_2$ is enclosed by a solid line.
	\end{tabularx}\\
	\caption{Local Linearizability}
	\label{fig:intro-ll-vs-lin}
\end{wrapfigure} 
Local linearizability is a (thread-)local consistency condition that 
guarantees that insertions \emph{per thread} are observed
consistently. 
While linearizability requires a consistent view over all insertions, we 
only require that projections of the global history---so called \emph{thread-induced histories}---are linearizable.
The induced history of a thread~$T$ is a projection of a program
execution to the insert-operations in $T$ combined with all remove-operations
that remove values inserted by $T$ irrespective of whether they happen in
$T$ or not.
Then, the program execution is locally linearizable iff each thread-induced 
history is linearizable.
Consider the example (sequential) history depicted in Figure~\ref{fig:intro-ll-vs-lin}.
It is not linearizable wrt a queue since the values are 
not dequeued in the same order as they were enqueued.
However, each thread-induced history is linearizable wrt a queue and, 
therefore, the overall execution is locally linearizable wrt a queue.
In contrast to semantic relaxations based on relaxing sequential semantics 
such as~\cite{Henzinger:POPL13,Afek:OPODIS10}, local linearizability coincides
with \emph{sequential correctness} for single-threaded histories, i.e.,
a single-threaded and, therefore, sequential history is locally linearizable 
wrt a given sequential specification if and only if it is admitted by the
sequential specification.

Local linearizability is to linearizability what coherence is to 
sequential consistency.
Coherence~\cite{Hennessy:Book11}, which is almost universally accepted as the 
absolute minimum that a shared memory system should satisfy, is the 
requirement that there exists a unique global order per shared memory 
location. 
Thus, while all accesses by all threads to a given memory location have to conform to a 
unique order, consistent with program order, the relative ordering of accesses 
to multiple memory locations do not have to be the same. 
In other words, coherence is sequential consistency per memory location. 
Similarly, local linearizability is linearizability per local history.
In our view, local linearizability offers enough 
consistency for the correctness of many applications as it is the local 
view of the client that often matters.
For example, in a locally linearizable queue each client (thread) has the 
impression of using a perfect queue---no reordering will ever be observed 
among the values inserted by a single thread. 
Such guarantees suffice for many e-commerce and cloud applications. 
Implementations of locally linearizable data structures have been successfully applied for managing free lists in the design of the fast and scalable memory allocator scalloc~\cite{Aigner:OOPSLA15}.
Moreover, except for fairness, locally linearizable queues guarantee all 
properties required from Dispatch Queues~\cite{dispatch-queues}, a common concurrency programming mechanism on mobile devices.

In this paper, we study theoretical and practical properties of local linearizability.
Local linearizability is compositional---a history over multiple concurrent 
objects is locally linearizable iff all per-object histories are 
locally linearizable (see Thm.~\ref{thm:comp}) and locally linearizable 
container-type data structures, including queues and stacks, admit only ``sane'' 
behaviours---no duplicated values, no values returned from thin air, and no 
values lost (see Prop.~\ref{prop:LL-pool}).
Local linearizability is a weakening of 
linearizability for a natural class of data structures including pools, 
queues, and stacks (see Sec.~\ref{sec:LL-L-comparison}). 
We compare local linearizability to  linearizability, sequential, and quiescent
consistency, and to many shared-memory consistency conditions.

Finally, local linearizability leads to new
efficient implementations.
We present a generic implementation scheme that, given a linearizable
implementation of a sequential specification~$S$, produces an implementation
that is locally linearizable wrt~$S$
(see Sec.~\ref{sec:implementations-queues-and-stacks}). 
Our implementations show dramatic improvements in performance and scalability. 
In most cases the locally linearizable implementations scale almost linearly and even outperform state-of-the-art pool implementations. 
We produced locally linearizable variants of state-of-the-art concurrent queues and stacks, as well as of the relaxed data structures from~\cite{Henzinger:POPL13,Kirsch:PACT13}. 
The latter are relaxed in two dimensions: they are locally linearizable (the consistency condition is relaxed) and are out-of-order-relaxed (the sequential specification is relaxed).
The speedup of the locally linearizable implementation to the
fastest linearizable queue (LCRQ) and stack (TS
Stack) implementation at 80 threads is 2.77 and 2.64, respectively. 
Verification of local linearizability, i.e. proving correctness, for each of our new locally linearizable implementations is immediate, given that the starting implementations are linearizable.

\section{Semantics of Concurrent Objects}
\label{sec:semantics}

The common approach to define the semantics of an implementation of a 
concurrent data structure is (1) to specify a set of valid sequential
behaviors---the sequential specification, and (2) to relate the admissible
concurrent executions to sequential executions specified by the sequential
specification---via the consistency condition.
That means that an implementation of a concurrent data structure actually
corresponds to several sequential data structures, and vice versa, depending 
on the consistency condition used. 
A (sequential) data structure $D$ is an object with a set of method calls~$\Sigma$. 
We assume that method calls include parameters, i.e., input and output values from a given set of values. 
The sequential specification $S$ of $D$ is a prefix-closed subset of $\Sigma^*$. 
The elements of $S$ are called $D$-valid sequences.
For ease of presentation, we assume that each value in a data structure can be inserted and removed at most once. 
This is without loss of generality, as we may see the set of values as consisting of pairs of elements (core values) and version numbers, i.e. $V = E \times \mathbb{N}$. Note that this is a technical assumption that only makes the presentation and the proofs simpler, it is not needed and not done in locally linearizable implementations. 
While elements may be inserted and removed multiple times, the version numbers provide uniqueness of values.
Our assumption ensures that whenever a sequence~\history{s} is part of a sequential specification $S$, then, each method call in \history{s} appears exactly once. 
An additional core value, that is not an element, is $\emptyValue$. 
It is returned by remove method calls that do not find an element to return.
We denote by $\Emp$ the set of values that are versions of $\emptyValue$, i.e., $\Emp = \{\emptyValue\} \times \mathbb{N}$.

\begin{definition}[Appears-before Order, Appears-in Relation]
Given a sequence $\history{s} \in \Sigma^*$ in which each method call appears
exactly once, we denote by $\abOrder{\history{s}}$ the total 
\emph{appears-before order} over method calls in $\history{s}$. 
Given a method call $m \in \Sigma$, we write $m \in \history{s}$ for \emph{$m$ appears in~$\history{s}$}. 
\qedd
\end{definition}

Throughout the paper, we will use pool, queue, and stack as typical examples 
of containers.
We specify their sequential specifications in an \emph{axiomatic}
way~\cite{Henzinger:CONCUR13}, i.e., as sets of axioms that exactly define the
valid sequences.

\begin{table}[t]
\centering
\footnotesize
\begin{tabularx}{\textwidth}{@{\hspace{0pt}}l@{\hspace{3pt}}X}\hline\\[-7pt]
(1) & $\forall i,j \in \{1,\dots, n\}. \,\,\,\history{s} = m_1\dots m_n \,\, \wedge \,\, m_i = m_j \,\,\, \Rightarrow \,\,\, i = j$\\[2pt]
(2) & $\forall x \in V.\,\,\, \rr{x} \in \history{s} \,\,\, \Rightarrow \,\,\, \ii{x} \in \history{s} \wedge \ii{x} \abOrder{\history{s}} \rr{x}$\\[2pt]
(3) & $\forall e \in \Emp. \,\,\forall x \in V.\,\,\, \ii{x} \abOrder{\history{s}}\rr{e} \Rightarrow \rr{x} \abOrder{\history{s}}\rr{e}$\\[2pt]
(4) & $\forall x,y \in V.\,\, \ii{x} \abOrder{\history{s}} \ii{y}\,\,\, \wedge\,\,\,\rr{y} \in \history{s} \,\,\,\Rightarrow\,\,\, \rr{x} \in \history{s} \,\, \wedge \,\, \rr{x} \abOrder{\history{s}} \rr{y}$\\[2pt]
(5) & $\forall x,y \in V.\,\, \ii{x} \abOrder{\history{s}} \ii{y} \abOrder{\history{s}} \rr{x} \,\,\,\Rightarrow\,\,\, 
\rr{y} \in \history{s} \,\, \wedge \,\, \rr{y} \abOrder{\history{s}} \rr{x}$\\[2pt]\hline
\end{tabularx}
\caption{The pool axioms (1), (2), (3); the queue order axiom (4); the stack order axiom (5)\label{tab:axioms}}
\end{table}

\begin{definition}[Pool, Queue, \& Stack]
\label{example:pool}
\label{example:queue} 
\label{example:stack}
A pool, queue, and stack with values in a set $V$ have the sets of 
methods 
$\Sigma_P = \{\ins{x}, \rem{x} \mid x \in V\} \cup \{\rem{e} \mid e \in \Emp\}$, 
$\Sigma_Q = \{\enq{x}, \deq{x} \mid x \in V\} \cup \{\deq{e} \mid e \in \Emp\}$,
and 
$\Sigma_S = \{\push{x},
\pop{x} \mid x \in V\} \cup \{\pop{e} \mid e \in \Emp\}$,
respectively.
We denote the sequential specification of a pool by $S_P$, the
sequential specification of a queue by $S_Q$, and the sequential
specification of a stack by $S_S$.
A sequence $\history{s} \in \Sigma_P^*$ belongs to $S_P$ iff 
it satisfies axioms (1) - (3) in Table~\ref{tab:axioms}---\emph{the pool axioms}---when instantiating $\ii$ with $\ins$ and $\rr$ with $\rem$. 
We keep axiom~(1) for completeness, although it is subsumed by our  assumption that each value is inserted and removed at most once.
Specification~$S_Q$ contains all sequences $\history{s}$ that satisfy the pool axioms and axiom (4)---\emph{the queue order axiom}---after instantiating $\ii$ with $\enq$ and $\rr$ with $\deq$. 
Finally, $S_S$ contains all sequences $\history{s}$ that satisfy 
the pool axioms and axiom (5)---\emph{the stack order axiom}---after instantiating $\ii$ with $\push$ and $\rr$ with $\pop$. 
\qedd
\end{definition}

We represent concurrent executions via concurrent histories. 
An example history is shown in Figure~\ref{fig:intro-ll-vs-lin}. 
Each thread executes a sequence of method calls from $\Sigma$; method calls executed by different threads may overlap (which does not happen in Figure~\ref{fig:intro-ll-vs-lin}). 
The real-time duration of method calls is irrelevant for the semantics of concurrent objects; all that matters is whether method calls overlap. Given this abstraction, a concurrent history is fully determined by a sequence of invocation and response events of method calls.
We distinguish method invocation and response events by augmenting the alphabet.
 Let $\Sigma_i = \{ m_i \mid m \in \Sigma\}$ and $\Sigma_r = \{ m_r \mid m \in \Sigma\}$ denote the sets of method-invocation events and method-response events, respectively, for the method calls in $\Sigma$. 
Moreover, let $I$ be the set of thread identifiers. 
 Let $\Sigma^I_i = \{ m^k_i \mid m \in \Sigma, k \in I\}$ and $\Sigma^I_r = \{ m^k_r \mid m \in \Sigma, k\in I\}$ denote the sets of method-invocation and -response events augmented with identifiers of executing threads. For example, $m^k_i$ is the invocation of method call $m$ by thread $k$. 
 Before we proceed, we mention a standard notion that we will need in several occasions.
 
\begin{definition}[Projection]
 \label{def:proj-seq}
 Let \history{s} be a sequence over alphabet $\Sigma$ and $M \subseteq \Sigma$. By $\history{s}|M$ we denote the projection of $\history{s}$ on the symbols in $M$, i.e., the sequence obtained from $\history{s}$ by removing all symbols that are not in $M$. \qedd
\end{definition}

\begin{definition}[History] 
A (concurrent) history $\history{h}$ is a sequence in $(\Sigma^I_i \cup \Sigma^I_r)^*$ where 
\begin{inparaenum}[\bf (1)]
\item no invocation or response event appears more than once, i.e., if $\history{h} = m_1 \dots m_n$ and $m_h = m_*^k(x)$ and $m_j = m_*^l(x)$, for $* \in \{i,r\}$, then $h = j$ and $k = l$, and
\item if a response event $m^k_r$ appears in $\history{h}$, then the corresponding invocation event $m^k_i$ also appears in $\history{h}$ and $m_i \abOrder{\history{h}} m_r$.\qedd
\end{inparaenum}
\end{definition}

\begin{example}\label{ex:queue-history}
A queue history (left) and its formal representation as a sequence (right):\\[2pt]
\begin{minipage}{0.425\textwidth}
	\centering
	\scalebox{0.75}{
	\begin{tikzpicture}
							
		\node (T1) at (0.25,.75) {$T_1$};
		\node (T2) at (0.25,0) {$T_2$};
							
		\draw[dotted,->] (0.75, .75) -- (6, .75);
		\draw[dotted,->] (0.75, 0) -- (6, 0);
		\draw[|-|] (1,.75) -- node[above] {$\enq{2}$} (2.5,.75);
		\draw[|-|] (4,.75) -- node[above] {$\deq{1}$} (5.5,.75);
		\draw[|-|] (2.25,0) -- node[above] {$\enq{1}$} (4.25,0);
							
	\end{tikzpicture}
	}
\end{minipage}
\hfill
\begin{minipage}{0.55\textwidth}
\scalebox{0.9}{
$ \enq{2}^1_i \enq{1}^2_i \enq{2}^1_r \deq{1}^1_i \enq{1}^2_r \deq{1}^1_r$
}
\end{minipage}
\end{example}

A history is \emph{sequential} if every response event is immediately
preceded by its matching invocation event and vice versa. 
Hence, we may ignore thread identifiers and identify a sequential history with a sequence in $\Sigma^*$,
e.g., $\enq{1}\enq{2}\deq{2}\deq{1}$ identifies the sequential history 
in Figure~\ref{fig:intro-ll-vs-lin}.

A history~$\history{h}$ is \emph{well-formed} if $\history{h}|k$ is sequential for every thread identifier $k \in I$ where $\history{h}|k$ denotes the projection of $\history{h}$ on the set $\{ m^k_i \mid m \in \Sigma\} \cup \{ m^k_r \mid m \in \Sigma\}$ of events that are local to thread~$k$. 
From now on we will use the term history for well-formed history. 
Also, we may omit thread identifiers if they are not essential in a discussion.

A history $\history{h}$ determines a partial order on its set of method calls, the precedence order:
\begin{definition}[Appears-in Relation, Precedence Order]
\label{def:prec-order} 
The set of method calls of a history~$\history{h}$ is $M({\history{h}}) = \{m  \mid m_i \in \history{h}\}$. 
 A method call $m$ appears in $\history{h}$, notation $m \in \history{h}$, if $m \in M({\history{h}})$.
The \emph{precedence order} for $\history{h}$ is the partial order $\precOrder{\history{h}}$ such that, for $m, n \in \history{h}$, 
we have that $m \precOrder{\history{h}} n$ iff $m_r \abOrder{\history{h}} n_i$.
By $<_{\history{h}}^k$ we denote $\precOrder{\history{h}|k}$, the subset of the precedence order that relates pairs of method calls of thread $k$, i.e., the program order of thread $k$. \qedd
\end{definition}

We can characterize a sequential history as a 
history whose precedence order is total. 
In particular, the precedence order $\precOrder{\history{s}}$ of a sequential history $\history{s}$ coincides with its appears-before order $\abOrder{\history{s}}$.
The total order for history~$\history{s}$ in Fig.~\ref{fig:intro-ll-vs-lin} is $\enq{1} \precOrder{\history{s}}\enq{2} \precOrder{\history{s}}\deq{2} \precOrder{\history{s}}\deq{1}$.

\begin{definition}[Projection to a set of method calls]
\label{def:projection} Let $\history{h}$ be a history, $M \subseteq \Sigma$, 
$M^I_i = \{m^k_i \mid m \in M, k\in I\}$, and $M^I_r = \{m^k_r \mid m \in M, k\in I\}$. Then, we write
 $\history{h}|M$ for $\history{h}| (M^I_i \cup M^I_r)$. \qedd
\end{definition}

\noindent
Note that $\history{h}|M$ inherits \history{h}'s precedence order:
$m \precOrder{\history{h}|M} n \,\,\,\Leftrightarrow\,\,\,  m \in M \,\,\wedge\,\, n\in M\,\,\wedge\,\,m \precOrder{\history{h}} n$

A history $\history{h}$ is \emph{complete} if the response of every invocation event in $\history{h}$ appears in $\history{h}$. 
Given a history $\history{h}$, $\Complete{\history{h}}$ denotes the set of all \emph{completions} of $\history{h}$, i.e., the set of all complete histories that are obtained from $\history{h}$ by appending missing response events and/or removing pending invocation events. 
Note that $\Complete{\history{h}} = \{\history{h}\}$ iff $\history{h}$ is a complete history. 

A concurrent data structure $D$ over a set of methods $\Sigma$ is a 
(prefix-closed) set of concurrent histories over $\Sigma$.
A history may involve several concurrent objects. 
Let $O$ be a set of concurrent objects with individual sets of method calls 
$\Sigma_q$ and sequential specifications~$S_q$ for each object $q \in O$. 
A history~$\history{h}$ over $O$ is a history over the (disjoint) union of 
method calls of all objects in $O$, i.e., it has a set of method calls 
$\bigcup_{q \in O} \{q.m \mid m \in \Sigma_q\}$. 
The added prefix $q.$ ensures that the union is disjoint. 
The \emph{projection} of $\history{h}$ to an object $q \in O$, denoted by $\history{h}|q$, is the history with a set of method calls $\Sigma_q$ obtained by removing the prefix~$q.$ in every method call in $\history{h}|\{q.m \mid m \in \Sigma_q\}$.

\begin{definition}[Linearizability~\cite{HerlihyW90}]
\label{def:linearizability}
A history $\history{h}$ is \emph{linearizable} wrt the sequential specification~$S$ if there is a sequential 
history~$\history{s} \in S$ and a completion $\history{h}_c \in \Complete{\history{h}}$ such that
\begin{inparaenum}[\bf (1)]
	\item $\history{s}$ is a permutation of $\history{h}_c$, 
	      and
	\item $\history{s}$ preserves the precedence order of $\history{h}_c$, i.e., if $m \precOrder{\history{h}_c} n$, then $m \precOrder{\history{s}} n$. 
\end{inparaenum}
We refer to $\history{s}$ as a \emph{linearization} of $\history{h}$.
A concurrent data structure $D$ is linearizable wrt~$S$
if every history~$\history{h}$ of~$D$ is linearizable wrt~$S$.
A history~$\history{h}$ over a set of concurrent objects $O$ is linearizable wrt the sequential specifications~$S_q$ for $q \in O$ if there exists a linearization $\history{s}$ of $\history{h}$ such that $\history{s}|q \in S_q$ for each object~$q \in O$. \qedd
\end{definition}

\section{Local Linearizability}
\label{sec:semantics:lc}

Local linearizability is applicable to containers whose set of method calls is
a disjoint union $\Sigma = \Ins \cup \Rem \cup \DOb \cup \SOb$
of insertion method calls $\Ins$, removal method calls $\Rem$, data-observation method calls $\DOb$, and (global) shape-observation method calls $\SOb$.
Insertions (removals) insert (remove) a \emph{single} value in the data set $V$ or \emptyValue; data observations return a \emph{single} value in $V$; shape observations return a value (not necessarily in $V$) that provides information on the shape of the state, for example,
the size of a data structure.
Examples of data observations are $\head{x}$ (queue), $\topp{x}$ (stack), and $\peek{x}$ (pool). 
Examples of shape observations are $\eempty{b}$ that returns true if the data structure is empty and false otherwise, and $\size{n}$ that returns the number of elements in the data structure. 
 
Even though we refrain from formal definitions, we want to stress that a valid sequence of a container remains valid after deleting observer method calls:
\begin{equation}
 	\label{eq:cont-obs} 
 	S \,| \left( \Ins \cup \Rem \right) \subseteq S .
\end{equation}

\noindent
There are also containers with multiple insert/remove methods, e.g., a double-ended queue (deque) is a container with insert-left, insert-right, remove-left, and remove-right methods, to which local linearizability is also applicable.
However, local linearizability requires 
that each method call is either an insertion, or a removal, or an observation. 
As a consequence, set is not a container according to our definition, as in a set $\ins{x}$ acts as a global observer first, checking whether (some version of) $x$ is already in the set, and if not inserts $x$. 
Also hash tables are not containers for a similar reason.  
 
Note that the arity of each method call in a container being one excludes data structures like snapshot objects. 
It is possible to deal with higher arities in a fairly natural way, however, at the cost of complicated presentation.
We chose to present local linearizability on simple containers only. 
\label{sec:LL-core-def}
We present the definition of local linearizability without shape observations here and discuss shape observations in Appendix~\ref{sec:LL-shape-obs}.

\begin{definition}[In- and out-methods]	\label{def:in-out-m}
Let~$\history{h}$ be a container history. 
For each thread $T$ we define two subsets of the methods in $\history{h}$, called in-methods~$\mathrm{I}_T$ and out-methods~$\mathrm{O}_T$ of thread~$T$, respectively: \\[2pt]
\begin{tabularx}{\textwidth}{@{}Xl@{\hspace{.5em}}c@{\hspace{.5em}}lX@{}}
	& $\mathrm{I}_T$ & = &  $\{m \mid m \in M(\history{h}|{T}) \cap \Ins\}$\\
	& $\mathrm{O}_T$ & = &  \,\,\,\, $\{m(a) \in M(\history{h}) \cap \Rem \mid \ins{a} \in \mathrm{I}_T\} \cup \{m(e) \in M(\history{h}) \cap \Rem \mid e \in \Emp \}$\\
	& & & $\cup~\{m(a) \in M(\history{h}) \cap \DOb \mid \ins{a} \in \mathrm{I}_T\}$. & \qedd
\end{tabularx}
\end{definition}

Hence, the in-methods for thread $T$ are all insertions performed by $T$. 
The out-methods are all removals and data observers that return values inserted by $T$. 
Removals that remove the value $\emptyValue$ are also automatically added to the out-methods of $T$ as any thread (and hence also $T$) could be the cause of ``inserting'' $\emptyValue$. This way, removals of $\emptyValue$ serve as means for global synchronization. Without them each thread could perform all its operations locally without ever communicating with the other threads.
Note that the out-methods~$\mathrm{O}_T$ of thread~$T$ need not be performed by $T$, but they return values that are inserted by $T$.

\begin{definition}[Thread-induced History]
Let $\history{h}$ be a history. The thread-induced history $\history{h}_T$ is the projection of $\history{h}$ to the in- and out-methods of thread~$T$, 
i.e., $\history{h}_T = \history{h}| 
\left(\mathrm{I}_T \cup \mathrm{O}_T\right)$. \qedd
\end{definition}

\begin{definition}[Local Linearizability]
\label{def:lc}
A history $\history{h}$ is locally linearizable wrt a sequential specification~$S$ if 
\begin{inparaenum}[\bf (1)]
\item each thread-induced history $\history{h}_T$ is linearizable wrt~$S$, and 
\item the thread-induced histories $\history{h}_T$ form a decomposition of $\history{h}$,  i.e., $m \in \history{h} \Rightarrow m \in \history{h}_T$ for some thread~$T$. 
\end{inparaenum}
A data structure~$D$ is locally linearizable wrt~$S$ if every history~$\history{h}$ of~$D$ 
is locally linearizable wrt~$S$.
A history~$\history{h}$ over a set of concurrent objects $O$ is locally linearizable wrt the sequential specifications~$S_q$ for $q \in O$ if each thread-induced history is linearizable over~$O$ and the thread-induced histories form a decomposition of $\history{h}$, i.e., $q.m \in \history{h} \Rightarrow q.m \in \history{h}_T$ for some thread~$T$. \qedd
\end{definition}

Local linearizability is sequentially correct, i.e., a single-threaded (necessarily sequential) history $\history{h}$ is locally linearizable wrt a sequential specification $S$ iff $\history{h} \in S$.
Like linearizability~\cite{Herlihy:AMP08}, 
local linearizability is compositional. 
The complete proof of the following theorem and missing or extended proofs of all following properties can be found in Appendix~\ref{sec:proofs}.

\begin{theorem}[Compositionality]
\label{thm:comp}
A history $\history{h}$ over a set of objects~$O$ with sequential 
specifications~$S_q$ for $q \in O$ is locally linearizable iff 
$\history{h}|q$ is locally linearizable wrt~$S_q$ for 
every~$q \in O$.
\end{theorem}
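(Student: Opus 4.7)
The plan is to reduce the statement to the classical compositionality of linearizability from~\cite{HerlihyW90} by identifying, for every object $q \in O$ and every thread $T$, the thread-induced history of a projection with the projection of a thread-induced history.

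The key technical lemma I would establish first is the identity $(\history{h}|q)_T = \history{h}_T|q$. The in-methods of $\history{h}|q$ for $T$ are insertions performed by $T$ whose name lies in the $q$-block of the alphabet; these coincide with the in-methods of $T$ in $\history{h}$ that survive the projection to $q$. For out-methods, the per-object alphabets $\{q.m \mid m \in \Sigma_q\}$ are pairwise disjoint and, by our global uniqueness assumption, each non-empty value is inserted at most once, so a removal or data-observer on $q$ returns a value inserted by $T$ iff the matching insertion already lives in $\history{h}|q$. Empty removals on $q$ are placed in the out-methods of every thread by both constructions, so they also coincide. Since projections preserve the order of events, the two histories are identical.

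For the ($\Rightarrow$) direction, assume $\history{h}$ is locally linearizable. Fixing a thread $T$, the history $\history{h}_T$ is linearizable over $O$ wrt the $S_q$, so Herlihy-Wing compositionality gives that $\history{h}_T|q$ is linearizable wrt $S_q$ for every $q \in O$; by the identity this is exactly $(\history{h}|q)_T$. The decomposition requirement for $\history{h}|q$ is inherited: if $q.m \in \history{h}|q$ then $q.m \in \history{h}$, hence $q.m \in \history{h}_T$ for some $T$ by the decomposition of $\history{h}$, and therefore $q.m \in (\history{h}|q)_T$ by the identity. The converse is symmetric: from the per-object linearizations of each $(\history{h}|q)_T = \history{h}_T|q$ one recovers a single linearization of $\history{h}_T$ over $O$ by another application of Herlihy-Wing compositionality, and the decomposition of $\history{h}$ follows by the same event-tracking argument on each $\history{h}|q$.

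The substantive step is the commutation identity; once it is in hand, both directions are essentially a re-labelling of the Herlihy-Wing argument. The subtle point to get right is the dual role of empty removals, which are out-methods of every thread regardless of $q$, together with the disjointness of per-object alphabets, which is precisely what preserves value uniqueness under projection and makes the identity hold.
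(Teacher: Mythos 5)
Your proposal is correct and follows essentially the same route as the paper's own proof: establish the commutation identity $(\history{h}|q)_T = \history{h}_T|q$ and then apply the Herlihy--Wing compositionality of linearizability in both directions. Your explicit justification of the identity (via disjoint per-object alphabets, value uniqueness, and the treatment of empty removals) and of the decomposition condition is more detailed than the paper's, but it is the same argument.
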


\begin{proof}[Proof (Sketch)]
The property follows from the compositionality of linearizability and the fact 
that $(\history{h}|q)_T = \history{h}_T|q$ for every thread~$T$ and 
object~$q$. 
\end{proof}

\subparagraph*{The Choices Made.}
Splitting a global history into subhistories and requiring consistency 
for each of them is central to local linearizability.
While this is common in shared-memory consistency
conditions~\cite{Hennessy:Book11,Lipton:TR88,Lipton:patent93,Ahamad:SPAA93,Goodman:TR91,Ahamad:DC95,Heddaya:TR92}, 
our study of local linearizability is a first step in exploring
subhistory-based consistency conditions for concurrent objects. 

We chose thread-induced subhistories since thread-locality reduces contention 
in concurrent objects and is known to lead to high performance as 
confirmed by our experiments.
To assign method calls to thread-induced histories, we took a data-centric 
point of view by (1) associating data values to threads, and (2) gathering all
method calls that insert/return a data value into the subhistory of the
associated thread  (Def.~\ref{def:in-out-m}). 
We associate data values to the thread that inserts them.
One can think of alternative approaches, for example, associate with a 
thread the values that it removed.
In our view, the advantages of our choice are clear:
First, by assigning inserted values to threads, every value in the history is assigned to some thread. 
In contrast, in the alternative approach, it is not clear where to assign the
values that are inserted but not removed.
Second, assigning inserted values to the inserting thread enables eager removals and ensures progress in locally linearizable data structures. 
In the alternative approach, it seems like the semantics of removing 
$\emptyValue$ should be local.

An orthogonal issue is to assign values from shape observations to
threads.
In Appendix~\ref{sec:LL-shape-obs}, we discuss two meaningful approaches and
show how local linearizability can be extended towards 
shape and data observations that appear in insertion operations of sets.

Finally, we have to choose a consistency condition required for each of the 
subhistories.
We chose linearizability as it is the best (strong) consistency condition for concurrent objects.

\section{Local Linearizability vs. Linearizability}
\label{sec:LL-L-comparison}

We now investigate the connection between local linearizability and linearizability.

\begin{proposition}[Lin 1]
\label{prop:lin-loc1}
In general, linearizability does not imply local linearizability.
\end{proposition}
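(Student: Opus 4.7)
My plan is to prove Proposition~\ref{prop:lin-loc1} by exhibiting a concrete counterexample: a sequential specification~$S$ and a history~$\history{h}$ that is linearizable wrt~$S$ but is not locally linearizable wrt~$S$. Since local linearizability requires two conditions---per-thread linearizability of every~$\history{h}_T$ and decomposition $m \in \history{h} \Rightarrow m \in \history{h}_T$ for some thread~$T$---while linearizability requires only the global consistency condition, the counterexample can break either condition~(1) or condition~(2) of Def.~\ref{def:lc}.

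Before searching for the counterexample it is useful to see where one cannot find it. For each of the canonical containers pool, queue, and stack specified by the axioms in Table~\ref{tab:axioms}, linearizability actually does imply local linearizability: given a linearization~$\history{s}$ of~$\history{h}$, the projection $\history{s}\,|\,(\mathrm{I}_T\cup\mathrm{O}_T)$ remains in~$S$ (axioms~(2)--(5) only force the existence and ordering of certain $\rr{x}$ for inserted values, and whenever $\ii{x} \in \mathrm{I}_T$ the corresponding $\rr{x}$ lies in $\mathrm{O}_T$; empty removals belong to every $\mathrm{O}_T$), and this projection preserves the inherited precedence of $\history{h}_T$, so it yields a linearization of every thread-induced history. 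Hence the counterexample must be sought outside this well-behaved class.

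Then I would construct a counterexample using a container whose specification carries information that is lost under the in/out-methods projection. The cleanest route is to pick an operation with global semantics that is not an insertion, not a removal of a specific value, and not a data observation of a specific value---namely, a shape observer such as $\eempty{b}$ or $\size{n}$. Concretely, let~$S$ be a queue augmented with $\eempty{b}$ and let $\history{h}$ be the sequential two-thread history in which $T_1$ performs~$\enq{1}$ and then $T_2$ performs $\eempty{\text{false}}$. The sequence $\enq{1} \cdot \eempty{\text{false}}$ is a valid specification sequence that preserves precedence, so $\history{h}$ is linearizable. However, by Def.~\ref{def:in-out-m}, $\eempty{\text{false}}$ is a shape observation and lies in neither $\mathrm{I}_T$ nor $\mathrm{O}_T$ for any thread~$T$, so it appears in $\history{h}$ but in none of the thread-induced histories. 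Condition~(2) of Def.~\ref{def:lc} therefore fails, and $\history{h}$ is not locally linearizable.

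The main obstacle is ensuring the counterexample fits cleanly within the main-text framework, which deliberately treats shape observations only in Appendix~\ref{sec:LL-shape-obs}. An alternative route is to use a container with multiple insert/remove methods, such as a deque, and exhibit a history whose global deque axioms are satisfied but whose projection to some thread's in- and out-methods violates the multi-method ordering constraints; this route requires more delicate combinatorial verification of both global linearizability and per-thread failure, but stays strictly within the $\Ins/\Rem$ classification used by the main definition.
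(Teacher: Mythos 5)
Your opening observation is right and matches the paper: by Propositions~\ref{prop:lin-loc2} and~\ref{prop:DS-are-closed-under-projection}, no counterexample lives among pool, queue, or stack, so one must either leave the plain $\Ins\cup\Rem$ world or leave the class of specifications closed under data-projection. However, your actual counterexample has a genuine gap. You break condition~(2) of Definition~\ref{def:lc} by inserting a shape observer $\eempty{\text{false}}$ that belongs to no $\mathrm{I}_T\cup\mathrm{O}_T$. But the main-text definition is explicitly presented ``without shape observations,'' with their treatment deferred to Appendix~\ref{sec:LL-shape-obs}; under either treatment given there the observer \emph{is} assigned to a thread-induced history (e.g., Definition~\ref{def:lllso} places it in $\mathrm{I}_T$ of the performing thread), so the decomposition condition does not fail in the way you claim. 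Your example is thus an artifact of applying Definition~\ref{def:lc} outside its intended scope, not a witness that linearizability fails to imply local linearizability --- and you acknowledge exactly this as ``the main obstacle'' without resolving it. Your fallback via deques is not carried out, and it is unlikely to succeed: a deque specification is closed under data-projection in the sense of Definition~\ref{def:data-projection-closedness} (deleting all operations on a set of values preserves validity), so Proposition~\ref{prop:lin-loc2} applies and linearizability \emph{does} imply local linearizability for deques.

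The paper's proof takes the other exit you identified but did not pursue: it stays entirely within the alphabet $\{\enq{x},\deq{x}\}$ and constructs a specification $S_\crazy$ that is deliberately \emph{not} closed under data-projection --- a queue, except that when the first two values are enqueued with no dequeue in between they must be dequeued in reversed order. The concurrent history in which $T_1$ performs $\enq{1}$, $\enq{2}$, $\deq{3}$, $\deq{2}$ and $T_2$ performs $\enq{3}$ and $\deq{1}$ (suitably interleaved) is linearizable wrt $S_\crazy$ via the witness $\enq{1}\enq{3}\enq{2}\deq{3}\deq{1}\deq{2}$, yet $T_1$'s induced history $\enq{1}\enq{2}\deq{1}\deq{2}$ is not in $S_\crazy$, so condition~(1) of Definition~\ref{def:lc} fails. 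To repair your proof, replace the shape-observer example with a specification of this kind: the failure must come from condition~(1), via loss of closure under data-projection, not from condition~(2).
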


\begin{proof}
We provide an example of a data structure that is linearizable but not locally linearizable.
Consider a sequential specification~$S_\crazy$ which behaves like a 
queue except when the first two insertions were 
performed without a removal in between---then the first two elements 
are removed out of order.
Formally, $\history{s} \in S_\crazy$ iff  
\begin{inparaenum}[\bf (1)]
\item $\history{s}\:= \history{s_1}\enq{a}\enq{b}\history{s_2}\deq{b}\history{s_3}\deq{a}\history{s_4}$ where 
$\history{s_1}\enq{a}\enq{b}\history{s_2}\deq{a}\history{s_3}\deq{b}\history{s_4} \in S_Q$ and $\history{s_1} \in \{\deq{e} \mid e \in \Emp\}^*$ for some $a, b \in V$, or
\item $\history{s} \in S_Q$ and $\history{s} \neq \history{s_1}\enq{a}\enq{b}\history{s_2}$ for $\history{s_1} \in \{\deq{e} \mid e \in \Emp\}^*$ and $a, b \in V$.
\end{inparaenum}
The example below is linearizable wrt~$S_\crazy$.
However, $T_1$'s induced history
$\enq{1}\enq{2}\deq{1}\deq{2}$ is not.\\[7pt]
\noindent
\begin{minipage}[t]{\textwidth}
\centering
\scalebox{0.75}{
\begin{tikzpicture}

	\node (T1) at (0.25,.75) {$T_1$};
	\node (T2) at (0.25,0) {$T_2$};

	\draw[dotted,->] (0.75, .75) -- (8.35, .75);
	\draw[dotted,->] (0.75, 0) -- (8.35, 0);
	\draw[|-|] (.85, .75) -- node[above] {$\enq{1}$} (1.85, .75);
	\draw[|-|] (3.35, .75) -- node[above] {$\enq{2}$} (4.35, .75);
	\draw[|-|] (4.6, .75) -- node[above] {$\deq{3}$} (5.6, .75);
	\draw[|-|] (7.1, .75) -- node[above] {$\deq{2}$} (8.1, .75);
	
	\draw[|-|] (2.1, 0) -- node[above] {$\enq{3}$} (3.1, 0);
	\draw[|-|] (5.85, 0) -- node[above] {$\deq{1}$} (6.85, 0);

\end{tikzpicture}
}
\end{minipage}\\[-1em]
\end{proof}

The following condition on a data structure specification is sufficient for
linearizability to imply local linearizability and is satisfied, e.g., by 
pool, queue, and stack. 

\begin{definition}[Closure under Data-Projection]
\label{def:data-projection-closedness}
A seq.~specification~$S$ over $\Sigma$ is \emph{closed under data-projection}\footnote{The same notion has been used in~\cite{Bouajjani:ICALP15} under the name \emph{closure under projection}.}
iff for all $\history{s}\in S$ and all $V' \subseteq V$, $\history{s}|\{ m(x) \in \Sigma \mid x \in V' \cup \Emp\}\in S$.\qedd
\end{definition}

\noindent
For 
$\history{s} = \enq{1}\enq{3}\enq{2}\deq{3}\deq{1}\deq{2}$ we have $\history{s}\in S_\crazy$, 
but $\history{s}|\{ \enq{x}, \deq{x} \mid x \in \{1,2\} \cup \Emp\}\notin S_\crazy$, i.e., 
$S_\crazy$ is not closed under data-projection.

\begin{proposition}[Lin 2]
	\label{prop:lin-loc2}
Linearizability implies local linearizability for sequential specifications that
are closed under data-projection.
\end{proposition}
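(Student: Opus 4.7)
The plan is to manufacture, from a single linearization of $\history{h}$, a linearization of every thread-induced history $\history{h}_T$, using closure under data-projection as the one essential tool. Fix a history $\history{h}$ that is linearizable wrt $S$ and choose a linearization $\history{s} \in S$ together with a completion $\history{h}_c \in \Complete{\history{h}}$ such that $\history{s}$ is a precedence-preserving permutation of $\history{h}_c$. I need to verify the two conditions of Def.~\ref{def:lc}.

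For the decomposition condition, I would argue that every method call $m \in \history{h}$ sits in $\mathrm{I}_T \cup \mathrm{O}_T$ for some thread $T$: insertions are placed in $\mathrm{I}_T$ for their executing thread $T$; empty removals lie in $\mathrm{O}_T$ for every $T$; and a removal or data observation returning $a \in V$ must be matched by some $\ins{a}$ elsewhere in $\history{h}$, because $\history{s} \in S$ together with (\ref{eq:cont-obs}) forces the pool axiom (2) to hold, so this method call sits in $\mathrm{O}_{T'}$ for the (unique, by our version-number assumption) inserting thread $T'$. For each thread $T$, set $V_T = \{a \mid \ins{a} \in \mathrm{I}_T\}$ and define $\history{s}_T := \history{s} \,|\, \{m(x) \in \Sigma \mid x \in V_T \cup \Emp\}$. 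Closure under data-projection (Def.~\ref{def:data-projection-closedness}) gives $\history{s}_T \in S$ directly. Because each value is inserted at most once, the methods occurring in $\history{s}_T$ are exactly the insertions of values in $V_T$ (all done by $T$, hence in $\mathrm{I}_T$), the removals/data observations returning such values, and the empty removals---that is, $\mathrm{I}_T \cup \mathrm{O}_T$. Hence $\history{s}_T = \history{s}\,|\,(\mathrm{I}_T \cup \mathrm{O}_T)$, which is a permutation of $\history{h}_c \,|\, (\mathrm{I}_T \cup \mathrm{O}_T)$, a completion of $\history{h}_T$. Precedence preservation is inherited: if $m \precOrder{\history{h}_T} n$, then $m \precOrder{\history{h}_c} n$, so $m \precOrder{\history{s}} n$, and since both calls lie in $\mathrm{I}_T \cup \mathrm{O}_T$, they are ordered the same way in $\history{s}_T$. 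Thus $\history{s}_T$ witnesses the linearizability of $\history{h}_T$ wrt $S$.

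I expect the main obstacle to be the bookkeeping that identifies $\history{s}\,|\,\{m(x) \mid x \in V_T \cup \Emp\}$ with $\history{s}\,|\,(\mathrm{I}_T \cup \mathrm{O}_T)$; this step silently uses the uniqueness-of-values assumption together with the container axiom (2) to conclude that no ``stray'' method on a $V_T$-value escapes the in/out partitioning. A secondary nuisance is handling pending invocations when forming completions, but this is routine: the completion of $\history{h}_T$ needed to certify linearizability is just the restriction of $\history{h}_c$ to $\mathrm{I}_T \cup \mathrm{O}_T$, which inherits well-formedness and all the required properties from $\history{h}_c$. Everything else---the existence of $\history{s}$, its membership in $S$, precedence preservation---is imported unchanged from the assumed linearizability of $\history{h}$.
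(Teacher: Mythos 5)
Your proposal is correct and follows essentially the same route as the paper's proof: take a single linearization $\history{s}$ of $\history{h}$, restrict it to each thread's values (equivalently, to $\mathrm{I}_T \cup \mathrm{O}_T$), invoke closure under data-projection together with Equation~(\ref{eq:cont-obs}) to keep the restriction inside $S$, and inherit precedence preservation. You are somewhat more explicit than the paper about identifying the data-projection with $\history{s}|(\mathrm{I}_T \cup \mathrm{O}_T)$, about completions, and about the decomposition condition of Definition~\ref{def:lc}, but these are refinements of the same argument rather than a different one.
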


\begin{proof}[Proof (Sketch)]
The property follows from Definition~\ref{def:data-projection-closedness} and Equation~(\ref{eq:cont-obs}).
\end{proof}

\noindent
There exist corner cases where local linearizability coincides with linearizability, e.g., for $S = \emptyset$ or $S = \Sigma^*$, or for single-producer/multiple-consumer histories.

We now turn our attention to pool, queue, and stack.

\begin{proposition}
\label{prop:DS-are-closed-under-projection}
The seq. specifications $S_P$, $S_Q$, and $S_S$ are closed under data-projection.
\end{proposition}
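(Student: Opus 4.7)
The plan is to fix $\history{s}\in S$ and $V'\subseteq V$, set $M_{V'}=\{m(x)\in\Sigma\mid x\in V'\cup\Emp\}$, and write $\history{s}'=\history{s}|M_{V'}$. The goal is to verify that $\history{s}'$ satisfies each of the axioms that define $S_P$, $S_Q$, or $S_S$. The key preliminary observation, which will be used throughout, is that projection preserves the appears-before order on surviving symbols: if $m,n\in\history{s}'$ then $m\abOrder{\history{s}'}n$ iff $m\abOrder{\history{s}}n$ (this is immediate from Def.~\ref{def:proj-seq}).

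First I would handle $S_P$ by checking axioms (1)--(3). Axiom~(1) is automatic, since restricting a sequence in which every method call appears at most once cannot introduce duplicates. For axiom~(2), suppose $\rr{x}\in\history{s}'$; then $x\in V'\cup\Emp$ and $\rr{x}\in\history{s}$, so axiom~(2) for $\history{s}$ gives $\ii{x}\in\history{s}$ with $\ii{x}\abOrder{\history{s}}\rr{x}$, and since $x\in V'\cup\Emp$ we have $\ii{x}\in\history{s}'$ with the same order. For axiom~(3), suppose $\ii{x}\abOrder{\history{s}'}\rr{e}$ with $e\in\Emp$; then $x\in V'$ (and $e\in\Emp$ is trivially kept), the same order holds in $\history{s}$, axiom~(3) gives $\rr{x}\abOrder{\history{s}}\rr{e}$, and $\rr{x}$ survives the projection.

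The queue and stack cases then add the verification of axioms~(4) and~(5), respectively. The crucial structural observation is that in each of these axioms, every value named in the conclusion also appears in the hypothesis; hence if the hypothesis holds in $\history{s}'$ then the values involved lie in $V'\cup\Emp$, so the hypothesis also holds in $\history{s}$, and the methods produced by the original axiom mention only values in $V'\cup\Emp$ and therefore survive the projection with their order relationships intact. Concretely, for axiom~(4) in the queue case, if $\ii{x}\abOrder{\history{s}'}\ii{y}$ and $\rr{y}\in\history{s}'$ with $x,y\in V'$, applying axiom~(4) to $\history{s}$ gives $\rr{x}\in\history{s}$ and $\rr{x}\abOrder{\history{s}}\rr{y}$, and both remain in $\history{s}'$. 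The argument for axiom~(5) in the stack case is completely analogous.

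I do not expect a genuine obstacle here: the property is really a reflection of the fact that every axiom in Table~\ref{tab:axioms} is a \emph{value-closed} implication, in the sense that the conclusion quantifies over no new values. The entire proof reduces to this syntactic observation plus the order-preservation lemma for projections. I would therefore present the argument compactly as above rather than instantiating each case in full detail.
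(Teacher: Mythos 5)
Your proposal is correct and follows essentially the same route as the paper's proof: project, observe that the appears-before order is preserved on surviving symbols, and check each axiom by noting that its hypothesis transfers to $\history{s}$ and its conclusion mentions only values already named in the hypothesis, so the witnessing method calls survive the projection. The explicit ``value-closed implication'' observation is a nice way of packaging what the paper does axiom by axiom.
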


\begin{proof}[Proof (Sketch)]
Let $\history{s} \in S_P$, $V' \subseteq V$, and let $\history{s'} = \history{s}|\left(\{ \ins{x}, \rem{x} \mid x \in V' \cup\Emp\}\right)$.
Then, it suffices to check that all axioms for pool (Definition~\ref{example:pool} and Table~\ref{tab:axioms}) hold for $\history{s'}$.
\end{proof}

\begin{theorem}[Pool \& Queue \& Stack, Lin]
\label{thm:queues-and-stacks}
For pool, queue, and stack, local linearizability is (strictly) weaker than linearizability. 
\end{theorem}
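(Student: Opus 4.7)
The plan is to decompose the statement into two parts: (i) linearizability implies local linearizability for each of pool, queue, and stack, and (ii) the reverse implication fails in general, establishing strictness.

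For (i), I would invoke the two preceding results directly. Proposition~\ref{prop:DS-are-closed-under-projection} states that $S_P$, $S_Q$, and $S_S$ are closed under data-projection, and Proposition~\ref{prop:lin-loc2} states that for any specification closed under data-projection, linearizability implies local linearizability. Composing these two statements handles all three structures simultaneously in a single step, yielding the ``weaker'' direction.

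For (ii), I would exhibit concrete histories that are locally linearizable but not linearizable. The queue case is witnessed by the history from Figure~\ref{fig:intro-ll-vs-lin}: $T_1$ executes $\enq{1}$ then $\deq{2}$ and $T_2$ executes $\enq{2}$ then $\deq{1}$, with global sequential ordering $\enq{1}\enq{2}\deq{2}\deq{1}$. The only candidate linearization is the history itself, which violates axiom~(4) because $\enq{1}$ appears before $\enq{2}$ while $\deq{1}$ does not appear before $\deq{2}$. However, $T_1$'s induced history is $\enq{1}\deq{1}$ (since $\deq{1}$, performed by $T_2$, returns the value that $T_1$ inserted) and $T_2$'s induced history is $\enq{2}\deq{2}$; both lie trivially in $S_Q$. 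The stack case is handled by the symmetric witness: $T_1$ executes $\push{1}$ then $\pop{1}$ and $T_2$ executes $\push{2}$ then $\pop{2}$, with global sequential order $\push{1}\push{2}\pop{1}\pop{2}$, which violates axiom~(5), while each induced history is of the form $\push{x}\pop{x}$ and hence lies in $S_S$.

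The main obstacle I anticipate is producing a separating example for pool: since $S_P$ has no order axiom, the queue and stack witnesses above are actually pool-linearizable, so a separating history for $S_P$ must exploit the empty-removal axiom~(3). One approach is to construct an execution in which, for each thread~$T$, the induced history admits a pool-valid linearization by reordering an empty remove before $T$'s own insertion (possible when the two are concurrent in precedence), while no single global placement of the empty remove is compatible with all precedence constraints of the full history. If no such example is available, local linearizability and linearizability coincide for pool; even then, direction~(i) establishes the ``weaker'' claim for all three structures and the queue and stack witnesses establish strictness, which suffices for the theorem as stated.
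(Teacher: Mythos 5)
Your direction (i) is exactly the paper's argument: compose Proposition~\ref{prop:DS-are-closed-under-projection} with Proposition~\ref{prop:lin-loc2}. Your queue and stack witnesses for direction (ii) are also correct (they are essentially the histories the paper uses elsewhere to separate sequential consistency from local linearizability).

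The genuine gap is the pool case. The theorem asserts strictness for pool as well, and the paper depends on this later (Proposition~\ref{prop:LL-pool} exists precisely because local linearizability wrt pool does \emph{not} imply linearizability wrt pool), so your fallback --- that if no pool example exists the queue and stack witnesses ``suffice for the theorem as stated'' --- is not tenable: if the two notions coincided for pool, the theorem would simply be false for pool. A separating example does exist and is the one the paper's proof actually cites, namely the history of Figure~\ref{fig:ll-notsc}: $T_1$ performs $\ii{1}$, then $\rr{\emptyValue}$, then $\rr{2}$, while $T_2$ performs $\ii{2}$ and $\rr{1}$ concurrently with the $\rr{\emptyValue}$. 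It is not pool-linearizable: since $\ii{1}$ precedes $\rr{\emptyValue}$, axiom (3) of Table~\ref{tab:axioms} forces $\rr{1}$ before $\rr{\emptyValue}$ in any linearization; since $\ii{2}$ precedes $\rr{1}$ in real time, $\ii{2}$ must then also precede $\rr{\emptyValue}$, so axiom (3) forces $\rr{2}$ before $\rr{\emptyValue}$, contradicting $\rr{\emptyValue} \precOrder{\history{h}} \rr{2}$. Yet each thread-induced history (each of which contains $\rr{\emptyValue}$, by the definition of the out-methods) linearizes to a valid pool sequence, e.g.\ $\ii{1}\,\rr{1}\,\rr{\emptyValue}$ for $T_1$ and $\rr{\emptyValue}\,\ii{2}\,\rr{2}$ for $T_2$. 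Note that this single history settles all three cases at once: since $S_Q$ and $S_S$ are contained in $S_P$ after renaming, non-linearizability wrt pool implies non-linearizability wrt queue and stack, while local linearizability wrt queue and stack is checked just as easily on the two induced histories. That economy is what the paper's proof achieves and what your per-structure decomposition forgoes.
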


\begin{proof}
Linearizability implies local linearizability for pool, queue, and stack as a consequence of
Proposition~\ref{prop:lin-loc2} and Proposition~\ref{prop:DS-are-closed-under-projection}.
The history in Figure~\ref{fig:ll-notsc} is locally linearizable but not linearizable wrt pool, 
queue and stack (after suitable renaming of method calls).
\end{proof}

Although local linearizability wrt a pool does not imply linearizability wrt a pool (Theorem~\ref{thm:queues-and-stacks}), it still guarantees several properties that ensure sane behavior as stated next.

\begin{proposition}[LocLin Pool]
\label{prop:LL-pool}
Let $\history{h}$ be a locally linearizable history wrt a pool. Then:\\[2pt]
\begin{tabularx}{\textwidth}{@{}l@{\hspace{5pt}}X@{}}
\normalsize 1. & \normalsize No value is duplicated, i.e., every remove method appears in $\history{h}$ at most once.\\[2pt]
\normalsize 2. & \normalsize No out-of-thin-air values, i.e., $\forall x \in V. \,\, \rem{x} \in \history{h} \Rightarrow \ins{x} \in \history{h} \,\,\,\wedge\,\,\, \rem{x} {\not<}_{\history{h}} \ins{x}.$\\[2pt]
\normalsize 3. & \normalsize No value is lost, i.e., $\forall x \in V. \,\,\forall e \in \Emp.\,\,\rem{e} \precOrder{\history{h}} \rem{x} \Rightarrow \ins{x} \not<_{\history{h}} \rem{e}$ and\\[2pt]
& \normalsize $\forall x \in V. \,\, \forall e \in \Emp. \,\, \ins{x} \precOrder{\history{h}} \rem{e}
\Rightarrow \rem{x} \in \history{h} \,\,\,\wedge\,\,\, \rem{e} {\not<}_{\history{h}} \rem{x}$.\\
\end{tabularx}
\end{proposition}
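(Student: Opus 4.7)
The overall plan is to lift each pool axiom from linearizations of the thread-induced histories back to the global history, exploiting the fact that removals of $\emptyValue$ belong to the out-methods of every thread, so they act as synchronization points.

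First I would fix a locally linearizable history $\history{h}$, and for each thread $T$ fix a linearization $\history{s}_T \in S_P$ of the thread-induced history $\history{h}_T$ that exists by Definition~\ref{def:lc}. The basic lemma I will use repeatedly is that precedence is inherited downward: if $m, n \in \history{h}_T$ and $m <_{\history{h}} n$, then $m <_{\history{h}_T} n$, and consequently $m \abOrder{\history{s}_T} n$ since the linearization preserves precedence. Combined with the decomposition clause of local linearizability, this lets me pick the right $T$ for any pair of methods I need to reason about.

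For property~1, any $\rem{x}$ (with $x \in V$ or $x \in \Emp$) lies in $\history{h}_T$ for some $T$ by decomposition; since methods are unique under our versioning assumption and $\history{s}_T$ satisfies axiom~(1), $\rem{x}$ appears exactly once in $\history{h}_T$, and since only one $T$ can have $\ins{x} \in \mathrm{I}_T$ for $x \in V$ (likewise only $\mathrm{O}_T$'s see non-empty removals), $\rem{x}$ appears exactly once in $\history{h}$. For property~2, if $\rem{x} \in \history{h}$ with $x \in V$, decomposition forces $\rem{x} \in \mathrm{O}_T$ for some $T$, which by Definition~\ref{def:in-out-m} entails $\ins{x} \in \mathrm{I}_T \subseteq \history{h}_T$, so $\ins{x} \in \history{h}$; if additionally $\rem{x} <_{\history{h}} \ins{x}$ then $\rem{x} \abOrder{\history{s}_T} \ins{x}$ by inheritance, contradicting axiom~(2) for $\history{s}_T \in S_P$.

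For property~3, I would prove the second implication first and derive the first as its contrapositive. Suppose $\ins{x} <_{\history{h}} \rem{e}$ with $x \in V$ and $e \in \Emp$. Let $T$ be the (unique) thread with $\ins{x} \in \mathrm{I}_T$. The key observation is that $\rem{e}$ is in $\mathrm{O}_T$ for every $T$ by Definition~\ref{def:in-out-m}, so both $\ins{x}$ and $\rem{e}$ lie in $\history{h}_T$ with $\ins{x} <_{\history{h}_T} \rem{e}$. Hence $\ins{x} \abOrder{\history{s}_T} \rem{e}$, and axiom~(3) applied to $\history{s}_T \in S_P$ gives $\rem{x} \abOrder{\history{s}_T} \rem{e}$; in particular $\rem{x} \in \history{h}_T \subseteq \history{h}$. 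If in addition $\rem{e} <_{\history{h}} \rem{x}$ held, the same inheritance argument would yield $\rem{e} \abOrder{\history{s}_T} \rem{x}$, contradicting what we just derived. The first implication of property~3 now follows: if $\rem{e} <_{\history{h}} \rem{x}$ then $\ins{x} <_{\history{h}} \rem{e}$ would force $\rem{e} \not<_{\history{h}} \rem{x}$ by what we just showed, a contradiction, hence $\ins{x} \not<_{\history{h}} \rem{e}$.

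The only subtle step is making explicit that $\rem{e}$ (for $e \in \Emp$) belongs to the out-methods of every thread and is therefore globally visible in each $\history{h}_T$; this is precisely why assigning empty removals to all threads in Definition~\ref{def:in-out-m} is crucial, and it is what turns the local reasoning above into genuinely global guarantees about $\history{h}$. The remaining obstacle is simply careful bookkeeping between $<_{\history{h}}$, $<_{\history{h}_T}$, and $\abOrder{\history{s}_T}$, which the inheritance lemma handles uniformly.
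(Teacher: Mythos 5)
Your proof is correct and follows essentially the same route as the paper's: each property is pushed down to the appropriate thread-induced history (using that $\rem{e}$ for $e\in\Emp$ lies in every $\mathrm{O}_T$, and that $\rem{x}$ for $x\in V$ lies in the induced history of the unique thread with $\ins{x}\in\mathrm{I}_T$) and then read off from the pool axioms satisfied by its linearization. The extra care you take with the inheritance of the precedence order under projection, and with deriving the first clause of property~3 as the contrapositive of the second, only makes explicit what the paper's proof leaves implicit.
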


\begin{proof}
By direct unfolding of the definitions.
\end{proof}

Note that if a history $\history{h}$ is linearizable wrt a pool, then all of the three stated properties hold, as a consequence of linearizability and the definition of $S_P$.

\section{Local Linearizability vs. Other Relaxed Consistency Conditions}
\label{sec:other-consistency-conditions}

We compare local linearizability with other classical consistency conditions to better understand its guarantees and implications.

\begin{figure}[tb]
\centering
\begin{minipage}{.475\textwidth}
	\centering
	\scalebox{0.75}{
		\begin{tikzpicture}
										
			\node (T1) at (0.25,.75) {$T_1$};
			\node (T2) at (0.25,0) {$T_2$};
									
			\draw[dotted,->] (0.75, .75) -- (5.5, .75);
			\draw[dotted,->] (0.75, 0) -- (5.5, 0);
			\draw[|-|] (1,.75) -- node[above] {$\ii{1}$} (1.75,.75);
			\draw[|-|] (2,.75) -- node[above] {$\rr{\emptyValue}$} (4.25,.75);
      \draw[|-|] (2.25,0) -- node[above] {$\ii{2}$} (3,0);
			\draw[|-|] (3.25,0) -- node[above] {$\rr{1}$} (4,0);
			\draw[|-|] (4.5,.75) -- node[above] {$\rr{2}$} (5.25,.75);
										
		\end{tikzpicture}
	}
	\caption{LL, not SC (Pool, Queue, Stack)}
	\label{fig:ll-notsc}
\end{minipage}
\hfill
\begin{minipage}{.475\textwidth}
	\centering
	\scalebox{0.75}{
		\begin{tikzpicture}
		
			\node (T1) at (0.25,.75) {$T_1$};
			\node (T2) at (0.25,0) {$T_2$};
		
			\draw[dotted,->] (0.75, .75) -- (5, .75);
			\draw[dotted,->] (0.75, 0) -- (5, 0);
			\draw[|-|] (1,.75) -- node[above] {$\ii{1}$} (1.75,.75);
			\draw[|-|] (4,.75) -- node[above] {$\rr{1}$} (4.75,.75);
			\draw[|-|] (2,0) -- node[above] {$\rr{\emptyValue}$} (3.75,0);
			
		\end{tikzpicture}
	}
	\caption{SC, not LL (Pool, Queue, Stack)}
	\label{fig:sc-notll}	
\end{minipage}
\end{figure}

\subparagraph*{Sequential Consistency (SC).}
\label{sec:LL-SC-comparison}
A history~$\history{h}$ is \emph{sequentially consistent}~\cite{Herlihy:AMP08,Lamport:TC79} wrt a sequential specification $S$, if there exists a sequential history~$\history{s} \in S$ and a completion $\history{h}_c \in \Complete{\history{h}}$ such that 
\begin{inparaenum}[(1)]
	\item $\history{s}$ is a permutation of $\history{h}_c$, 
	      and
	\item $\history{s}$ preserves each thread's program order, i.e., if $m \precOrder{\history{h}}^T n$, for some thread $T$, then $m \precOrder{\history{s}} n$.
\end{inparaenum}
We refer to $\history{s}$ as a \emph{sequential witness} of $\history{h}$.
A data structure~$D$ is sequentially consistent wrt~$S$
if every history~$\history{h}$ of $D$ is sequentially consistent wrt~$S$.

Sequential consistency is a useful consistency condition for shared memory but it is not really suitable for data structures as it allows for behavior that excludes any coordination between threads~\cite{Sezgin:CORR15}: an implementation of a data structure in which every thread uses a dedicated copy of a sequential data structure without any synchronization is sequentially consistent. 
A sequentially consistent queue might always return \emptyValue in one (consumer) thread as the point in 
time of the operation can be moved, e.g., see Figure~\ref{fig:sc-notll}.
In a producer-consumer scenario such a queue might end up
with some threads not doing any work.

\begin{theorem}[Pool, Queue \& Stack, SC]
\label{thm:sc:queues-and-stacks}
For pool, queue, and stack, local linearizability is incomparable to sequential consistency.\qed
\end{theorem}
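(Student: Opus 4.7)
The plan is to prove incomparability by exhibiting one history for each direction, reusing Figures~\ref{fig:ll-notsc} and~\ref{fig:sc-notll} (after renaming $\ii{}, \rr{}$ to $\ins{}/\rem{}$, $\enq{}/\deq{}$, or $\push{}/\pop{}$ as needed). Since $S_Q$ and $S_S$ extend $S_P$ only by additional ordering axioms, the pool argument lifts uniformly: for the ``not SC'' side it suffices to rule out any witness in $S_P$; for the ``locally linearizable'' side the linearizations I will produce contain at most one insertion per thread-induced history, so the extra queue/stack axioms are trivially satisfied.

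For Figure~\ref{fig:ll-notsc}, I would first compute the thread-induced histories: $T_1$'s is over $\{\ins{1}, \rem{\emptyValue}, \rem{1}\}$ and $T_2$'s is over $\{\ins{2}, \rem{\emptyValue}, \rem{2}\}$, with the single $\rem{\emptyValue}$ appearing in both by Definition~\ref{def:in-out-m}. A precedence-preserving linearization for $T_1$ is $\ins{1}\,\rem{1}\,\rem{\emptyValue}$; for $T_2$ it is $\rem{\emptyValue}\,\ins{2}\,\rem{2}$. To rule out sequential consistency, assume a witness $\history{s}$ and chain: $T_1$'s program order gives $\ins{1} \precOrder{\history{s}} \rem{\emptyValue}$, so pool axiom~(3) from Table~\ref{tab:axioms} forces $\rem{1} \precOrder{\history{s}} \rem{\emptyValue}$; $T_2$'s program order $\ins{2} \precOrder{\history{s}} \rem{1}$ then gives $\ins{2} \precOrder{\history{s}} \rem{\emptyValue}$, and axiom~(3) once more forces $\rem{2} \precOrder{\history{s}} \rem{\emptyValue}$, contradicting $T_1$'s program order $\rem{\emptyValue} \precOrder{\history{s}} \rem{2}$.

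For Figure~\ref{fig:sc-notll}, I would exhibit $\rem{\emptyValue}\,\ins{1}\,\rem{1}$ as a sequential witness; it lies in $S_P$, $S_Q$, and $S_S$ and preserves both threads' program orders trivially. To block local linearizability, note that $T_1$'s thread-induced history consists of $\ins{1}$, $\rem{\emptyValue}$, and $\rem{1}$, totally ordered by precedence as $\ins{1} \precOrder{\history{h}} \rem{\emptyValue} \precOrder{\history{h}} \rem{1}$; the only order-preserving candidate linearization is $\ins{1}\,\rem{\emptyValue}\,\rem{1}$, which violates pool axiom~(3) and so lies in none of the three specifications.

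The only obstacle worth flagging is ensuring that one pair of histories handles all three data structures at once. This comes essentially for free: both arguments invoke only the pool axioms, which are shared across pool, queue, and stack, and the linearizations produced involve just a single insertion per induced history, so the queue-order and stack-order axioms are vacuously met.
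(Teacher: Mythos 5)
Your proposal is correct and takes essentially the same approach as the paper: both directions are settled by the counterexamples in Figures~\ref{fig:ll-notsc} and~\ref{fig:sc-notll}, exactly as the main text asserts. The paper's appendix additionally exhibits dedicated queue and stack histories for each direction, but your observation that both arguments invoke only the pool axioms (with at most one insertion per thread-induced history, so axioms~(4) and~(5) are vacuous, and with $S_Q, S_S \subseteq S_P$ ruling out witnesses uniformly) is a sound justification that the single pair of histories covers all three data structures.
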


\noindent 
Figures~\ref{fig:ll-notsc} and~\ref{fig:sc-notll} give example histories that 
show the statement of Theorem~\ref{thm:sc:queues-and-stacks}.
In contrast to local linearizability, sequential consistency is not
compositional~\cite{Herlihy:AMP08}.

\subparagraph*{(Quantitative) Quiescent Consistency (QC \& QQC).}
\label{sec:quiesc-cons}
Like linearizability and sequential consistency, quiescent
consistency~\cite{Derrick:FM14,Herlihy:AMP08} also requires the existence of a 
sequential history, a \emph{quiescent witness}, that satisfies the sequential specification.
All three consistency conditions impose an order on the method calls of a 
concurrent history that a witness has to preserve.
Quiescent consistency uses the concept of \emph{quiescent states} to relax the requirement of preserving the precedence order imposed by linearizability.
A quiescent state is a point in a history at which there are no pending invocation events (all invoked method calls have already responded). 
In a quiescent witness, a method call~$m$ has to appear before a method 
call~$n$ if and only if there is a quiescent state between $m$ and $n$.
Method calls between two consecutive quiescent states can be ordered arbitrarily.
\emph{Quantitative quiescent consistency}~\cite{Jagadeesan:ICALP14} refines quiescent consistency by bounding the number of reorderings of operations 
between two quiescent states based on the concurrent behavior between these
two states.

The next result about quiescent consistency for pool is needed to establish the connection between quiescent consistency and local linearizability.

\begin{proposition}
\label{prop:pool-qc}
A pool history~$\history{h}$ satisfying 1.-3.~of Prop.~\ref{prop:LL-pool} is quiescently consistent. \qed
\end{proposition}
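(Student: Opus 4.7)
The plan is to construct a quiescent witness $\history{s}$ by a careful per-interval reordering of $\history{h}$. First, I would complete $\history{h}$ by appending response events to pending insertions and discarding pending removals (which cannot have matching insertions, by item~2), and partition the completed history into maximal intervals $H_1, \ldots, H_k$ delimited by successive quiescent states. Since quiescent states admit no pending operations, every method call lies entirely within a unique interval $H_i$; combined with item~2 of Proposition~\ref{prop:LL-pool}, this ensures that for every $\rem{x}$ in $H_i$ the matching $\ins{x}$ lies in some $H_j$ with $j \leq i$.

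Within each $H_i$ I would reorder the methods to form $H_i'$ in the following order: \textbf{(a)} every $\rem{x}$ whose matching $\ins{x}$ lies in an earlier interval; \textbf{(b)} every $\rem{e}$ with $e \in \Emp$; \textbf{(c)} every $\ins{x}$ whose matching $\rem{x}$ lies in an earlier interval; \textbf{(d)} the pairs $(\ins{x}, \rem{x})$ both inside $H_i$, each $\ins{x}$ immediately followed by its $\rem{x}$; and \textbf{(e)} every remaining $\ins{x}$, namely those whose $\rem{x}$ is absent from $\history{h}$ or lies in a later interval. The witness is $\history{s} = H_1' H_2' \cdots H_k'$. Preservation of the quiescent ordering is then automatic: two calls separated by a quiescent state lie in distinct intervals, so the concatenation order inherits the required precedence.

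It remains to check that $\history{s} \in S_P$. Axiom~(1) of Table~\ref{tab:axioms} follows from item~1 of Proposition~\ref{prop:LL-pool}, and axiom~(2) is immediate from the construction, since $\ins{x}$ either sits in a strictly earlier interval than $\rem{x}$ or appears right before $\rem{x}$ in group~(d). The main obstacle is axiom~(3): whenever $\ins{x}$ precedes a $\rem{e}$ in $\history{s}$, the matching $\rem{x}$ must exist and also precede $\rem{e}$. I would dispatch this by case analysis on the intervals containing $\ins{x}$ and $\rem{e}$, using item~3 of Proposition~\ref{prop:LL-pool}. The critical case is $\ins{x} \in H_j$ with $j < i$ and $\rem{e} \in H_i$: since $\ins{x} \precOrder{\history{h}} \rem{e}$, the second clause of item~3 supplies a matching $\rem{x}$ in some $H_l$ with $l \leq i$, so $\rem{x}$ sits either in a strictly earlier interval or in group~(a) of $H_i'$, always preceding the $\rem{e}$ placed in group~(b). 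The first clause of item~3 is precisely what justifies relegating the group-(e) insertions to the end of $H_i'$, thereby preventing any $\rem{e}$ from being stranded without its required $\rem{x}$.
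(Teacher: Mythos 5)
Your proof is correct and follows essentially the same route as the paper's: decompose the history at quiescent states, reorder each inter-quiescent interval into blocks (cross-interval removals, \rem{\emptyValue} calls, locally matched insert/remove pairs, leftover insertions), concatenate, and verify the pool axioms from properties 1.--3. The only differences are cosmetic --- the paper places the locally matched pairs before the \rem{\emptyValue} calls rather than after, and your group (c) is provably empty by item 2 --- and both orderings satisfy the axioms.
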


\noindent From Prop.~\ref{prop:LL-pool}~and~\ref{prop:pool-qc} follows
that local linearizability implies quiescent consistency for pool. 

\begin{theorem}[Pool, Queue \& Stack, QC]
\label{thm:qc:pool-queue-stack}
For pool, local linearizability is (strictly) stronger than quiescent consistency. For queue and stack, local linearizability is incomparable to quiescent consistency.\qed
\end{theorem}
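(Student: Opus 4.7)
The statement splits into three pieces. The first, pool $\mathrm{LL} \Rightarrow \mathrm{QC}$, is a routine chaining of the two previous propositions: Prop.~\ref{prop:LL-pool} shows that a locally linearizable pool history satisfies the three sanity properties (no duplicates, no out-of-thin-air values, no lost values), and Prop.~\ref{prop:pool-qc} shows that those three properties imply QC.

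For strictness in the pool case, I need a pool history that is QC but not locally linearizable. The plan is to violate property~3 of Prop.~\ref{prop:LL-pool} in some thread's induced history while simultaneously destroying every quiescent state that would propagate the same violation into a QC witness. Concretely, I would have $T_1$ perform \ins{x} whose response precedes (in real time) a \rem{\emptyValue} performed by $T_2$, and then insert a third thread $T_3$ with a pending insertion \ins{y} whose invocation is before $T_1$'s response and whose response is after $T_2$'s invocation; moreover \rem{x} never occurs anywhere in the history. The overlapping \ins{y} kills the only quiescent state that would otherwise lie between \ins{x} and \rem{\emptyValue}, so a QC witness is free to place \rem{\emptyValue} before \ins{x} and obtain a pool-valid sequence. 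Yet $T_1$'s induced history consists solely of \ins{x} preceding \rem{\emptyValue} with no \rem{x}, so every linearization of it inevitably breaks pool axiom~(3).

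For queue and stack I would give one separating history in each direction. For LL but not QC, the sequential history of Figure~\ref{fig:intro-ll-vs-lin} already works on queues: each thread's induced history is a matched enqueue--dequeue pair (hence queue-valid), whereas the unique QC witness equals the sequential order itself and fails FIFO; a completely analogous sequential history with cross-thread matched push--pop pairs does the job for stacks. For QC but not LL, I would mirror the pool construction: let $T_1$ perform two sequential insertions \ins{x}, \ins{y}, let $T_2$ perform a concurrent insertion \ins{z} whose pending interval straddles the gap between the two insertions of $T_1$ (so that no quiescent state separates them), then after a quiescent gap let a removing thread $T_3$ sequentially remove the three values in an order that scrambles $T_1$'s pair. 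A QC witness can commute the three insertions to match $T_3$'s removal order and so remain queue- resp. stack-valid, while $T_1$'s induced history is forced by precedence to contain its two insertions in one order and their two removals in the reverse order, which admits no queue- resp. stack-valid linearization.

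The only delicate step is the second one: because the pool axioms are so permissive, most QC pool histories are already locally linearizable, so the separator must carefully exploit overlap to let the QC witness commute \rem{\emptyValue} past a preceding \ins{x} while the single-thread induced projection, lacking any such overlap, cannot do the same commutation.
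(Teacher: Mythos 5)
Your proposal is correct and follows essentially the same route as the paper: the positive direction for pool is obtained by chaining Prop.~\ref{prop:LL-pool} with Prop.~\ref{prop:pool-qc}, and all remaining directions are settled by explicit separating histories built on the same principle (overlapping operations destroy the quiescent states so that QC may reorder freely, while some thread-induced projection is forced into a spec violation; conversely, a fully sequential history with cross-thread matched insert/remove pairs is LL but leaves QC no freedom to reorder). The only difference is cosmetic: the paper kills quiescent states with one long-spanning operation in $T_1$ (e.g., a \rem{\emptyValue} or \enq{1} covering the whole execution) rather than with an auxiliary overlapping insertion from a third thread, and its LL-but-not-QC queue witness is exactly your Figure~\ref{fig:intro-ll-vs-lin} history.
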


Local linearizability also does not imply the stronger condition of quantitative quiescent consistency. 
Like local linearizability, quiescent consistency and quantitative quiescent
consistency are compositional~\cite{Herlihy:AMP08,Jagadeesan:ICALP14}.
For details, please see Appendix~\ref{appendix:qqc}.

\subparagraph*{Consistency Conditions for Distributed Shared Memory.}
There is extensive research on consistency conditions for distributed shared 
memory~\cite{Ahamad:SPAA93,Ahamad:DC95,Burckhardt:POPL14,Goodman:TR91,Heddaya:TR92,Hennessy:Book11,Lamport:TC79,Lipton:TR88,Lipton:patent93}.
In Appendix~\ref{appendix:ccdsm}, we compare local linearizability against coherence, PRAM consistency, 
processor consistency, causal consistency, and local consistency.
All these conditions split a history into subhistories and 
require consistency of the subhistories.
For our comparison, we first define a sequential specification~$S_M$ for a 
single memory location.
We assume that each memory location is preinitialized with a 
value~$v_\mathit{init} \in V$.
A read-operation returns the value of the last write-operation that was 
performed on the memory location or $v_\mathit{init}$ if there was no
write-operation.
We denote write-operations by \insName and read-operations by \headName.
Formally, we define $S_M$ as $S_M = \{ \head{v_\mathit{init}} \}^\star \cdot \{ \ins{v}\head{v}^i \mid i \geq 0, v \in V \}^\star$.
Note that read-operations are data observations and the same value can be read multiple times.
For brevity, we only consider histories that involve a 
single memory location. 
In the following, we summarize our comparison. For details, please see 
Appendix~\ref{appendix:ccdsm}.

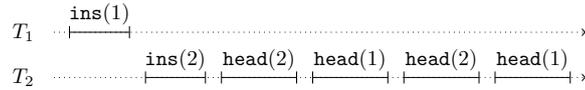
\begin{figure}[tb]
\centering
\scalebox{0.8}{
	  \begin{tikzpicture}
			
		  \node (T1) at (0.25,.75) {$T_1$};
  		\node (T2) at (0.25,0) {$T_2$};

	  	\draw[dotted,->] (0.75, .75) -- (9.5, .75);
  		\draw[dotted,->] (0.75, 0) -- (9.5, 0);
		  \draw[|-|] (1,.75) -- node[above] {$\ins{1}$} (2,.75);
		  \draw[|-|] (2.25,0) -- node[above] {$\ins{2}$} (3.25,0);
		  \draw[|-|] (3.5,0) -- node[above] {$\head{2}$} (4.75,0);
		  \draw[|-|] (5,0) -- node[above] {$\head{1}$} (6.25,0);
		  \draw[|-|] (6.5,0) -- node[above] {$\head{2}$} (7.75,0);
		  \draw[|-|] (8,0) -- node[above] {$\head{1}$} (9.25,0);
			
	  \end{tikzpicture}
	}
	\caption{Problematic shared-memory history.}
	\label{fig:problematic-dsm-history}
\end{figure}
While local linearizability is well-suited for 
concurrent data structures, this is not 
necessarily true for the mentioned shared-memory consistency conditions.
On the other hand, 
local linearizability appears to be problematic for shared memory.
Consider the locally linearizable history in Figure~\ref{fig:problematic-dsm-history}.
There, the read values oscillate between different values that were written 
by different threads.
Therefore, local linearizability does not imply any of the shared-memory consistency conditions.
In Appendix~\ref{appendix:ccdsm}, we further show that local linearizability is incomparable to all considered shared-memory conditions.

\lstset{
  numbers=left, %
  numberstyle=\tiny, %
  numbersep=4pt, %
  language=C, %
  commentstyle=\scriptsize\ttfamily, %
  fontadjust=true, %
  float=*, %
  basicstyle=\scriptsize\ttfamily, %
  keywordstyle=\scriptsize\ttfamily\bf, %
  tabsize=2, %
  firstnumber=1, %
  escapeinside={(*}{*)}, %
  breaklines=true,
}

\section{Locally Linearizable Implementations}
\label{sec:implementations-queues-and-stacks}

In this section, we focus on locally linearizable data structure 
implementations that are generic as follows:
Choose a linearizable implementation of a data structure $\Phi$ wrt a 
sequential specification $S_\Phi$, and we turn it into a (distributed) data structure called LLD $\Phi$ that is locally linearizable wrt $S_\Phi$. 
An LLD implementation takes several copies of $\Phi$ (that we call backends) and assigns to each
thread $T$ a backend $\Phi_T$.
Then, when thread $T$ inserts an element into LLD $\Phi$, the element is 
inserted into $\Phi_T$, and when an arbitrary thread removes an element from 
LLD $\Phi$, the element is removed from some $\Phi_T$ eagerly, i.e., if no 
element is found in the attempted backend $\Phi_{T}$ the search for an element
continues through all other backends. 
If no element is found in one round through the backends, then we return \emptyValue.

\begin{proposition}[LLD correctness]
\label{prop:LLD}
Let $\Phi$ be a data structure implementation that is linearizable wrt a sequential specification $S_\Phi$. Then LLD $\Phi$ is locally linearizable wrt $S_\Phi$. 
\end{proposition}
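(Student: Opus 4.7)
The plan is to verify both conditions of Definition~\ref{def:lc} directly from the LLD construction. Fix an LLD history $\history{h}$. The decomposition condition is immediate: an LLD insertion by thread $T$ lies in $\history{h}_T$ by definition of $\mathrm{I}_T$; any removal or data observation returning a value $x$ inserted by some thread $T'$ lies in $\mathrm{O}_{T'}$; and any $\emptyValue$-returning removal lies in $\mathrm{O}_T$ for every~$T$. So every $m \in \history{h}$ is contained in $\history{h}_T$ for some $T$.

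For the linearizability condition, fix a thread $T$ and consider its backend $\Phi_T$. By construction of LLD, only $T$ inserts into $\Phi_T$; every value returned in $\history{h}_T$ other than $\emptyValue$ was extracted or read from $\Phi_T$; and every $\emptyValue$-returning LLD removal traverses $\Phi_T$ during its round through all backends and finds it empty. Hence each LLD operation in $\history{h}_T$ has a distinguished sub-operation executed on $\Phi_T$: the unique sub-insertion for an insert by $T$; the unique successful sub-removal or sub-observation for a remove/data-obs returning a $T$-inserted value; and, for each $\emptyValue$-returning removal, a chosen empty sub-removal it performed on $\Phi_T$. Since $\Phi$ is linearizable wrt $S_\Phi$, the history $\history{h}^{\Phi_T}$ of all sub-operations executed on $\Phi_T$ admits a linearization $\lambda_T \in S_\Phi$. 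The candidate linearization of $\history{h}_T$ is the projection of $\lambda_T$ onto these distinguished sub-operations, taken in the order induced by $\lambda_T$; the linearization point of each LLD operation in $\history{h}_T$ is chosen to be the linearization point of its distinguished sub-operation, which lies inside the LLD operation's real-time interval. Preservation of the precedence order $\precOrder{\history{h}_T}$ then follows from preservation of precedence by $\lambda_T$ together with the fact that each chosen point is sandwiched between the invocation and response of its LLD operation.

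The main obstacle is showing that this projection is itself a member of $S_\Phi$. The sub-operations of $\lambda_T$ that are discarded by the projection are exactly the empty-returning sub-removals on $\Phi_T$ arising from LLD removals that visited $\Phi_T$, found it empty, and then extracted a value from a different backend; in particular, the projection retains every insertion and every successful removal/data-observation of $\lambda_T$. The claim thus reduces to the closure property that, for a container specification $S_\Phi$, deleting empty-returning removes from a valid sequence yields a valid sequence. This is straightforward to verify directly from the pool axioms in Table~\ref{tab:axioms} (and, a fortiori, from the queue and stack axioms, which only further constrain the order of \emph{non-empty} removes), and is the natural container counterpart of equation~(\ref{eq:cont-obs}). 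This is the step that requires pinning down the ambient class of sequential specifications to which the proposition is intended to apply; once this closure property is in hand, the projection lies in $S_\Phi$, linearizes $\history{h}_T$, and local linearizability of LLD $\Phi$ wrt $S_\Phi$ follows.
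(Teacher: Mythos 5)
Your proof is correct and rests on the same core reduction as the paper's: each thread-induced history $\history{h}_T$ is tied to the backend $\Phi_T$, whose linearizability wrt $S_\Phi$ is inherited from $\Phi$. The paper's entire argument, however, is the single sentence that ``each thread-induced history $\history{h}_T$ is a backend history of $\Phi_T$,'' and you have correctly observed that this is not literally true: the actual backend history of $\Phi_T$ also contains \emptyValue-returning sub-removals issued by LLD removals that probed $\Phi_T$ during their round through the backends but ultimately succeeded elsewhere (and hence belong to some other thread's induced history), and the real-time intervals of the LLD operations strictly contain those of their distinguished sub-operations. Your projection argument handles both discrepancies cleanly --- the interval containment only shrinks the precedence order that must be preserved, and discarding the surplus sub-operations is sound because every discarded sub-operation is an \emptyValue-returning removal and the pool, queue, and stack axioms of Table~\ref{tab:axioms} are closed under deleting such removals. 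That closure lemma is the one genuinely new ingredient relative to the paper's sketch, and you are right that it is exactly where the proposition implicitly relies on $S_\Phi$ being a container-type specification rather than an arbitrary prefix-closed language; for a pathological $S_\Phi$ whose validity is sensitive to the presence of earlier empty removals, the statement as literally quoted would need the hypothesis you flag. In short: same route as the paper, but executed at a level of detail that exposes and repairs a real gap in the published one-line proof.
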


\begin{proof} 
Let $\history{h}$ be a history of LLD $\Phi$. 
The crucial observation is that each thread-induced history~$\history{h_T}$ is 
a backend history of $\Phi_T$ and hence linearizable wrt $S_\Phi$.
\end{proof}

Any number of copies (backends) is allowed in this generic implementation of LLD $\Phi$. 
If we take just one copy, we end up with a linearizable implementation. 
Also, any way of choosing a backend for removals is fine. 
However, both the number of backends and the backend selection strategy upon removals affect the performance significantly. 
In our LLD $\Phi$ implementations we use one backend per thread, resulting in no contention on insertions, and always attempt a local remove first. 
If this does not return an element, then we continue a search through all other backends starting from a randomly chosen backend.

LLD $\Phi$ is an implementation closely related to Distributed Queues~(DQs)~\cite{Haas:CF13}. 
A DQ is a (linearizable) \emph{pool} that is organized as a single segment 
of length $\ell$ holding $\ell$ backends.
DQs come in different flavours depending on how insert and remove methods are
distributed across the segment when accessing backends. 
No DQ variant in~\cite{Haas:CF13} follows the LLD approach described above. Moreover, while DQ algorithms are 
implemented for a fixed number of backends, LLD $\Phi$ implementations manage a segment of variable size, one backend per (active) thread.
Note that the strategy of selecting backends in the LLD $\Phi$ implementations is similar to other work in work stealing~\cite{Michael:PPOPP09}. 
However, in contrast to this work our data structures neither duplicate nor
lose elements.
LLD (stack) implementations have been successfully applied  for managing free lists in the fast and scalable memory allocator scalloc~\cite{Aigner:OOPSLA15}. The guarantees provided by local linearizability are not needed for the correctness of scalloc, i.e., the free lists could also use a weak pool (pool without a linearizable emptiness check). However, the LLD stack implementations provide good caching behavior when threads operate on their local stacks whereas a weak pool would potentially negatively impact performance.

We have implemented LLD variants of strict and relaxed queue and stack implementations. 
None of our implementations involves observation methods, but the LLD
algorithm can easily be extended to support observation methods.
For details, please see App.~\ref{sec:lld_and_observers}.
Finally, let us note that we have also experimented with other locally
linearizable implementations that 
lacked the genericity of the LLD implementations, and 
whose performance evaluation did not show promising results (see App.~\ref{sec:implementations}). 
As shown in Sec.~\ref{sec:LL-L-comparison}, a locally linearizable 
pool is not a linearizable pool, i.e., it lacks a linearizable
emptiness check. 
Indeed, LLD implementations do not provide a linearizable emptiness check, despite of eager removes. 
We provide LL$^+$D~$\Phi$, a variant of LLD~$\Phi$, that provides a linearizable emptiness check under mild conditions on the starting implementation~$\Phi$ (see App.~\ref{sec:llplusd} for details).

\subparagraph*{Experimental Evaluation.}\label{sec:experiments}
All experiments ran on a uniform memory architecture (UMA) machine with four
$10$-core 2GHz Intel Xeon \mbox{E7-4850} processors supporting two hardware
threads (hyperthreads) per core, 128GB of main memory, and Linux kernel version 3.8.0.
We also ran the experiments without hyper-threading resulting in no noticeable difference.
The CPU governor has been disabled.  
All measurements were obtained from the artifact-evaluated
Scal~benchmarking~framework~\cite{scal-framework,HaasHKLPS15,scal-artifact}, where you can also find the code of all
involved data structures. 
Scal uses preallocated memory (without freeing it) to avoid
memory management artifacts. 
For all measurements we report the arithmetic mean and the $95\%$ confidence
interval (sample size=10, corrected sample standard deviation).

In our experiments, we consider the linearizable queues 
Michael-Scott~queue (MS)~\cite{Michael:PODC96} and 
LCRQ~\cite{Morrison:PPOPP13} (improved version~\cite{multicore-computing-group-lcrq}),
the linearizable stacks 
Treiber stack (Treiber)~\cite{Treiber86} and 
TS stack~\cite{Dodds:POPL15},
the $k$-out-of-order relaxed 
$k$-FIFO queue~\cite{Kirsch:PACT13} and
$k$-Stack~\cite{Henzinger:POPL13}
and
linearizable well-performing pools based on distributed queues using random
balancing~\cite{Haas:CF13} (1-RA DQ for queue, and 1-RA DS for stack).
For each of these implementations (but the pools) we provide LLD variants (LLD LCRQ, LLD TS stack, LLD $k$-FIFO, and LLD $k$-Stack) and, when possible, LL$^+$D variants (LL$^+$D MS queue and LL$^+$D Treiber stack). 
Making the pools
locally linearizable is not promising as they are already distributed. 
Whenever LL$^+$D is achievable for a data structure implementation~$\Phi$ we
present 
only results for LL$^+$D~$\Phi$ 
as, in our workloads, LLD~$\Phi$ and LL$^+$D~$\Phi$ implementations 
perform with no visible difference.

We evaluate the data structures on a Scal producer-consumer benchmark where each producer and consumer is
configured to execute $10^6$ operations. 
To control contention, we add a busy wait of $5\mu s$ between operations. This is important as too high contention results in measuring hardware or operating system (e.g., scheduling) artifacts.
The number of threads ranges between $2$ and
$80$ (number of hardware threads) half of which are producers and half consumers.
To relate performance and
scalability we report the number of data structure operations per second.
Data structures that require parameters to be set are configured to allow
maximum parallelism for the producer-consumer workload with 80 threads. This results in
$k=80$ for all $k$-FIFO and $k$-Stack variants (40~producers and 40~consumers in
parallel on a single segment), $p=80$ for 1-RA-DQ and 1-RA-DS (40~producers and 40~consumers in parallel on
different backends). The TS Stack algorithm also needs to be configured with a delay parameter. We use optimal delay ($7\mu s$) for the TS Stack and zero delay for the LLD TS Stack,
as delays degrade the performance of the LLD implementation.

\begin{figure}[t!]
\centering
\begin{minipage}{\textwidth}
	\centering
	\resizebox{.75\textwidth}{!}{%
		\begingroup
		  \makeatletter
		  \gdef\gplbacktext{}%
		  \gdef\gplfronttext{}%
		  \makeatother
		      \def\colorrgb#1{\color[rgb]{#1}}%
		      \def\colorgray#1{\color[gray]{#1}}%
		      \expandafter\def\csname LTw\endcsname{\color{white}}%
		      \expandafter\def\csname LTb\endcsname{\color{black}}%
		      \expandafter\def\csname LTa\endcsname{\color{black}}%
		      \expandafter\def\csname LT0\endcsname{\color[rgb]{1,0,0}}%
		      \expandafter\def\csname LT1\endcsname{\color[rgb]{0,1,0}}%
		      \expandafter\def\csname LT2\endcsname{\color[rgb]{0,0,1}}%
		      \expandafter\def\csname LT3\endcsname{\color[rgb]{1,0,1}}%
		      \expandafter\def\csname LT4\endcsname{\color[rgb]{0,1,1}}%
		      \expandafter\def\csname LT5\endcsname{\color[rgb]{1,1,0}}%
		      \expandafter\def\csname LT6\endcsname{\color[rgb]{0,0,0}}%
		      \expandafter\def\csname LT7\endcsname{\color[rgb]{1,0.3,0}}%
		      \expandafter\def\csname LT8\endcsname{\color[rgb]{0.5,0.5,0.5}}%
		  \setlength{\unitlength}{0.0500bp}%
		  \begin{picture}(7200.00,4000.00)(1100.00, 900.00)%
		    \gplbacktext{%
		      \csname LTb\endcsname%
		      \put(814,1584){\makebox(0,0)[r]{\strut{} 0}}%
		      \put(814,1829){\makebox(0,0)[r]{\strut{} 2}}%
		      \put(814,2075){\makebox(0,0)[r]{\strut{} 4}}%
		      \put(814,2320){\makebox(0,0)[r]{\strut{} 6}}%
		      \put(814,2566){\makebox(0,0)[r]{\strut{} 8}}%
		      \put(814,2811){\makebox(0,0)[r]{\strut{} 10}}%
		      \put(814,3057){\makebox(0,0)[r]{\strut{} 12}}%
		      \put(814,3302){\makebox(0,0)[r]{\strut{} 14}}%
		      \put(814,3548){\makebox(0,0)[r]{\strut{} 16}}%
		      \put(814,3793){\makebox(0,0)[r]{\strut{} 18}}%
		      \put(814,4039){\makebox(0,0)[r]{\strut{} 20}}%
		      \put(814,4284){\makebox(0,0)[r]{\strut{} 22}}%
		      \put(814,4530){\makebox(0,0)[r]{\strut{} 24}}%
		      \put(814,4775){\makebox(0,0)[r]{\strut{} 26}}%
		      \put(1019,1364){\makebox(0,0){\strut{}2}}%
		      \put(1605,1364){\makebox(0,0){\strut{}10}}%
		      \put(2337,1364){\makebox(0,0){\strut{}20}}%
		      \put(3069,1364){\makebox(0,0){\strut{}30}}%
		      \put(3801,1364){\makebox(0,0){\strut{}40}}%
		      \put(4533,1364){\makebox(0,0){\strut{}50}}%
		      \put(5266,1364){\makebox(0,0){\strut{}60}}%
		      \put(5998,1364){\makebox(0,0){\strut{}70}}%
		      \put(6730,1364){\makebox(0,0){\strut{}80}}%
		      \put(176,3179){\rotatebox{-270}{\makebox(0,0){\strut{}million operations per sec (more is better)}}}%
		      \put(3874,1034){\makebox(0,0){\strut{}number of threads}}%
		    }%
		    \gplfronttext{%
		      \csname LTb\endcsname%
		      \put(8500,3880){\makebox(0,0)[r]{\strut{}MS}}%
		      \csname LTb\endcsname%
		      \put(8500,3660){\makebox(0,0)[r]{\strut{}LCRQ}}%
		      \csname LTb\endcsname%
		      \put(8500,3440){\makebox(0,0)[r]{\strut{}k-FIFO}}%
		      \csname LTb\endcsname%
		      \put(8500,3220){\makebox(0,0)[r]{\strut{}LL+D~MS}}%
		      \csname LTb\endcsname%
		      \put(8500,3000){\makebox(0,0)[r]{\strut{}LLD~LCRQ}}%
		      \csname LTb\endcsname%
		      \put(8500,2780){\makebox(0,0)[r]{\strut{}LLD~k-FIFO}}%
		      \csname LTb\endcsname%
		      \put(8500,2560){\makebox(0,0)[r]{\strut{}1-RA~DQ}}%
		    }%
		    \gplbacktext
		    \put(0,1000){\includegraphics[keepaspectratio=true,trim=0 50 0 0,clip]{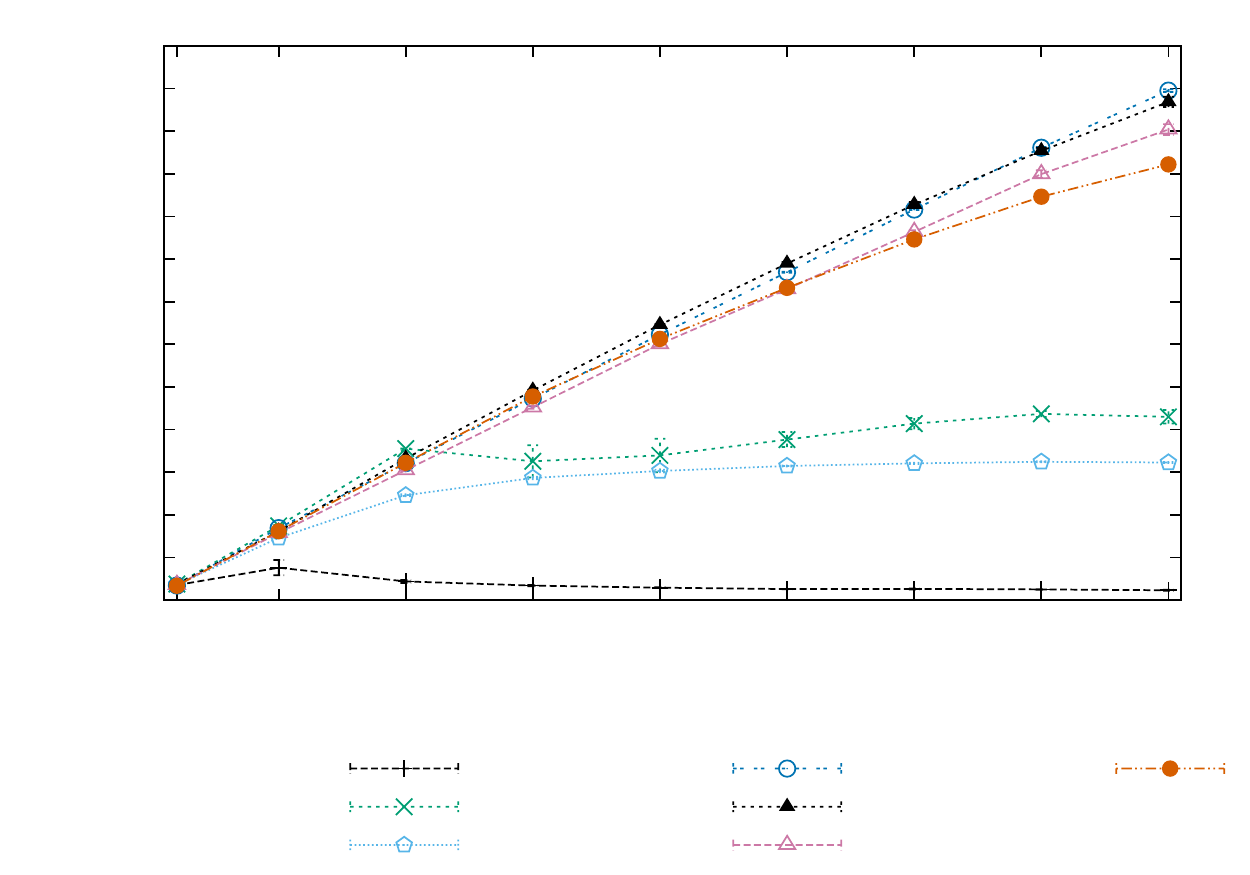}}%
		    \put(6600,3300){\includegraphics[keepaspectratio=true,trim=0 0 200px 210px,clip]{queue-prodcon}}%
		    \put(8395,2625){\includegraphics[keepaspectratio=true,trim=200px 0 100px 210px,clip]{queue-prodcon}}%
		    \put(8190,1950){\includegraphics[keepaspectratio=true,trim=300px 0 0 210px,clip]{queue-prodcon}}%
		    \gplfronttext
		  \end{picture}%
		\endgroup
  }\\
  \centering{``queue-like'' data structures}
	\label{fig:prodcon-queues}
\end{minipage}\\[-3.5em]
\begin{minipage}{\textwidth}
	\centering
	\resizebox{.75\textwidth}{!}{%
		\begingroup
		  \makeatletter
		  \gdef\gplbacktext{}%
		  \gdef\gplfronttext{}%
		  \makeatother
		      \def\colorrgb#1{\color[rgb]{#1}}%
		      \def\colorgray#1{\color[gray]{#1}}%
		      \expandafter\def\csname LTw\endcsname{\color{white}}%
		      \expandafter\def\csname LTb\endcsname{\color{black}}%
		      \expandafter\def\csname LTa\endcsname{\color{black}}%
		      \expandafter\def\csname LT0\endcsname{\color[rgb]{1,0,0}}%
		      \expandafter\def\csname LT1\endcsname{\color[rgb]{0,1,0}}%
		      \expandafter\def\csname LT2\endcsname{\color[rgb]{0,0,1}}%
		      \expandafter\def\csname LT3\endcsname{\color[rgb]{1,0,1}}%
		      \expandafter\def\csname LT4\endcsname{\color[rgb]{0,1,1}}%
		      \expandafter\def\csname LT5\endcsname{\color[rgb]{1,1,0}}%
		      \expandafter\def\csname LT6\endcsname{\color[rgb]{0,0,0}}%
		      \expandafter\def\csname LT7\endcsname{\color[rgb]{1,0.3,0}}%
		      \expandafter\def\csname LT8\endcsname{\color[rgb]{0.5,0.5,0.5}}%
		  \setlength{\unitlength}{0.0500bp}%
		  \begin{picture}(7200.00,5040.00)(1100.00, 1100.00)%
		    \gplbacktext{%
		      \csname LTb\endcsname%
		      \put(814,1804){\makebox(0,0)[r]{\strut{} 0}}%
		      \put(814,2033){\makebox(0,0)[r]{\strut{} 2}}%
		      \put(814,2261){\makebox(0,0)[r]{\strut{} 4}}%
		      \put(814,2490){\makebox(0,0)[r]{\strut{} 6}}%
		      \put(814,2718){\makebox(0,0)[r]{\strut{} 8}}%
		      \put(814,2947){\makebox(0,0)[r]{\strut{} 10}}%
		      \put(814,3175){\makebox(0,0)[r]{\strut{} 12}}%
		      \put(814,3404){\makebox(0,0)[r]{\strut{} 14}}%
		      \put(814,3632){\makebox(0,0)[r]{\strut{} 16}}%
		      \put(814,3861){\makebox(0,0)[r]{\strut{} 18}}%
		      \put(814,4089){\makebox(0,0)[r]{\strut{} 20}}%
		      \put(814,4318){\makebox(0,0)[r]{\strut{} 22}}%
		      \put(814,4546){\makebox(0,0)[r]{\strut{} 24}}%
		      \put(814,4775){\makebox(0,0)[r]{\strut{} 26}}%
		      \put(1019,1584){\makebox(0,0){\strut{}2}}%
		      \put(1605,1584){\makebox(0,0){\strut{}10}}%
		      \put(2337,1584){\makebox(0,0){\strut{}20}}%
		      \put(3069,1584){\makebox(0,0){\strut{}30}}%
		      \put(3801,1584){\makebox(0,0){\strut{}40}}%
		      \put(4533,1584){\makebox(0,0){\strut{}50}}%
		      \put(5266,1584){\makebox(0,0){\strut{}60}}%
		      \put(5998,1584){\makebox(0,0){\strut{}70}}%
		      \put(6730,1584){\makebox(0,0){\strut{}80}}%
		      \put(176,3289){\rotatebox{-270}{\makebox(0,0){\strut{}million operations per sec (more is better)}}}%
		      \put(3874,1254){\makebox(0,0){\strut{}number of threads}}%
		    }%
		    \gplfronttext{%
		      \csname LTb\endcsname%
		      \put(8500,3920){\makebox(0,0)[r]{\strut{}Treiber}}%
		      \csname LTb\endcsname%
		      \put(8500,3700){\makebox(0,0)[r]{\strut{}TS~Stack}}%
		      \csname LTb\endcsname%
		      \put(8500,3480){\makebox(0,0)[r]{\strut{}k-Stack}}%
		      \csname LTb\endcsname%
		      \put(8500,3260){\makebox(0,0)[r]{\strut{}LL+D~Treiber}}%
		      \csname LTb\endcsname%
		      \put(8500,3040){\makebox(0,0)[r]{\strut{}LLD~TS~Stack}}%
		      \csname LTb\endcsname%
		      \put(8500,2820){\makebox(0,0)[r]{\strut{}LLD~k-Stack}}%
		      \csname LTb\endcsname%
		      \put(8500,2600){\makebox(0,0)[r]{\strut{}1-RA~DS}}%
		    }%
		    \gplbacktext
		    \put(0,1000){\includegraphics[keepaspectratio=true,trim=0 50 0 0,clip]{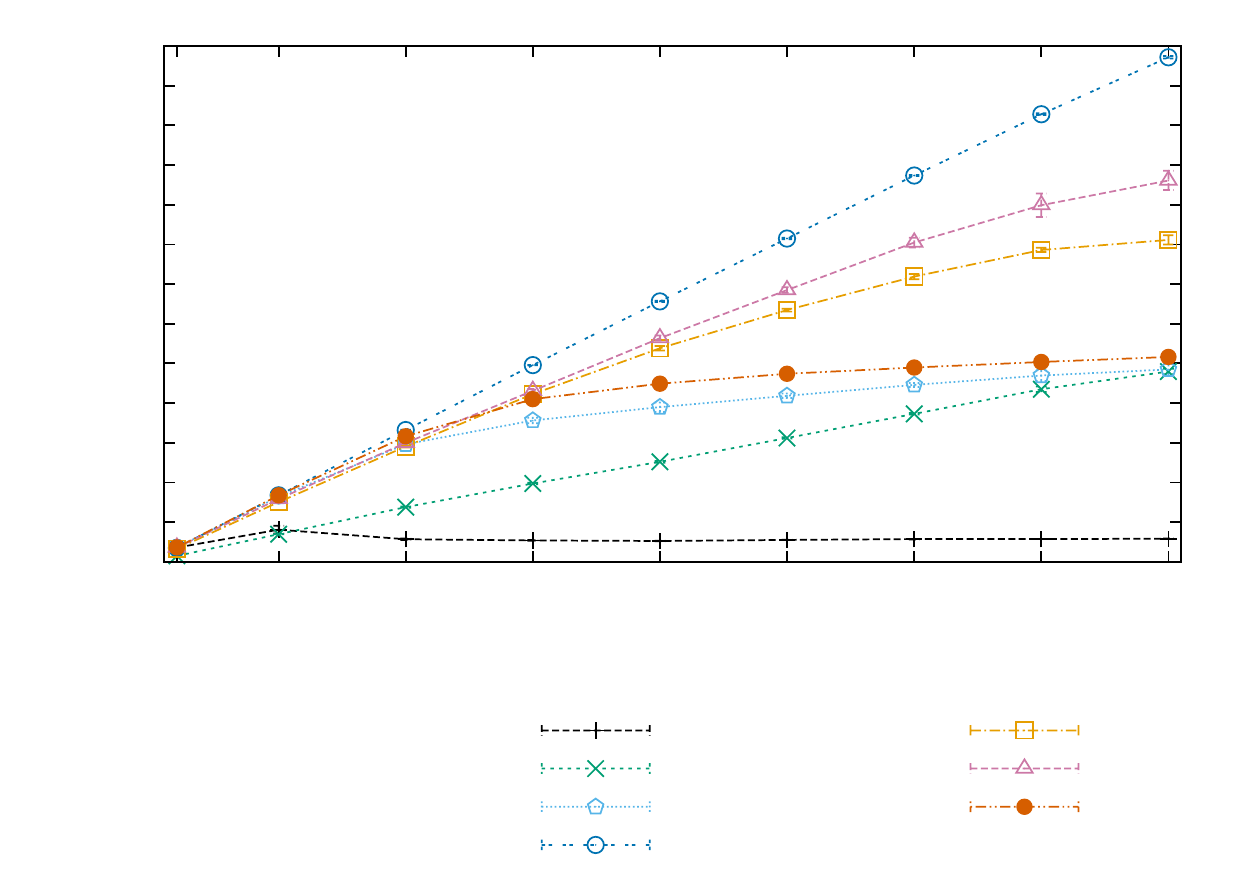}}%
		    \put(5500,3100){\includegraphics[keepaspectratio=true,trim=0 0 100px 200px,clip]{stack-prodcon}}%
		    \put(7025,2200){\includegraphics[keepaspectratio=true,trim=200 0 00px 200px,clip]{stack-prodcon}}%
		    \gplfronttext
		  \end{picture}%
		\endgroup
  }\\
  \centering{``stack-like'' data structures}
	\label{fig:prodcon-stacks}
\end{minipage}
\caption{
Performance and scalability of producer-consumer microbenchmarks with an
increasing number of threads on a 40-core (2 hyperthreads per core) machine}
\label{fig:prodcon}
\end{figure}

Figure~\ref{fig:prodcon} shows the results of the producer-consumer benchmarks.
Similar to experiments performed
elsewhere~\cite{Dodds:POPL15,Henzinger:POPL13,Kirsch:PACT13,Morrison:PPOPP13}
the well-known algorithms MS and Treiber do not scale for 10 or more threads. The state-of-the-art linearizable queue and
stack algorithms LCRQ and TS-interval Stack either perform competitively with
their $k$-out-of-order relaxed counter parts $k$-FIFO and $k$-Stack or even
outperform and outscale them. 
For any implementation~$\Phi$, LLD~$\Phi$ and LL$^+$D~$\Phi$ (when available) perform and scale significantly better than $\Phi$ does, even slightly better than the state-of-the-art pool that we compare to. 
The best improvement show LLD variants of MS~queue and Treiber stack.
The speedup of the locally linearizable implementation to the
fastest linearizable queue (LCRQ) and stack (TS
Stack) implementation at 80 threads is 2.77 and 2.64, respectively.
The performance degradation for LCRQ between 30 and 70 threads aligns with the performance of 
{\tt fetch-and-inc}---the CPU instruction that atomically retrieves and
modifies the contents of a memory location---on the benchmarking
machine, which is different on the original benchmarking machine~\cite{Morrison:PPOPP13}. LCRQ uses {\tt fetch-and-inc} as its key atomic instruction.

\section{Conclusion \& Future Work}
\label{sec:conclusions}

Local linearizability splits a history into a set of thread-induced histories and requires consistency of all such.
This yields an intuitive 
consistency condition for concurrent objects that enables new data structure
implementations with superior performance and scalability.
Local linearizability has desirable properties like compositionality and well-behavedness for container-type data structures.  
As future work, it is interesting to investigate the guarantees that 
local linearizability provides to client programs along the line
of~\cite{Filipovic:TCS10}.

\section*{Acknowledgments}
This work has been supported by the National Research Network RiSE on Rigorous
Systems Engineering (Austrian Science Fund (FWF): S11402-N23, S11403-N23, S11404-N23, S11411-N23), a Google PhD
Fellowship, an Erwin Schr\"odinger Fellowship (Austrian Science Fund (FWF): J3696-N26),
EPSRC grants EP/H005633/1 and EP/K008528/1, the Vienna Science and Technology Fund (WWTF) trough grant PROSEED, the European Research Council (ERC) under grant 267989 (QUAREM) and 
by the Austrian Science Fund (FWF) under grant Z211-N23 (Wittgenstein Award).

\bibliographystyle{plainurl}
\bibliography{p06-haas-arxiv}

\clearpage
\appendix

\section{Local Linearizability with Shape Observers}
\label{sec:LL-shape-obs}

There are two possible ways to deal with shape observers: treat them locally, in the thread-induced history of the performing thread, or treat them globally. While a local treatment is immediate and natural to a local consistency condition, a global treatment requires care. We present both solutions next.

\begin{definition}[Local Linearizability LSO]
\label{def:lllso}
A history $\history{h}$ is locally linearizable with local shape observers (LSO) wrt a sequential specification~$S$ if it is locally linearizable according to Definition~\ref{def:lc} with the difference that the in-methods (Definition~\ref{def:in-out-m}) also contain all shape observers performed by thread $T$, i.e.,
$\mathrm{I}_T = \{m \mid m \in M(\history{h}|{T}) \cap \left(\Ins \cup \SOb\right)\}$.
\qedd
\end{definition}

Global observations require more notation and auxiliary notions.
Let $\history{s_j}$ for $j \in J$ be a collection of sequences over alphabet $\Sigma$ with pairwise disjoint sets of symbols $M(\history{s_j})$. 
A sequence $\history{s}$ is an \emph{interleaving} of $\history{s_j}$ for $j \in J$ if $M(\history{s}) = \bigcup_j M(\history{s_j})$ and $\history{s}| M(\history{s_j}) = \history{s_j}$ for all $j \in J$. 
We write $\prod_{j} \history{s_j}$ for the set of all interleavings of $\history{s_j}$ with $j \in J$. 

Given a history $\history{h}$ and a method call $m \in \history{h}$, we write $\history{h}^{\le m}$ for the (incomplete) history that is the prefix of $\history{h}$ up to and without $m_r$, the response event of $m$. Hence, $\history{h}^{\le m}$ contains all invocation and response events of $\history{h}$ that appear \emph{before} $m_r$. 

\begin{definition}
\label{def:witness}
Let $S$ denote the sequential specification of a container $D$.
A shape observer $m$ in a history $\history{h}$ has a witness if there exists a sequence $\history{s} \in \Sigma^*$ such that $\history{s}m \in S$ and $\history{s} \in \prod_{T} \history{s_T}$ for some $\history{s}_T$ that is a linearization of the thread-induced history $(\history{h}^{\le m})_T$.\qedd
\end{definition}

Informally, the above definition states that a global shape observer $m$ must be justified by a (global) witness. 
Such a global witness is a sequence that (1) when extended by $m$ belongs to the sequential specification, and (2) is an interleaving of linearizations of the thread-induced histories up to $m$.

\begin{definition}[Local Linearizability GSO]
\label{def:llgso}
A history $\history{h}$ is locally linearizable with global shape observers (GSO) wrt a sequential specification~$S$ if it is locally linearizable and
each shape observer $m \in \SOb$ has a witness.
\qedd
\end{definition}

We illustrate the difference in the local vs.~the global approach for shape observers with the following example.
\begin{example}\label{example:LSO-GSO}
Consider the following queue history with global observer $\size{}$
	$$\begin{minipage}{0.5\textwidth}
	\centering
	\scalebox{1}{
	\begin{tikzpicture}
							
		\node (T1) at (0.25,.75) {$T_1$};
		\node (T2) at (0.25,0) {$T_2$};
							
		\draw[dotted,->] (0.75, .75) -- (7.75, .75);
		\draw[dotted,->] (0.75, 0) -- (7.75, 0);
		\draw[|-|] (1,.75) -- node[above] {$\enq{1}$} (2,.75);
		\draw[|-|] (1.5,0) -- node[above] {$\deq{1}$} (4,0);
		\draw[|-|] (3,.75) -- node[above] {$\enq{2}$} (6,.75);
		\draw[|-|] (5,0) -- node[above] {$\size{n}$} (7,0);
							
	\end{tikzpicture}
	}
	\end{minipage}$$
where $n$ is just a placeholder for a concrete natural number.
For $n = 0$, the history \history{h} is locally linearizable LSO, but not locally linearizable GSO.
For $n = 1$, the history \history{h} is locally linearizable GSO, but not locally linearizable LSO. 	
\end{example}

Global observers and non-disjoint operations are expected to have negative impact on performance. 
If one cares for global consistency, local linearizability is not the consistency condition to be used. 
The restriction to containers and disjoint operations specifies, in an informal way, the minimal requirements for local consistency to be acceptable.

Neither sets nor maps are containers according to our definition. 
However, it is possible to extend our treatment to sets and maps similar to our treatment of global observers. 
Locally linearizable sets and maps will be weaker than their linearizable counterparts, but, due to the tight coupling between mutator and observer effects, the gain in performance is unlikely to be as substantial as the one observed in other data structures. 
The technicalities needed to extend local linearizability to sets and maps would complicate the theoretical development without considerable benefits and we, therefore, excluded such data structures.

\section{Additional Results and Proofs}
\label{sec:proofs}

\OurTheorem{\ref{thm:comp}}{Compositionality}{
A history $\history{h}$ over a set of objects~$O$ with sequential 
specifications~$S_q$ for $q \in O$ is locally linearizable if and only if 
$\history{h}|q$ is locally linearizable with respect to~$S_q$ for 
every~$q \in O$.
}

\begin{proof}
The property follows from the compositionality of linearizability and the fact 
that $(\history{h}|q)_T = \history{h}_T|q$ for every thread~$T$ and 
object~$q$. 
Assume that $\history{h}$ over $O$ is locally linearizable.
This means that all thread-induced histories~$\history{h}_T$ over $O$ are 
linearizable.
Hence, since linearizability is compositional, for each object~$q \in O$ the 
history~$\history{h}_T|q$ is linearizable with respect to~$S_q$.
Now from $(\history{h}|q)_T = \history{h}_T|q$ we have that for every 
object~$q$ the history~$(\history{h}|q)_T$ is linearizable for every 
thread~$T$.

Similarly, assume that for every object~$q \in O$ the history~$\history{h}|q$ 
is locally linearizable. 
Then, for every~$q$, $(\history{h}|q)_T = \history{h}_T|q$ is linearizable for 
every thread~$T$. 
From the compositionality of linearizability, $\history{h}_T$ is linearizable 
for every thread~$T$. 
This proves that $\history{h}$ is locally linearizable.
\end{proof}

\OurProposition{\ref{prop:lin-loc2}}{Lin vs. LocLin 2}{
Linearizability implies local linearizability for sequential specifications that
are closed under data-projection.
}

\begin{proof}
Assume we are given a history~$\history{h}$ which is linearizable 
with respect to a sequential specification~$S$ that is closed under data-projection. 
Further assume that, without loss of generality, $\history{h}$ is complete.
Then there exists a sequential history~$\history{s} \in S$ such 
that
\begin{inparaenum}[(1)]
	\item $\history{s}$ is a permutation of $\history{h}$, and
	\item if $m \precOrder{\history{h}} n$, then also $m \precOrder{\history{s}} n$.
\end{inparaenum}
Given a thread~$T$, consider the thread-induced history $\history{h}_T$
and let $\history{s}_T = \history{s}| 
\left(\mathrm{I}_T \cup \mathrm{O}_T\right)$.
Then, $\history{s}_T$ is a permutation of $\history{h}_T$ since $\history{h}_T$ 
and $\history{s}_T$ consist of the same events.
Furthermore, $\history{s}_T \in S$ 
since $S$ is closed under data-projection and since Equation~(\ref{eq:cont-obs}) holds for containers.
Finally, we have for each $m\in\history{h}_T$ and $n\in\history{h}_T$ that,
if $m \precOrder{\history{h}_T} n$, then also $m \precOrder{\history{s}_T} n$ 
since $m \precOrder{\history{h}} n$ and therefore $m \precOrder{\history{s}} n$ which implies $m \precOrder{\history{s}_T} n$.
Thereby, we have shown that $\history{h}_T$ is linearizable with respect 
to~$S$, for an arbitrary thread~$T$. 
Hence $\history{h}$ is locally linearizable with respect to~$S$.
\end{proof}

\OurProposition{\ref{prop:DS-are-closed-under-projection}}{Data-Projection Closedness}{
The sequential specifications of pool, queue, and stack are closed under data-projection.
}

\begin{proof}
Let $\history{s} \in S_P$, $V' \subseteq V$, and let $$\history{s'} = \history{s}|\left(\{ \ins{x}, \rem{x} \mid x \in V' \cup\Emp\}\right).$$ 
Then, it suffices to check that all axioms for pool (Definition~\ref{example:pool} and Table~\ref{tab:axioms}) hold for $\history{s'}$.
Clearly, all methods in $\history{s'}$ appear at most once, as they do so in $\history{s}$.
If $\rem{x} \in \history{s'}$, then $\rem{x} \in \history{s}$ and, since $\history{s} \in S_P$, $\ins{x} \abOrder{\history{s}} \rem{x}$. But then also $\rem{x} \in \history{s'}$ and hence $\ins{x} \abOrder{\history{s'}} \rem{x}$. Finally, if $\ins{x} \abOrder{\history{s'}} \rem{e}$ for $e \in \Emp$, then $\ins{x} \abOrder{\history{s}} \rem{e}$ implying that $\rem{x} \in \history{s}$ and $\rem{x}\abOrder{\history{s}} \rem{e}$. But then $\rem{x} \in \history{s'}$ as well and $\rem{x}\abOrder{\history{s'}} \rem{e}$. This shows that $S_P$ is closed under data-projection.

Assume now that $\history{s} \in S_Q$ and $\history{s'}$ is as before (with $\enq{}$ and $\deq{}$ for $\ins{}$ and $\rem{}$, respectively). Then, as $S_P$ is closed under data-projection, $\history{s'}$ satisfies the pool axioms. Moreover, the queue-order axiom (Definition~\ref{example:queue} and Table~\ref{tab:axioms}) also holds: Assume $\enq{x} \abOrder{\history{s'}} \enq{y}$ and $\deq{y} \in \history{s'}$. Then $\enq{x} \abOrder{\history{s}} \enq{y}$ and $\deq{y} \in \history{s}$. Since $\history{s} \in S_Q$ we get $\deq{x} \in \history{s}$ and $\deq{x} \abOrder{\history{s}} \deq{y}$. But this means $\deq{x} \in \history{s'}$ and $\deq{x} \abOrder{\history{s'}} \deq{y}$. Hence, $S_Q$ is closed under data-projection.

Finally, if $\history{s} \in S_S$ and $\history{s'}$ is as before (with $\push{}$ and $\pop{}$ for $\ins{}$ and $\rem{}$, respectively), we need to check that the stack-order axiom (Definition~\ref{example:stack} and Table~\ref{tab:axioms}) holds. Assume $\push{x} \abOrder{\history{s'}} \push{y} \abOrder{\history{s'}} \pop{x}$. This implies $\push{x} \abOrder{\history{s}} \push{y} \abOrder{\history{s}} \pop{x}$ and since $\history{s} \in S_S$ we get $\pop{y} \in \history{s}$ and $\pop{y} \abOrder{\history{s}} \pop{x}$. But then $\pop{y} \in \history{s'}$ and $\pop{y} \abOrder{\history{s'}} \pop{x}$. So, $S_S$ is closed under data-projection.
\end{proof}

\OurProposition{\ref{prop:LL-pool}}{LocLin Pool}{
Let $\history{h}$ be a locally linearizable history wrt a pool. Then:
\begin{enumerate}[1.]
\item No value is duplicated, i.e., every remove method appears in $\history{h}$ at most once.
\item There are no out-of-thin-air values, i.e., $$\forall x \in V. \,\, \rem{x} \in \history{h} \Rightarrow \ins{x} \in \history{h} \,\,\,\wedge\,\,\, \rem{x} {\not<}_{\history{h}} \ins{x}.$$
\item No value is lost, i.e., 
$\forall x \in V. \,\, \forall e \in \Emp. \,\, \ins{x} \precOrder{\history{h}} \rem{e}
\Rightarrow \rem{x} \in \history{h} \,\,\,\wedge\,\,\, \rem{e} {\not<}_{\history{h}} \rem{x}$ and 
$\forall x \in V. \,\,\forall e \in \Emp.\,\,\rem{e} \precOrder{\history{h}} \rem{x} \Rightarrow \ins{x} \not<_{\history{h}} \rem{e}.$
\end{enumerate}
}

\begin{proof}
Note that if a history $\history{h}$ is linearizable wrt a pool, then all of the three stated properties hold, as a consequence of linearizability and the definition of $S_P$. Now assume that $\history{h}$ is locally linearizable wrt a pool.

If $\rem{x}$ appears twice in $\history{h}$, then it also appears twice in some thread-induced history $\history{h}_T$ contradicting that $\history{h}_T$ is linearizable with respect to a pool. This shows that no value is duplicated.

If $\rem{x} \in \history{h}$, then $\rem{x}\in\history{h}_T$ for some $T$ and, since $\history{h}_T$ is linearizable with respect to a pool, $\ins{x} \in \history{h}_T$ and $\rem{x} {\not<}_{\history{h}_T} \ins{x}$. This yields $\ins{x} \in \history{h}$ and $\rem{x} {\not<}_{\history{h}} \ins{x}$. Hence, there are no thin-air values.

Finally, if $\rem{e}\in\history{h}$ for $e \in \Emp$ then $\rem{e}\in\history{h}_T$ for all $T$. Let $\ins{x} \precOrder{\history{h}} \rem{e}$ and let $T'$ be such that $\ins{x}\in\history{h}_{T'}$. Then $\ins{x} \precOrder{\history{h}_{T'}} \rem{e}$ and since $\history{h}_{T'}$ is linearizable with respect to a pool, $\rem{x} \in \history{h}_{T'}$ and $\rem{e} {\not<}_{\history{h}_{T'}} \rem{x}$. This yields $\rem{x} \in \history{h}$ and $\rem{e} {\not<}_{\history{h}} \rem{x}$. Similarly, the other condition holds. Hence, no value is lost.
\end{proof}

\OurTheorem{\ref{thm:queue}}{Queue Local Linearizability}{
A queue concurrent history $\history{h}$ is locally linearizable with respect to the queue sequential specification $S_Q$ if and only if 
\begin{enumerate}[1.]

	\item $\history{h}$ is locally linearizable with respect to the pool sequential specification $S_P$, and
	
	\item $\forall x,y \in V. \,\,\,\forall i. \,\,\,\enq{x} <_{\history{h}}^i \enq{y} \,\,\,\wedge\,\,\, \deq{y} \in \history{h} \,\,\,\Rightarrow\,\,\, \deq{x} \in \history{h} \,\,\,\wedge\,\,\,\deq{y} \not<_\history{h} \deq{x}$.
\end{enumerate}
}

\begin{proof} 
Assume $\history{h}$ is locally linearizable with respect to $S_Q$. 
Since $S_Q \subseteq S_P$ (with suitably renamed method calls), $\history{h}$ 
is locally linearizable with respect to $S_P$. 
Moreover, since all $\history{h}_i$ are linearizable with respect to $S_Q$, by 
Theorem~\ref{thm:queue-lin}, for all $i$ we have  
$\forall x,y \in V. \,\,\, \enq{x} <_{\history{h}_i} \enq{y} \,\,\,\wedge\,\,\, \deq{y} \in \history{h}_i \,\,\,\Rightarrow\,\,\, \deq{x} \in \history{h}_i \,\,\,\wedge\,\,\,\deq{y} \not<_{\history{h}_i} \deq{x}.$

Assume $x,y \in V$ are such that $\enq{x} <_{\history{h}}^i \enq{y}$ and 
$\deq{y} \in \history{h}$. 
Then $\enq{x} <_{\history{h}_i} \enq{y}$ and $\deq{y} \in \history{h}_i$ so 
$\deq{x} \in \history{h}_i$ and $\deq{y} \not<_{\history{h}_i} \deq{x}$. 
This implies $\deq{x} \in \history{h}$ and $\deq{y} \not<_\history{h} \deq{x}$.

For the opposite, assume that conditions 1. and 2. hold for a 
history~$\history{h}$. 
We need to show that (1) $\history{h}_i$ form a decomposition of 
$\history{h}$, which is clear for a queue, and (2) each $\history{h}_i$ is 
linearizable with respect to $S_Q$. 

By 1., each $\history{h}_i$ is linearizable with respect to a pool. 
Assume $\enq{x} <_{\history{h}_i} \enq{y}$ and $\deq{y} \in \history{h}_i$. 
Then $\enq{x} <_{\history{h}}^i \enq{y} \,\,\,\wedge\,\,\, \deq{y} \in \history{h}$ 
and hence by 2., $\deq{x} \in \history{h} \,\,\,\wedge\,\,\,\deq{y} \not<_\history{h} \deq{x}$. 
Again, as $\enq{x}, \deq{x} \in \history{h}_i$ we get $\deq{x} \in \history{h}_i \,\,\,\wedge\,\,\,\deq{y} \not<_{\history{h}_i} \deq{x}$. 
According to Theorem~\ref{thm:queue-lin} this is enough to conclude that 
each $\history{h}_i$ is linearizable with respect to $S_Q$.
\end{proof}

\OurTheorem{\ref{thm:sc:queues-and-stacks}}{Pool, Queue, \& Stack, SC}{For pool, queue, and stack, local linearizability is incomparable to sequential consistency.}

\begin{proof} 
The following histories, when instantiating $\ii{}$ with $\ins{}$, $\enq{}$, and $\push{}$, respectively, and instantiating $\rr{}$ with $\rem{}$, $\deq{}$, and $\pop{}$, respectively, are sequentially consistent but not locally linearizable wrt pool, queue and stack:
\begin{itemize}
\item[(a)] Pool: 
\begin{center}
		\begin{tikzpicture}
		
			\node (T1) at (0.25,.75) {$T_1$};
			\node (T2) at (0.25,0) {$T_2$};
		
			\draw[dotted,->] (0.75, .75) -- (5, .75);
			\draw[dotted,->] (0.75, 0) -- (5, 0);
			\draw[|-|] (1,.75) -- node[above] {$\ii{1}$} (1.75,.75);
			\draw[|-|] (4,.75) -- node[above] {$\rr{1}$} (4.75,.75);
			\draw[|-|] (2,0) -- node[above] {$\rr{\emptyValue}$} (3.75,0);
			
		\end{tikzpicture}
\end{center}
\item[(b)] Queue:
	\begin{center}
		\begin{tikzpicture}
										
			\node (T1) at (0.25,.75) {$T_1$};
			\node (T2) at (0.25,0) {$T_2$};
									
			\draw[dotted,->] (0.75, .75) -- (5, .75);
			\draw[dotted,->] (0.75, 0) -- (5, 0);
			\draw[|-|] (1,0) -- node[above] {$\ii{1}$} (1.75,0);
			\draw[|-|] (3,.75) -- node[above] {$\rr{2}$} (3.75,.75);
			\draw[|-|] (2,0) -- node[above] {$\ii{2}$} (2.75,0);
			\draw[|-|] (4,0) -- node[above] {$\rr{1}$} (4.75,0);
									
		\end{tikzpicture}
	\end{center}
\item[(c)] Stack:
	\begin{center}
		\begin{tikzpicture}
										
			\node (T1) at (0.25,.75) {$T_1$};
			\node (T2) at (0.25,0) {$T_2$};
									
			\draw[dotted,->] (0.75, .75) -- (5, .75);
			\draw[dotted,->] (0.75, 0) -- (5, 0);
			\draw[|-|] (1,.75) -- node[above] {$\ii{1}$} (1.75,.75);
			\draw[|-|] (3,0) -- node[above] {$\rr{1}$} (3.75,0);
			\draw[|-|] (2,.75) -- node[above] {$\ii{2}$} (2.75,.75);
			\draw[|-|] (4,0) -- node[above] {$\rr{2}$} (4.75,0);
										
		\end{tikzpicture}
	\end{center}
\end{itemize}

History (a) is already not locally linearizable wrt pool, queue, and stack, respectively, histories (b) and (c) provide interesting examples.
The history in Figure~\ref{fig:ll-notsc} is locally 
linearizable but not sequentially consistent wrt a pool.
The following histories are locally linearizable but not sequentially 
consistent wrt a queue and a stack, respectively:
\begin{enumerate}
\item[(d)] Queue:
	\begin{center}
	\scalebox{1}{
		\begin{tikzpicture}
		
			\node (T1) at (0.25,.75) {$T_1$};
			\node (T2) at (0.25,0) {$T_2$};
		
			\draw[dotted,->] (0.75, .75) -- (9, .75);
			\draw[dotted,->] (0.75, 0) -- (9, 0);
			\draw[|-|] (1,.75) -- node[above] {$\ii{1}$} (1.75,.75);
			\draw[|-|] (2,.75) -- node[above] {$\ii{2}$} (2.75,.75);
			\draw[|-|] (3,.75) -- node[above] {$\ii{3}$} (3.75,.75);
			\draw[|-|] (4,.75) -- node[above] {$\rr{1}$} (4.75,.75);
			\draw[|-|] (5,0) -- node[above] {$\rr{2}$} (5.75,0);
			\draw[|-|] (6,0) -- node[above] {$\ii{4}$} (6.75,0);
			\draw[|-|] (7,0) -- node[above] {$\rr{4}$} (7.75,0);
			\draw[|-|] (8,0) -- node[above] {$\rr{3}$} (8.75,0);
			
		\end{tikzpicture}
		}
	\end{center}
\end{enumerate}
The two thread-induced histories $\ii{1}\ii{2}\ii{3}\rr{1}\rr{2}\rr{3}$ 
and $\ii{4}\rr{4}$ are both linearizable with respect to a queue.
However, the overall history has no sequential witness and is therefore not 
sequentially consistent:
To maintain the queue behavior, the order of operations $\rr{1}$ and $\rr{2}$ 
cannot be changed.
However, this implies that the value~$3$ instead of the value~$4$ would have 
to be removed directly after $\ii{4}$.
\begin{enumerate}
\item[(e)] Stack:
	\begin{center}
		\begin{tikzpicture}
										
			\node (T1) at (0.25,.75) {$T_1$};
			\node (T2) at (0.25,0) {$T_2$};								
											
			\draw[dotted,->] (0.75, .75) -- (7, .75);
			\draw[dotted,->] (0.75, 0) -- (7, 0);
			\draw[|-|] (1,.75) -- node[above] {$\ii{1}$} (1.75,.75);
			\draw[|-|] (2,.75) -- node[above] {$\ii{2}$} (2.75,.75);
			\draw[|-|] (3,0) -- node[above] {$\rr{2}$} (3.75,0);
			\draw[|-|] (4,0) -- node[above] {$\ii{3}$} (4.75,0);
			\draw[|-|] (5,0) -- node[above] {$\rr{1}$} (5.75,0);
			\draw[|-|] (6,0) -- node[above] {$\rr{3}$} (6.75,0);
										
		\end{tikzpicture}
	\end{center}
\end{enumerate}
The two thread-induced histories $\ii{1}\ii{2}\rr{2}\rr{1}$ and 
$\ii{3}\rr{3}$ are both linearizable with respect to a stack.
The operations $\ii{2}$ and $\rr{2}$ prevent the reordering of operations
$\ii{1}$ and $\ii{3}$.
Therefore, the overall history has no sequential witness and hence it is not 
sequentially consistent.
\end{proof}

\OurProposition{\ref{prop:pool-qc}}{Pool, QC}{Let $\history{h}$ be a pool history in which no data is duplicated, no thin-air values are returned, and no data is lost, i.e., $\history{h}$ satisfies 1.-3. of Proposition~\ref{prop:LL-pool}. Then $\history{h}$ is quiescently consistent.}

\begin{proof}
Assume $\history{h}$ is a pool history that satisfies 1.-3. of Proposition~\ref{prop:LL-pool}. 
Let $\history{h}_1, \dots, \history{h}_n$ be histories that form a \emph{sequential decomposition} of $\history{h}$. 
That is $\history{h} = \history{h}_1 \cdots \history{h}_n$ and the only quiescent states in any $\history{h}_i$ are at the beginning and at the end of it. 
Note that this decomposition has nothing to do with a thread-local decomposition. 
Let $M_i = M_{\history{h}_i}$ be the set of methods of $\history{h}_i$, for $i \in \{1,\dots, n\}$. 
Note that the sanity conditions 1.-3. ensure that none of the following two situations can happen:
\begin{itemize}
\item $\rem{x} \in M_i, \ins{x} \in M_j, j > i$, 
\item $\ins{x} \in M_i, \rem{\emptyValue} \in M_j, \rem{x} \in M_k, k > j > i$,
\end{itemize}
Let $V_i = \{x_{i,1}, \dots, x_{i,m}\}$ denote the set of values in $M_i$ ordered in a way that there is a $p$ and $q$ such that
\begin{itemize}
\item $\ins{x_{i,j}}, \rem{x_{i,j}} \in M_i$ for $j \le p$;
\item $\rem{x_{i,j}} \in M_i$ for $j > p, j \le q$; and
\item $\ins{x_{i,j}} \in M_i$ for $j > q$.
\end{itemize}
Moreover, let $r_i$ be the number of occurrences of $\rem{\emptyValue}$ in $\history{h}_i$.

We now construct a sequential history for $\history{h}$, which has the form $ \history{q} =\history{q}_1\cdots \history{q}_n$ where each sequential history $\history{q}_i$ is a permutation of $M_i$ shown in Figure~\ref{appendix:fix:A0}. 
Using the observations above, it is easy to check that $\history{q}$ is indeed a quiescent witness for $\history{h}$.
\end{proof}

\begin{figure}[tb]
$\history{q}_i = \ins{x_{i,1}}\rem{x_{i,1}}\dots \ins{x_{i,p}}\rem{x_{i,p}}\rem{x_{i,p+1}}\rem{x_{i,q}}\rem{\emptyValue}^{r_i} \ins{x_{i,q+1}} \dots \ins{x_{i,m}}.$
\caption{Sequential history $\history{q}_i$.}
\label{appendix:fix:A0}
\end{figure}

\OurTheorem{\ref{thm:qc:pool-queue-stack}}{Pool, Queue, \& Stack, QC}{For pool, local linearizability is stronger than quiescent consistency. For queue and stack, local linearizability is incomparable to quiescent consistency.}

\begin{proof} 
The following histories are quiescently consistent but not locally linearizable wrt
pool, queue, and stack, respectively:\\[1em]
\begin{minipage}{\columnwidth}
\begin{itemize}
\item[(a)] Pool:
	\begin{center}
	\scalebox{1}{
		\begin{tikzpicture}
		
			\node (T1) at (0.25,.75) {$T_1$};
			\node (T2) at (0.25,0) {$T_2$};
		
			\draw[dotted,->] (0.75, .75) -- (7.5, .75);
			\draw[dotted,->] (0.75, 0) -- (7.5, 0);
			\draw[|-|] (1,.75) -- node[above] {$\rem{\emptyValue}$} (6.75,.75);
			\draw[|-|] (1.25,0) -- node[above] {$\ins{1}$} (2.5,0);
			\draw[|-|] (2.75,0) -- node[above] {$\rem{\emptyValue}$} (5,0);
			\draw[|-|] (5.25,0) -- node[above] {$\rem{1}$} (6.5,0);
			
		\end{tikzpicture}
	}
	\end{center}
\end{itemize}
\end{minipage}
\\[1em]
\begin{minipage}{\columnwidth}
\begin{itemize}
\item[(b)] Queue:
	\begin{center}
	\scalebox{1}{
		\begin{tikzpicture}
			
			\node (T1) at (0.25,.75) {$T_1$};
			\node (T2) at (0.25,0) {$T_2$};
			
			\draw[dotted,->] (0.75, .75) -- (7.5, .75);
			\draw[dotted,->] (0.75, 0) -- (7.5, 0);
			\draw[|-|] (1,.75) -- node[above] {$\enq{1}$} (7.25,.75);
			\draw[|-|] (1.25,0) -- node[above] {$\enq{2}$} (2.5,0);
			\draw[|-|] (2.75,0) -- node[above] {$\enq{3}$} (4,0);
			\draw[|-|] (4.25,0) -- node[above] {$\deq{3}$} (5.5,0);
			\draw[|-|] (5.75,0) -- node[above] {$\deq{2}$} (7,0);
				
		\end{tikzpicture}
	}
	\end{center}
\end{itemize}
\end{minipage}
\\[1em]
\begin{minipage}{\columnwidth}
\begin{itemize}
\item[(c)] Stack:
	\begin{center}
	\scalebox{1}{
		\begin{tikzpicture}
										
			\node (T1) at (0.25,.75) {$T_1$};
			\node (T2) at (0.25,0) {$T_2$};
									
			\draw[dotted,->] (0.75, .75) -- (7.5, .75);
			\draw[dotted,->] (0.75, 0) -- (7.5, 0);
			\draw[|-|] (1,.75) -- node[above] {$\push{1}$} (7.25,.75);
			\draw[|-|] (1.25,0) -- node[above] {$\push{2}$} (2.5,0);
			\draw[|-|] (2.75,0) -- node[above] {$\push{3}$} (4,0);
			\draw[|-|] (4.25,0) -- node[above] {$\pop{2}$} (5.5,0);
			\draw[|-|] (5.75,0) -- node[above] {$\pop{3}$} (7,0);
													
		\end{tikzpicture}
	}
	\end{center}
\end{itemize}
\end{minipage}\\[1em]

In all three histories, the only quiescent states are before and after the 
longest operation.
Therefore, all operations in thread~$T_2$ can be reordered arbitrarily, in 
particular in a way such that they satisfy the sequential specification of
the respective concurrent data structure.
However, each of the thread-induced histories for thread~$T_2$ are not 
linearizable with respect to pool, queue, and stack, respectively.
Therefore, none of these histories is locally linearizable. Also here history (a) suffices.

On the other hand, the following histories are not quiescently consistent but
locally linearizable wrt queue, and stack, respectively:
\begin{itemize}
\item[(d)] Queue:
	\begin{center}
	\scalebox{1}{
		\begin{tikzpicture}
			
			\node (T1) at (0.25,.75) {$T_1$};
			\node (T2) at (0.25,0) {$T_2$};
			
			\draw[dotted,->] (0.75, .75) -- (7, .75);
			\draw[dotted,->] (0.75, 0) -- (7, 0);
			\draw[|-|] (1,.75) -- node[above] {$\enq{1}$} (2.25,.75);
			\draw[|-|] (2.5,0) -- node[above] {$\enq{2}$} (3.75,0);
			\draw[|-|] (4,.75) -- node[above] {$\deq{2}$} (5.25,.75);
			\draw[|-|] (5.5,0) -- node[above] {$\deq{1}$} (6.75,0);
				
		\end{tikzpicture}
	}
	\end{center}
\item[(e)] Stack:
	\begin{center}
	\scalebox{1}{
		\begin{tikzpicture}
										
			\node (T1) at (0.25,.75) {$T_1$};
			\node (T2) at (0.25,0) {$T_2$};
									
			\draw[dotted,->] (0.75, .75) -- (7, .75);
			\draw[dotted,->] (0.75, 0) -- (7, 0);
			\draw[|-|] (1,.75) -- node[above] {$\push{1}$} (2.25,.75);
			\draw[|-|] (2.5,0) -- node[above] {$\push{2}$} (3.75,0);
			\draw[|-|] (4,.75) -- node[above] {$\pop{1}$} (5.25,.75);
			\draw[|-|] (5.5,0) -- node[above] {$\pop{2}$} (6.75,0);													
		\end{tikzpicture}
	}
	\end{center}
\end{itemize}

In histories~(d) and~(e), between each two operations, the concurrent data 
structure is in a quiescent state.
Therefore, none of the operations can be reordered and, hence, no sequential
witness exists.
However, all thread-induced histories are linearable and, 
therefore, the overall histories are locally linearizable.
In particular, on a history where each pair of operations is separated by a 
quiescent state, i.e., there is no overlap of operations, a quiescent 
consistent data structure behaves as it would be linearizable with respect to its 
sequential specification and we see the same semantic differences to local
linearizability as we see between linearizability and local linearizability.
\end{proof}

\section{Case Study: Work Stealing Queues} 
\label{sec:WSQ}

Consider a data structure $D$ which admits two operation types: $\ins{x}$, which
inserts the element $x$ into the container, and $\rem{}$, which returns and
removes an element from the container.
Now imagine that the implementation uses a Work Stealing Queue (WSQ)~\cite{Michael:PPOPP09}.
Every thread $T$ that uses $D$ has its unique designated buffer $Q_T$ in the WSQ.
Whenever thread $T$ calls $\ins{x}$, $x$ is appended to the tail of $Q_T$.
When $T$ calls $\rem{}$, WSQ first checks whether $Q_T$ is non-empty; if it is,
then it returns the element at the tail of $Q_T$ (LIFO semantics) and removes
it.
Otherwise, it chooses some other $Q_{T'}$ and tries to return an element from
that buffer.
But any time a different thread's buffer is checked, the element to be removed is
taken from the head (FIFO semantics).
If $T$ and $T'$ are both trying to access the same buffer at the same time, then
usual synchronization measures are taken to ensure that exactly one thread
removes one element.

Given this implementation, the developer of $D$ wants to write a specification
for the potential users of $D$.
Since $D$ is essentially a collection of deques, the developer is tempted to
state that $D$ is a deque with a particular consistency condition.
However, $D$ is not a linearizable deque because $\ins{x}$ by $T$ followed by
$\ins{y}$ by $T'$ followed by $\rem{}$ returns either $x$ or $y$ depending on
whether $T$ or $T'$ calls it; i.e. $\rem{}$ has ambiguous semantics.
$D$ can be seen as a sequentially consistent (SC) deque but then $D$ does not allow
many behaviors that an SC deque would allow; i.e. SC does not capture the
behaviors of $D$ tightly.
Relaxed sequential specifications will not work either since $D$ does converge
to sequential semantics (of a LIFO stack) when a single thread uses it.
In short, the developer will fail to capture the semantics of $D$ in a
satisfactory manner.

$D$ on the other hand is a locally linearizable deque in which $\rem{}$ by $T$
from $Q_{T'}$ is treated as FIFO removal whenever $T\neq T'$ and as LIFO
removal whenever $T=T'$.
In other words, local linearizability provides a succinct and clean representation of a
well-known implementation framework (WSQ) hiding away implementation details.
Compare this with the fact that even though WSQ has a {\em queue} in it, to
argue its {\em correctness} it is proved to be a linearizable pool even though
it has stronger semantics than a pool; i.e. linearizable pool semantics is too
weak for $D$.
Observe also that since what we have described in the example is essentially providing the
illusion of using a monolithic structure which is implemented in terms of
distributed components (shared memory is typically implemented on message
passing), we expect local linearizability to be widely applicable.

\section{Quiescent Consistency \& Quantitative Quiescent Consistency}
\label{appendix:qqc}

Without going into the details of the definition of quantitative quiescent 
consistency we give a history in Figure~\ref{fig:qqc-notll} that is quantitatively 
quiescently consistent but not locally linearizable wrt a queue.
Quantitative quiescent consistency allows to reorder the two insert-operations
in thread~$T_2$ and thereby violates local linearizability.

\begin{figure}[t]
	\centering
	\scalebox{1}{
	  \begin{tikzpicture}
		
		  \node (T1) at (0.25,.75) {$T_1$};
		  \node (T2) at (0.25,0) {$T_2$};
	
		  \draw[dotted,->] (0.75, .75) -- (7.75, .75);
		  \draw[dotted,->] (0.75, 0) -- (7.75, 0);
	  	\draw[|-|] (1,.75) -- node[above] {$\enq{3}$} (3.75,.75);
  		\draw[|-|] (1.25,0) -- node[above] {$\enq{2}$} (2.25,0);
		  \draw[|-|] (2.5,0) -- node[above] {$\enq{1}$} (3.5,0);
		  \draw[|-|] (4,0) -- node[above] {$\deq{1}$} (5,0);
	  	\draw[|-|] (5.25,0) -- node[above] {$\deq{2}$} (6.25,0);
  		\draw[|-|] (6.5,0) -- node[above] {$\deq{3}$} (7.5,0);
		
	  \end{tikzpicture}
	}
	\caption{History that is QQC but not LL.}
	\label{fig:qqc-notll}
\end{figure}
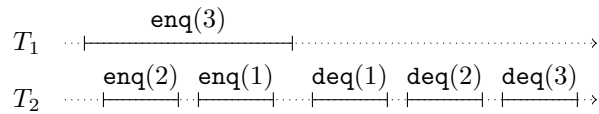

\section{Consistency Conditions for Distributed Shared Memory}
\label{appendix:ccdsm}

\begin{landscape}
\begin{table}
\begin{tabular}{l@{\hspace{1.7em}}l@{\hspace{1.7em}}c@{\hspace{1.7em}}l@{\hspace{1.7em}}l@{\hspace{1.7em}}l@{\hspace{1.7em}}c}
\toprule

\textbf{Consistency} &                                  
\textbf{Decomposition per} &                                
\textbf{\#SHs} &                                         
\textbf{Write-Operations} \textbf{I}$_\history{h}(i)$ & 
\textbf{Read-Operations} \textbf{O}$_\history{h}(i)$ &  
\textbf{CCfSH} &                                         
\textbf{LoD}                                            
\\

\textbf{Condition} &                                    
&                                          
\\

\midrule
LL        & thread              & $n$         & $\{ \op{ins}(v) \in \history{h}|T_i \mid v \in V \}$ & $\{\op{head}(v_{init}) \in \history{h}\}$ & Lin.   & no \\
          &                     &             & & $\cup\ \{ \op{head}(v) \in \history{h} \mid \op{ins}(v) \in \op{I}_\history{h}(i) \}$       & & \\[5pt]
Coherence & memory location     & $k$         & $\{ \op{ins}(v) \in \history{h} \mid v \in V \}$ & $\{ \op{head}(v) \in \history{h} \mid v \in V \}$ & SC     & yes \\[5pt]
PRAM      & thread              & $n$         & $\{ \op{ins}(v) \in \history{h} \mid v \in V \}$ & $\{ \op{head}(v) \in \history{h}|T_i \mid v \in V \}$ & SC     & yes\\[5pt]
PC        & thread              & $n$         & $\{ \op{ins}(v) \in \history{h} \mid v \in V \}$ & $\{ \op{head}(v) \in \history{h}|T_i \mid v \in V \}$ & SC$^a$ & yes \\[5pt]
CC        & thread              & $n$         & $\{ \op{ins}(v) \in \history{h} \mid v \in V \}$ & $\{ \op{head}(v) \in \history{h}|T_i \mid v \in V \}$ & SC$^b$ & yes\\[5pt]
LC        & thread \&  memory location          & $n \cdot k$ & $\{ \op{ins}(v) \in \history{h}|T_i \mid v \in V \}$ & $\{ \op{head}(v) \in \history{h}|T_i \mid v \in V \}$ & SC$^c$ & yes\\
          &     &             & $\cup\ \{ \op{ins}(v) \in \history{h} \mid \op{head}(v) \in \history{h}|T_i \}$    & &\\[2pt]
\bottomrule
\end{tabular}
\begin{flushleft}
SC$^a$: SC and $\op{ins}$-operations are in the same order for each witness.\\
SC$^b$: SC and $\op{ins}$-operations are ordered by the transitive closure of the thread program orders and write-read pairs.\\
SC$^c$: SC and $\op{ins}$-operations from threads other than~$T_i$ can be reordered even if they are from the same thread and only logical contradictions in the local history are considered for consistency.\\
$n$: number of threads, 
$k$: number of memory locations, \\
\textbf{\#SHs}: number of subhistories, 
\textbf{CCfSH}: consistency condition for subhistories, 
\textbf{LoD}: loss of data
\end{flushleft}
\caption{Comparison of consistency conditions for a single distributed shared memory location, i.e., $k = 1$}
\label{tbl:ll-vs-dsmcc}
\end{table}
\end{landscape}

In Table~\ref{tbl:ll-vs-dsmcc} we compare local~linearizability~(LL) against
the consistency conditions coherence~\cite{Ahamad:SPAA93}, 
pipelined~RAM~(PRAM) consistency~\cite{Lipton:TR88,Lipton:patent93,Steinke:JACM04,Ahamad:DC95},
processor~consistency~(PC)~\cite{Ahamad:SPAA93,Goodman:TR91},
causal~consistency~(CC)~\cite{Ahamad:DC95}, and 
local~consistency~(LC)~\cite{Heddaya:TR92}.
Local linearizability shares with all these consistency conditions the idea of 
decomposing a concurrent history into several subhistories.

Coherence projects a concurrent history to the operations on a \emph{single 
memory location} and each resulting history has to be sequentially consistent.
Since sequential consistency is not compositional, coherence does not imply
sequential consistency for the overall history~\cite{Ahamad:SPAA93} whereas 
local linearizability for each single memory location implies local 
linearizability for the overall history.

In contrast to coherence and local consistency, local linearizability, PRAM 
consistency, PC, and CC all decompose the history into per-thread subhistories, i.e., 
if there are $n$ threads then these conditions consider $n$ subhistories and need $n$ sequential 
witnesses.
Coherence requires one witness per memory location and local 
consistency requires one witness per thread and memory location.

For determining the subhistory for a thread~$T_i$, coherence, PRAM consistency, 
PC, and CC consider all write-operations in a given history, i.e.,
$\op{I}_\history{h}(i) = \{ \op{ins}(v) \in \history{h} \mid v \in V \}$.
In contrast, local linearizability only considers the write-operations in 
thread~$T_i$, i.e., $\op{I}_\history{h}(i) = \{ \op{ins}(v) \in \history{h}|T_i \mid v \in V \}$ and local consistency considers
all write-operations in thread~$T_i$ as well as all write-operations whose 
values are read in thread~$T_i$, i.e., $\op{I}_\history{h}(i) = 
\{ \op{ins}(v) \in \history{h}|T_i \mid v \in V \} \cup \{ \op{ins}(v) \in \history{h} \mid \op{head}(v) \in \history{h}|T_i \}$.
Regarding read-operations, PRAM consistency, PC, CC, and LC consider only the
read-operations in thread~$T_i$.
Coherence considers all read-operations in a given history and local 
linearizability only considers read-operations that read the initial 
value~$v_\mathit{init}$ and read-operations that read values that were written 
by a write-operation in thread~$T_i$. Reading the initial value is analogous to returning empty in a data structure.

Local linearizability requires that each subhistory, i.e., thread-induced 
history, is linearizable with respect to the sequential specification under 
consideration.
In contrast, coherence, PRAM consistency, PC, CC, and LC require that each subhistory
is sequentially consistent (or a variant thereof) with respect to the 
sequential specification.
However, the variants of sequential consistency that are used by these 
consistency conditions are vulnerable to a loss of data as discussed in 
Section~\ref{sec:LL-SC-comparison} and, therefore, make these consistency 
conditions unsuitable for concurrent data structures.

When considering PRAM consistency, the sequentialization of the 
write-operations of different threads might be observed differently by 
different threads, e.g., a thread~$T_1$ might observe all write operations of 
thread~$T_2$ before the write operations of thread~$T_3$ but a thread~$T_4$ 
might observe all write operations of $T_3$ before the write operations of 
$T_2$.
In contrast, thread-induced histories as defined by local linearizability do 
not involve write-operations from other threads but involve (some) 
read-operations performed by other threads.
Like PRAM consistency, processor consistency requires for each thread~$T_i$ 
that the read- and write-operations performed by $T_i$ are seen in $T_i$'s 
program order and that the write-operations performed by other threads are 
seen in their respective program order.
Furthermore, processor consistency also requires that two write-operations to 
the \emph{same memory location} appear in the same order in each sequential 
witness of each thread even if they are from different 
threads~\cite{Ahamad:SPAA93,Goodman:TR91}.
This additional condition makes processor consistency strictly stronger than 
PRAM consistency~\cite{Ahamad:SPAA93}.
This condition also creates a similar effect as the consideration of
read-operations in different threads when forming the thread-induced history
in local linearizability.
Causal consistency considers a \emph{causal order} 
instead of the thread program orders alone.
Like local linearizability, causal consistency matches write-read pairs 
across different threads.
In particular, the causal order is the transitive closure of the thread 
program orders and write-read pairs.
By considering the causal order, writes from different threads can become 
ordered which is not the case for local linearizability.

\section{LLD and LL$^+$D Implementation Details} \label{sec:impl-details}

As already mentioned, each thread
inserts elements into a local backend and removes elements either from its local
backend (preferred) or from other backends (fall-back) accessed through a single
segment (thread-indexed array), effectively managing single-producer/multiple-consumer backends
for a varying number of threads.

The segment is dynamic in length (with a
predefined maximum). A slot in this segment
refers to a {\it node} that consists of a backend and a flag
indicating whether the corresponding thread is alive or has terminated. Similar
to other work~\cite{Afek:OPODIS10,Heller:OPODIS05} the flag is used for
logically removing the node from the segment (it stays in the segment until its backend
is empty). Additionally, a (global) version number keeps track of all changes in
the segment. The algorithm is divided into two parts: (1) maintaining
the segment, and (2) adding and removing elements to backends.

In the following we refer to the segment as $s$, a thread's~$T_i$ local node as $n_i$,
the version number of the segment as $v$ and the current length of the segment
as $\ell$. The range of indices $r$ is then defined as $0 \le r < \ell$.

\sloppypar{For maintaining the segment we provide two methods {\tt announce\_thread()}
and {\tt cleanup\_thread(node)} that are used to add and remove nodes to the
segment. Upon removal of a
node the segment is also compacted, i.e., the hole that is created by removing a
node pointer is filled with the last node pointer in the segment. As nodes are
added and removed the length of the segment $\ell$ and thus the range of valid
indices $i$ of the segment, $0 \le i < \ell$, is updated. All changes to the
segment involve incrementing the version number.}

More detailed, the operations for maintaining the segment and compacting it as
nodes are cleaned up are:

\newcommand{\announce}[1]{{\tt announce\_thread(#1)\xspace}}
\newcommand{\cleanup}[1]{{\tt cleanup\_thread(#1)\xspace}}

\tikzset {
	slot/.style={draw, minimum width=0.5cm, minimum height=0.5cm},
	thick slot/.style={draw, minimum width=0.5cm, minimum height=0.5cm, thick},
	n-s/.style={draw, minimum width=1.2cm, minimum height=0.4cm}
}

\newcommand{\drawSlots}[2]{
	\foreach \i in {#1,...,#2} { \node (p\i) at (.5*\i, 0) [thick slot] {}; }
	\draw[->] (0, .5) -- (0, .25) node[above, yshift=.23cm] {\scriptsize $0$};
	\draw[->] (.5*#2, .5) -- (.5*#2, .25) node[above, yshift=.2cm] {\scriptsize $\ell-1$};
	\draw[dashed] (1.75, .25) -- (3, .25) -- (3, -.25) -- (1.75, -.25);
}
\newcommand{\drawDottedArrows}[2]{
	\foreach \i in {#1,...,#2} { \draw[->, dotted] (.5*\i, 0) -- (.5*\i, -.5); }
}
\newcommand{\drawState}[2]{
	\node (p) at (1.5, 1.1) {\scriptsize $\ell = #1$, $v = #2$};
  \node (segment) at (-.5, 0) {\scriptsize $s$:};
}

\begin{itemize}
	\item \announce{}: Allocates a node for the thread as follows: searches for an existing node of a terminated thread and reuses it if it finds one; otherwise it creates a new node, adds the node to~$s$, and adjusts $\ell$.
	In both cases it then increments $v$ and returns the node. The creation of new node is illustrated in Figure~\ref{fig:announce-thread}.

\begin{figure*}
\centering
\begin{minipage}{0.3\textwidth}
\centering
\begin{tikzpicture}
  \drawState{4}{4}
  \drawSlots{0}{3}
  \drawDottedArrows{1}{3}

	\draw[->] (0, 0) -- (0, -.7);
	\node (nodelabel) at (-.9, -1.1) {\scriptsize $n$:};

	\node (n1) at (0, -.9) [n-s] {\tiny $backend$};
	\node (n2) at (0, -1.3) [n-s] {\tiny $alive$};

	\draw[->] (2.75, .5) -- (2.75, .25) node[above, yshift=.23cm] {\scriptsize $max$};

\end{tikzpicture}
\begin{center}(a) Initial state\end{center}
\end{minipage}
\hfill
\begin{minipage}{0.3\textwidth}
\centering
\begin{tikzpicture}
  \drawState{4}{4}
  \drawSlots{0}{3}
  \drawDottedArrows{0}{3}

	\node (s3) at (2, 0) [slot] {};
  \draw[->] (2, 0) -- (2, -.7);

  \node[draw, minimum width=1.2cm, minimum height=0.8cm] (n3) at (2, -1.1)  {};
 	\node (nodelabel) at (1.1, -1.1) {\scriptsize $n_i$:};

\end{tikzpicture}
\begin{center}(b) Add new node $n_i$\end{center}
\end{minipage}
\hfill
\begin{minipage}{0.3\textwidth}
\centering
\begin{tikzpicture}
  \drawState{5}{5}
  \drawSlots{0}{4}
  \drawDottedArrows{0}{3}

  \draw[->] (2, 0) -- (2, -.7);
  \node[draw, minimum width=1.2cm, minimum height=0.8cm] (n3) at (2, -1.1)  {};
 	\node (nodelabel) at (1.1, -1.1) {\scriptsize $n_i$:};

\end{tikzpicture}
\begin{center}(c) Adjust $\ell$, then $v$\end{center}
\end{minipage}
\caption{Segment modifications throughout \announce{}.}
\label{fig:announce-thread}
\vspace{-1em}
\end{figure*}

\item \cleanup{Node n}: Searches for the node $n$ in $s$ using linear search.
  If it finds $n$ at slot $j$, it copies the pointer of $s[\ell -1]$ to $s[j]$,
  decrements $\ell$, increments $v$, and resets $s[\ell]$ to {\tt null} using the new $\ell$. If
  $n$ is not found, then a concurrent thread has already performed the
  cleanup and the operation just returns.  Figure~\ref{fig:cleanup-thread}
  illustrates an example where initially $\ell=5$, the thread owning the node
  at $s[0]$ is dead and the corresponding backend is empty.

\begin{figure*}
\centering
\begin{minipage}{0.3\textwidth}
\centering
\begin{tikzpicture}
  \drawState{5}{5}
  \drawSlots{0}{4}
  \drawDottedArrows{1}{4}

	\draw[->] (0, 0) -- (0, -.7);
	\node (n1) at (0, -.9) [n-s] {\tiny $backend$};
	\node (n2) at (0, -1.3) [n-s] {\tiny $dead$};
	\node (nodelabel) at (-.9, -1.1) {\scriptsize $n$:};

\end{tikzpicture}
\begin{center}(a) $n$ at $s[0]$ is empty and dead\end{center}
\end{minipage}
\hfill
\begin{minipage}{0.3\textwidth}
\centering
\begin{tikzpicture}
  \drawState{5}{5}
  \drawSlots{0}{4}
  \drawDottedArrows{1}{3}

	\draw[->] (0, 0) to[out=-90, in=180, looseness=.5] (1.2, -.6) to[out=0, in=130,looseness=1.5] (2, -.7);
	\node (n1) at (0, -.9) [n-s] {\tiny $backend$};
	\node (n2) at (0, -1.3) [n-s] {\tiny $dead$};
	\node (nodelabel) at (-.9, -1.1) {\scriptsize $n$:};

	\node (s3) at (2, 0) [thick slot] {};
  \draw[->] (2, 0) -- (2, -.7);
  \node[draw, minimum width=1.2cm, minimum height=0.8cm] (n3) at (2, -1.1)  {};
 	\node (nodelabel2) at (1.1, -1.1) {\scriptsize $m$:};

\end{tikzpicture}
\begin{center}(b) Write $s[\ell-1]$ into $s[0]$\end{center}
\end{minipage}
\hfill
\begin{minipage}{0.3\textwidth}
\centering
\begin{tikzpicture}
  \drawState{4}{6}
  \drawSlots{0}{3}
  \drawDottedArrows{1}{3}

	\draw[->] (0, 0) -- (0, -.7);
	\node[draw, minimum width=1.2cm, minimum height=0.8cm] (n3) at (0, -1.1)  {};
	\node (nodelabel) at (-.9, -1.1) {\scriptsize $m$:};

	\node (unusedslot) at (2, 0) [slot] {};

\end{tikzpicture}
\begin{center}Adjust $\ell$ then $v$\end{center}
\end{minipage}
\caption{Segment modifications throughout \cleanup{Node n}.}
\label{fig:cleanup-thread}
\vspace{-1em}
\end{figure*}

\end{itemize}

Note that updating the segment state is only needed when threads are joining or
when backends of terminated threads become empty. We consider both scenarios as infrequent and implement the
corresponding operations using locks. Alternatively those operations can be
implemented using helping approaches, similar to wait-free
algorithms~\cite{Kogan:PPOPP11}. Also note that although operations on segments
are protected by locks, partial changes can be observed, e.g., a remove
operation (as defined below) can observe a segment in an intermediate state with
two pointers pointing to a node during cleanup. The invariant is that no change
can destroy the integrity of the segment within the valid range, i.e., all slots
within the range either point to a valid node or nothing ({\tt null}).

The actual algorithm for adding and removing elements is then defined as
follows:
\begin{itemize}
	\item \ins{}: Upon first insertion, a thread $T_i$ gets assigned a node $n_i$
	(containing backend $b_i$) using \announce{}. The element is then inserted into $b_i$. Subsequent
	insertions from this thread will use $n_i$ throughout the lifetime of the
	thread.
	\item \rem{}: The remove operation consists of two parts: (a) finding
	and removing an element and (b) cleaning up nodes of terminated threads. For
	(a) a thread $T_i$ tries to get an element from its own backend in $n_i$. If
	$n_i$ does not exist (because the thread has not yet performed a single \ins{}
	operation) or the corresponding backend is empty, then a different node $n$ is
	selected randomly within the valid range. If the backend contained in $n$ is
	empty, the operation scans all other nodes' backends in linear fashion.
	However, if the version number changed during the round of scanning through
	all backends, the operation is restarted immediately. 
	Note that since $\ell$ is dynamic a remove operation may operate on a range that is
  no longer valid. Checking the version number ensures that the operation is
  restarted in such a case. 
	For (b) a thread calls
	{\tt cleanup\_thread($n$)} upon encountering a node $n$ that has its alive-flag set to
	false (dead) and contains an empty backend. A cleanup also triggers a restart of
	the remove operation.
	\item ${\tt terminate()}$: Upon termination a thread $T_i$ changes the alive
	flag of $n_i$ to false (dead).
\end{itemize}

Dynamic memory used for nodes is susceptible to the ABA problem and requires
proper handling to free memory. Our implementations use 16-bit ABA counters to
avoid the ABA problem and refrain from freeing memory. Hazard
pointers~\cite{Michael:TPDS04} can be used for solving the ABA problem as well
as for freeing memory.

\subsection{LL+D: LLD with Linearizable Emptiness Check}
\label{sec:llplusd}

We call a data structure implementation $\Phi$ \emph{stateful} if the remove methods of $\Phi$ can be modified to return a so-called \emph{state} that changes upon an insert or a remove of an element, but does not change between two removes that return empty unless an element has been inserted in the data structure in the meantime.  
For stateful implementations $\Phi$ we can create the locally linearizable version with linearizable emptiness check LL$^+$D $\Phi$. 
Michael-Scott queue~\cite{Michael:PODC96} and Treiber 
stack~\cite{Treiber86}  are stateful implementations, whereas LCRQ~\cite{Morrison:PPOPP13} is not.
Also TS stack~\cite{Dodds:POPL15}, and $k$-FIFO~\cite{Kirsch:PACT13} and 
$k$-Stack~\cite{Henzinger:POPL13} are stateful implementations, but the notion of a state in these data structures is huge 
making it unsuitable for LL$^+$D.

For LL$^+$D implementations, linearizable emptiness checks are achieved via an atomic snapshot~\cite{Herlihy:AMP08}, just like for DQs. A detailed description of the LLD and LL$^+$D implementations, as well as the pseudo code, can be found in the appendix. 
Here, we only present the results of the experimental performance evaluation.

\subsection{Correctness of LL$^+$D}\label{sec:LL+D-corr}

\begin{proposition}[LLD and LL$^+$D]
\label{prop:LLD-and-LL$^+$D}
Let $\Phi$ be a stateful data structure implementation that is linearizable with respect to a sequential specification $S_\Phi$. Then LL$^+$D $\Phi$ is linearizable with respect to a pool.
\end{proposition}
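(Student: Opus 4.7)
\medskip
\noindent
\textbf{Proof plan for Proposition~\ref{prop:LLD-and-LL$^+$D}.}
The plan is to exhibit, for every history $\history{h}$ of LL$^+$D~$\Phi$, a linearization $\history{s} \in S_P$ witnessing linearizability with respect to a pool. I will build $\history{s}$ out of three ingredients: the per-thread backend linearizations supplied by linearizability of $\Phi$ (they exist because each backend $\Phi_T$ is an instance of $\Phi$), the data-centric decomposition coming from local linearizability of LL$^+$D~$\Phi$ with respect to $S_\Phi$, and the real-time snapshot point provided by the linearizable emptiness check of LL$^+$D.

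First I would observe that LL$^+$D~$\Phi$ is locally linearizable with respect to $S_\Phi$: the argument is identical to the one for Proposition~\ref{prop:LLD}, because the atomic snapshot mechanism only adds synchronization between empty-removes and backend states, without altering the fact that each thread-induced history $\history{h}_T$ is a backend history of $\Phi_T$. Then, by Proposition~\ref{prop:DS-are-closed-under-projection} combined with Equation~(\ref{eq:cont-obs}), projecting each $\history{h}_T$ to insertions and removals gives a $S_P$-valid sequence, so by Proposition~\ref{prop:LL-pool} I get the three sanity properties for free: no duplicates, no out-of-thin-air values, and (for non-empty removes) no lost values.

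To construct $\history{s}$, for every thread $T$ I pick a linearization of the backend history on $\Phi_T$ that respects real-time; this assigns a linearization timestamp inside its execution interval to every insertion and every non-empty removal. For each empty-remove $\rem{e}$ I assign as its linearization timestamp the real-time point at which the atomic snapshot inspecting all backends was taken; such a point exists in the execution interval of $\rem{e}$ by the definition of LL$^+$D. Ordering all events by these timestamps yields $\history{s}$; it is a permutation of a completion $\history{h}_c$ of $\history{h}$ and preserves $\precOrder{\history{h}_c}$ by construction. It remains to verify axioms (1)--(3) of pool: (1) follows from uniqueness of values; (2) follows from the inserting-backend structure (a removal of $x$ must occur after $\ins{x}$ completes at some backend); the delicate part is (3).

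The main obstacle is axiom (3), that is, proving $\ii{x} \abOrder{\history{s}} \rr{e} \Rightarrow \rr{x} \abOrder{\history{s}} \rr{e}$ for every empty-remove $\rr{e}$. For this I would exploit that $\Phi$ is \emph{stateful}: the snapshot of LL$^+$D reads, for each backend, a state that changes on every insertion or removal, so a coherent snapshot of ``all backends empty'' corresponds to a single real-time instant at which every backend was in an empty state consistent with its own linearization. Hence, at the snapshot timestamp of $\rr{e}$, every $\ins{x}$ whose linearization point is earlier must have been matched by a linearized $\rem{x}$ on the same backend strictly before the snapshot point; this is exactly $\rr{x} \abOrder{\history{s}} \rr{e}$. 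The real work of the proof is to make this argument precise by leveraging the statefulness hypothesis to rule out interleavings where a backend was transiently empty at the snapshot read but had a pending, already-linearized insertion outstanding.
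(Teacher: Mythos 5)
Your plan follows essentially the same route as the paper's proof: non-empty operations are linearized via the per-backend linearizations of $\Phi$, and each empty-returning remove is linearized at the instant witnessed by a validated atomic snapshot of all backend states, with statefulness guaranteeing that an unchanged state between the two reads means no element was inserted in between. The paper's own proof is little more than this sketch (it defers to the distributed-queue argument of Haas et al.), so in substance you are aligned.

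One concrete point you gloss over, and which the paper's proof explicitly handles, is hidden in your sentence ``such a point exists in the execution interval of $\rem{e}$ by the definition of LL$^+$D.'' The set of backends is \emph{dynamic}: a thread may announce a fresh backend and insert into it while the snapshot loops are running, in which case that backend is never read by either loop and its state is never validated, so ``all inspected backends empty'' would not imply ``all backends empty.'' The implementation closes this hole by including the segment's version number in the snapshot and invalidating the snapshot whenever the segment changes; your axiom-(3) argument needs this ingredient in addition to statefulness, otherwise the claimed single real-time instant at which \emph{every} backend is empty need not exist. With that addition your argument matches the paper's.
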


\begin{proof} 
Proving that LL$^+$D $\Phi$ is linearizable with respect to pool, in
particular that it has a linearizable emptiness check, follows the proof for
DQ in general, see~\cite{Haas:CF13}: The emptiness check is performed by creating an
atomic snapshot~\cite{Herlihy:AMP08} of the states of all backends (stored in
the {\tt states} array) using the first loop (lines~\ref{lst:dyn-dq-it1-start}-\ref{lst:dyn-dq-it1-end}). If the atomic snapshot is valid (checked via the second loop, lines~\ref{lst:dyn-dq-it2-start}-\ref{lst:dyn-dq-it2-end}, in particular
line~\ref{lst:dyn-dq-state-check}) and all backends are empty in this atomic
snapshot, then there existed a point in time during the creation of the atomic
snapshot where all backends were indeed empty. 

Notice that since the segment is dynamic in length it can happen that some
backends are not contained in the atomic snapshot.  To guarantee that no
elements are missed in the emptiness check the atomic snapshot is extended by
the version number $v$ of the segment. If a new backend is added to the segment
during the generation of the atomic snapshot, then the version number is
increased and the atomic snapshot becomes invalid
(line~\ref{lst:dyn-dq-version-check}).

The linearization point of the remove operation that returns empty is inbetween the two loops
(the last remove attempt of the first loop) if the version check and second loop go through.
\end{proof}

\subsection{LLD Pseudo Code}

All implementations use the interfaces depicted in
Listing~\ref{lst:interfaces}. For simplicity, the interface only mentions pool, queue, and stack.  The highlighted code refers to linearizable emptiness check, i.e., it is only part of the LL$^+$D implementations:  Methods
retrieving elements (e.g. {\tt rem}) are assumed (or modified when possible) to also return a {\tt State}
object that uniquely identifies the state of the data structure with respect to
methods inserting elements (e.g. {\tt ins}). The same state can be accessed via
the {\tt get\_state()} observer method.

\begin{figure}[tb]
\begin{lstlisting}[caption={Pool, queue, and stack interfaces},captionpos=b,
                   label={lst:interfaces},frame=tb]
Pool {
	<Element(*\scriptsize\ttfamily\colorbox{light-gray}{, State}*)> rem();
	void ins(Element e);
	(*\scriptsize\ttfamily\colorbox{light-gray}{State get\_state();}*)
}

Queue : Pool {
	<Element(*\scriptsize\ttfamily\colorbox{light-gray}{, State}*)> dequeue();
	void enqueue(Element e);
	void ins(Element e) => enqueue(e);
	<Element(*\scriptsize\ttfamily\colorbox{light-gray}{, State}*)> rem() => dequeue();
}

Stack : Pool {
	<Element(*\scriptsize\ttfamily\colorbox{light-gray}{, State}*)> pop();
	void push(Element e);
	void ins(Element e) => push(e);
	<Element(*\scriptsize\ttfamily\colorbox{light-gray}{, State}*)> rem() => pop();
}
\end{lstlisting}
\end{figure}

Listing~\ref{lst:dynamic-loc-lin-dq-1} illustrates the pseudo-code for
maintaining the segment. The backend on
line~\ref{lst:dyn-dq-backend} can either be declared as {\tt Stack} or {\tt
Queue} as defined in Listing~\ref{lst:interfaces} (or any other linearizable data structure).

\begin{figure}[tb]
\begin{lstlisting}[caption=Node and segment structure for LLD and LL$^+$D (queue or stack),captionpos=b,
									 label={lst:dynamic-loc-lin-dq-1},frame=tb]
Node {
 (* \label{lst:dyn-dq-backend} *)Pool backend;  // Any linearizable data structure.
	Bool alive;
}

Segment {
	Node nodes[MAX_THREADS];
	Int l = 0;
	Int version = 0;

	// Returns all indexes between 0 and l (exclusive) in random order.
	[Int] range();

	// Announces a node in the buffer, effectively adding it to nodes_,
	// adjusting l, and changing the version.
	Node announce_thread() {
		segment_lock(); // Protecting against concurrent announce or cleanup operations.
    Node n = find_dead_node();
    if n == null {
      n = Node(b: Backend());
    	nodes[l] = n;
		  l++;
    }
    n.alive = true;
		version++;
		segment_unlock();
    return n;
	}

	// Removes a node from the buffer, effectively removing it from nodes_,
	// adjusting l, and changing the version.
	void cleanup_thread(Node n, Int old_version) {
		segment_lock(); // Protecting against concurrent announce or cleanup operations.
		<j, error> = find_node_in_segment(n);
		if error || n.alive || old_version != version {
			segment_unlock();
			return;
		}
		nodes[j] = nodes[l-1];
		l--;
		version++;
		nodes[l] = null;
		segment_unlock();
	}
}
\end{lstlisting}
\end{figure}

Listing~\ref{lst:dynamic-loc-lin-dq-2} shows the pseudo-code for LL$^+$D.
When removing the highlighted code, we obtain the code for LLD.
Each thread maintains its own backend, enclosed in a thread-local node
(line~\ref{lst:dyn-dq-local-node}), for insertion. The local backend is always
accessed through {\tt get\_local\_node} (line~\ref{lst:dyn-dq-get-node}). This
method also makes sure that a thread is
 announced
(line~\ref{lst:dyn-dq-announce}) upon first insertion and acquires a node. An \ins{} operation then
always uses a thread's local backend
(line~\ref{lst:dyn-dq-get-node-call}~and~\ref{lst:dyn-dq-put-local}) for
insertion. For removing an element in \rem{}, a thread tries to remove an
element from its local backend first
(line~\ref{lst:dyn-dq-get-1}-\ref{lst:dyn-dq-get-2}). If no element can be
found, all backends in the valid range are searched in a linear fashion,
starting from a random index.
The highlighted code (lines~\ref{lst:dyn-dq-it2-start}-\ref{lst:dyn-dq-it2-end})
illustrates checking the atomic snapshot for LL$^+$D.

\begin{figure}[tb!]
\begin{lstlisting}[frame=t]
DynamicLocallyLinearizableDQ {
  Segment s;
  thread_local Node local_node;(*\label{lst:dyn-dq-local-node}*)

  Node get_local_node(Bool create_if_absent) {(*\label{lst:dyn-dq-get-node}*)
	  if (create_if_absent) && (local_node == null) {(*\label{lst:dyn-dq-check-node}*)
      local_node = s.announce_thread();(*\label{lst:dyn-dq-announce}*)
		}
		return local_node;
	}

  void ins(Element e) {
    n = get_local_node(create_if_absent: true);(*\label{lst:dyn-dq-get-node-call}*)
    n.backend.ins(e);(*\label{lst:dyn-dq-put-local}*)
  }

  Element rem() {
  	// Fast path of retrieving an element from the thread-local backend.
  	n = get_local_node(create_if_absent: false);(*\label{lst:dyn-dq-get-1}*)
  	if n != null {
  		<e(*\scriptsize\ttfamily\colorbox{light-gray}{, state}*)> = n.backend.rem();
  		if e != null { return e; }
  	}(*\label{lst:dyn-dq-get-2}*)
  	while true {
  		retry = false;
	  	old_version = s.version;
	  	range = s.range();(*\label{lst:dyn-dq-range}*)
	  	for i in range {(*\label{lst:dyn-dq-it1-start}*)
	  		n = s.nodes[i];
        if old_version != s.version { 
          retry = true; break; }
        Bool alive = n.alive;
	  		<e(*\scriptsize\ttfamily\colorbox{light-gray}{, state}*)> = n.backend.rem();
	  		if e == null {
	  			(*\scriptsize\ttfamily\colorbox{light-gray}{states[i] = state;}*)
	  			if !alive { 
            s.cleanup_thread(n, old_version); 
            retry = true; break; 
          }
	  		} else {
	  			return e;
	  		}
	  	}(*\label{lst:dyn-dq-it1-end}*)
	  	if retry { continue; } 
	  	if old_version != s.version {  continue; }(*\label{lst:dyn-dq-version-check}*)
\end{lstlisting}
\vspace{-.4cm}
\begin{lstlisting}[firstnumber=last,backgroundcolor=\color{light-gray}]{}
	  	for i in range {(*\label{lst:dyn-dq-it2-start}*)
	  		n = s.nodes[i];
	  		if n == null || n.backend.get_state() != states[i] {(*\label{lst:dyn-dq-state-check}*)
	  			retry = true;
	  			break;
	  		}
	  	}(*\label{lst:dyn-dq-it2-end}*)
	  	if retry { continue; }
\end{lstlisting}
\vspace{-.4cm}
\begin{lstlisting}[firstnumber=last,caption={LLD and LL$^+$D (queue and stack)},captionpos=b,
									 label={lst:dynamic-loc-lin-dq-2},frame=b]{}

	  	return null; // Empty case.
  	}
  }

  // Called upon thread termination.
  void terminate() { 
  	 n = get_local_node(create_if_absent: false);
  	 if n != null { n.alive = false; }
  }
}
\end{lstlisting}
\end{figure}

\subsection{LLD with Observer Methods}
\label{sec:lld_and_observers}

We have implemented LLD variants of (strict and relaxed) queue and stack implementations. None of our LLD implementations involves observer methods, but the LLD algorithm can easily be extended to support observer methods:
\begin{itemize}
	\item A data observer on LLD $\Phi$ (independently of which thread performs it) amounts to a data observer on any $\Phi_T$.
	\item A local shape observer on LLD $\Phi$ performed by thread $T$ executes the shape observer on~$\Phi_T$. 
	\item A global shape observer on LLD $\Phi$ executes the shape observer on each backend $\Phi_T$ and produces an aggregate value. 
\end{itemize}

\section{Additional Implementations}\label{sec:implementations}

We now present and evaluate additional algorithms that provide locally linearizable
variants of queues and stacks, obtained by modifying relaxed $k$-out of order
queues and stacks~\cite{Henzinger:POPL13,Kirsch:PACT13} in a way that makes them sequentially correct.
We have also tried another generic implementation, related to the construction in~\cite{Cerone:ICALP14}, that implements a flat-combining  wrapper with sequential (to be precise, single-producer multiple-consumer) backends. 
In our initial experiments the performance of such an implementation was not particularly promising. 

\subsection{Locally Linearizable $k$-FIFO Queue and $k$-Stack}

$k$-FIFO queues~\cite{Kirsch:PACT13} and $k$-Stacks~\cite{Henzinger:POPL13} are
relaxed queues and stacks based on lists of segments where each segment holds
$k$ slots for elements, effectively allowing reorderings of elements of up to $k
- 1$. The list of segments is implemented by a variant of Michael-Scott
queue~\cite{Michael:PODC96} for $k$-FIFO and a variant of Treiber
stack~\cite{Treiber86} for $k$-Stack. Insert and remove methods operate on the
segments ignoring any order of elements within the same segment. Segments used
for insertion and removal are identified by insertion and removal pointers,
respectively.

For queues, elements are removed from the oldest segment and inserted into the
most-recent not-full segment. Upon trying to remove an element from an empty
segment the segment is removed and the removal pointer advanced to the next
segment. Upon trying to insert an element into a full segment a new segment is
appended and the insertion pointer is advanced to this new segment. Similarly
(but different) for a stack, removal and insertion operate on the most-recent
segment, i.e., removal and insertion pointer are synonyms and identify the same
segment at all times. Again, upon trying to remove an element from an empty
segment the segment is removed and the removal pointer advanced to the next
segment. Upon trying to insert an element into a full segment a new segment is
prepended and the insertion pointer is set to this new segment.

$k$-FIFO queues and $k$-Stacks are relaxed queues and stacks that are: (1)
linearizable with respect to $k$-out-of-order queue and
stack~\cite{Henzinger:POPL13}, respectively; (2) linearizable with respect to a
pool~\cite{Henzinger:POPL13,Kirsch:PACT13}; (3) not locally linearizable with
respect to queue and stack, respectively, for $k\ge1$ since reordering elements
that are inserted in the same segment (even sequentially by a single thread) is
allowed, see the histories (b) and (c) in the proof of
Theorem~\ref{thm:sc:queues-and-stacks}; and (4) not sequentially consistent with
respect to queue and stack, as shown by the histories (d) and (e) in the proof
of Theorem~\ref{thm:sc:queues-and-stacks} that are $k$-FIFO and $k$-Stack
histories, respectively, for $k\ge1$.

We now present LL $k$-FIFO and LL $k$-Stack, modifications of $k$-FIFO and
$k$-Stack, that enforce local linearizability by ensuring that no thread inserts
more than once in a single segment. Assuming that segments are unique (by
tagging pointers), LL $k$-FIFO remembers the last used insertion pointer per
thread. For LL $k$-Stack the situation is more subtle as (due to the stack
semantics) segments can be reached multiple times for insertion and removal.
Figure~\ref{fig:k-stack-dilemma} illustrates an example where the top segment of
a $k$-Stack is reached multiple times by the same thread~($T_1$). Since in the
general case all segments could be reached multiple times by a single thread it
is required to maintain the full history of each thread's insertions. Assuming
the maximum number of threads is known in advance, a bitmap is used to maintain
the information in which segment a thread has already pushed a value. One can
similarly implement a locally linearizable version of the
Segment~Queue~\cite{Afek:OPODIS10}.

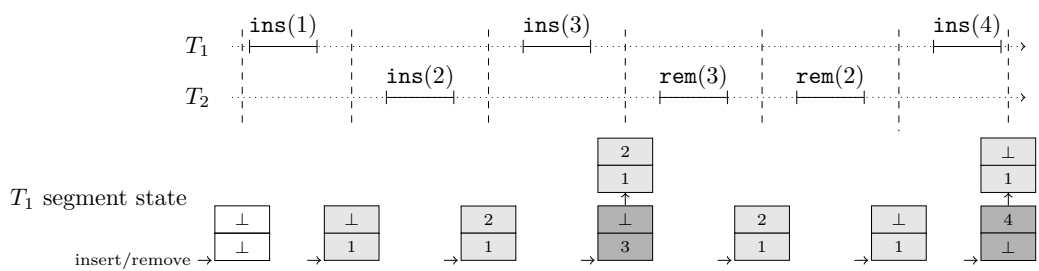
\begin{figure*}[ht!]
\centering
\scalebox{.9}{
\begin{tikzpicture}

  \tikzset {
		kslot/.style={draw, minimum width=0.8cm, minimum height=0.4cm},
		kslot-filled/.style={kslot,fill=black!10},
		kslot-filled-dark/.style={kslot,fill=black!30}
  }
	
	\node (T1) at (0.25,.75) {$T_1$};
	\node (T2) at (0.25,0) {$T_2$};
	\node (T1_marking) at (-1.2, -1.5) {$T_1$ segment state};

	\draw[dotted,->] (0.75, .75) -- (12.35, .75);
	\draw[dotted,->] (0.75, 0) -- (12.35, 0);

	\draw[|-|] (1, .75) -- node[above] {$\ins{1}$} (2, .75);
	\draw[|-|] (3, 0) -- node[above] {$\ins{2}$} (4, 0);
	\draw[|-|] (5, .75) -- node[above] {$\ins{3}$} (6, .75);
	\draw[|-|] (7, 0) -- node[above] {$\rem{3}$} (8, 0);
	\draw[|-|] (9, 0) -- node[above] {$\rem{2}$} (10, 0);
	\draw[|-|] (11, .75) -- node[above] {$\ins{4}$} (12, .75);
	
	\draw[dashed] (0.9, 1) -- (0.9, -0.4);
	\node (state1c) at (.9, -1.8) [kslot] {\scriptsize $\bot$};
	\node (state1d) at (.9, -2.2) [kslot] {\scriptsize $\bot$};
	\node (top) at (-0.7, -2.4) [align=center] {\scriptsize insert/remove};
	\draw[->] (.25,-2.4) -- (.45,-2.4);

	\draw[dashed] (2.5, 1) -- (2.5, -0.4);
	\node (state2c) at (2.5, -1.8) [kslot-filled] {\scriptsize $\bot$};
	\node (state2d) at (2.5, -2.2) [kslot-filled] {\scriptsize $1$};
	\draw[->] (1.85, -2.4) -- (2.05, -2.4);

	\draw[dashed] (4.5, 1) -- (4.5, -0.4);
	\node (state3c) at (4.5, -1.8) [kslot-filled] {\scriptsize $2$};
	\node (state3d) at (4.5, -2.2) [kslot-filled] {\scriptsize $1$};
	\draw[->] (3.85, -2.4) -- (4.05, -2.4);

	\draw[dashed] (6.5, 1) -- (6.5, -0.4);
	\node (state4a) at (6.5, -0.8) [kslot-filled] {\scriptsize $2$};
	\node (state4b) at (6.5, -1.2) [kslot-filled] {\scriptsize $1$};
	\node (state4c) at (6.5, -1.8) [kslot-filled-dark] {\scriptsize $\bot$};
	\node (state4d) at (6.5, -2.2) [kslot-filled-dark] {\scriptsize $3$};
	\draw[->] (state4c) -- (state4b);
	\draw[->] (5.85, -2.4) -- (6.05, -2.4);

	\draw[dashed] (8.5, 1) -- (8.5, -0.4);
	\node (state5c) at (8.5, -1.8) [kslot-filled] {\scriptsize $2$};
	\node (state5d) at (8.5, -2.2) [kslot-filled] {\scriptsize $1$};
	\draw[->] (7.85, -2.4) -- (8.05, -2.4);

	\draw[dashed] (10.5, 1) -- (10.5, -0.5);
	\node (state6c) at (10.5, -1.8) [kslot-filled] {\scriptsize $\bot$};
	\node (state6d) at (10.5, -2.2) [kslot-filled] {\scriptsize $1$};
	\draw[->] (9.85, -2.4) -- (10.05, -2.4);

	\draw[dashed] (12.1, 1) -- (12.1, -0.4);
	\node (state7a) at (12.1, -0.8) [kslot-filled] {\scriptsize $\bot$};
	\node (state7b) at (12.1, -1.2) [kslot-filled] {\scriptsize $1$};
	\node (state7c) at (12.1, -1.8) [kslot-filled-dark] {\scriptsize $4$};
	\node (state7d) at (12.1, -2.2) [kslot-filled-dark] {\scriptsize $\bot$};
	\draw[->] (state7c) -- (state7b);
	\draw[->] (11.45, -2.4) -- (11.65, -2.4);

\end{tikzpicture}
}
\caption{LL $k$-Stack run ($k = 2$). $T_1$ can only insert in uncolored segments and 
         needs to prepend a new segment (for insertion) otherwise.}
\label{fig:k-stack-dilemma}
\end{figure*}

\subsubsection{$k$-FIFO Queue and LL $k$-FIFO Queue Pseudo Code.}

Listing~\ref{lst:ll-kfifo} shows the pseudo code for LL $k$-FIFO queue. Again we
highlight the code we added to the original pseudo code~\cite{Kirsch:PACT13}.
Similar to the locally linearizable $k$-Stack each thread inserts at most one
element into a segment. However, in the k-FIFO queue we do not need flags in
each segment to achieve this property. It is sufficient  to remember the last
segment used for insertion for each thread ({\tt set\_last\_tail};
line~\ref{lst:ll-kfifo-set}).  For each enqueue the algorithm checks whether the
executing thread has already used this segment for enqueueing an element ({\tt
get\_last\_tail}; line~\ref{lst:ll-kfifo-check}). If the segment has already
been used, the thread tries to append a new segment (effectively adding a new
tail).

\begin{figure}[h!]
\begin{lstlisting}[frame=t]
LocallyLinearizableKFIFOQueue {                
	enqueue(item):
	  while true:
	    tail_old = get_tail();
\end{lstlisting}
\vspace{-.4cm}
\begin{lstlisting}[firstnumber=last,backgroundcolor=\color{light-gray}]{}
			(* \label{lst:ll-kfifo-check} *)if get_last_tail(thread_id) == tail_old:
				advance_tail(tail_old, k);
				continue;  // Restart while loop.
\end{lstlisting}
\vspace{-.4cm}
\begin{lstlisting}[firstnumber=last]{}
	    head_old = get_head();
	    item_old, index = find_empty_slot(tail_old, k);
	    if tail_old == get_tail():
	      if item_old.value == EMPTY:
	        item_new = atomic_value(item, item_old.version + 1);
	        if CAS(&tail_old->segment[index], item_old, item_new):
	          if committed(tail_old, item_new, index):
\end{lstlisting}
\vspace{-.4cm}
\begin{lstlisting}[firstnumber=last,backgroundcolor=\color{light-gray}]{}
						(* \label{lst:ll-kfifo-set} *)set_last_tail(thread_id, tail_old);
\end{lstlisting}
\vspace{-.4cm}
\begin{lstlisting}[firstnumber=last,frame=b,captionpos=b,caption={Locally Linearizable $k$-FIFO Queue},label={lst:ll-kfifo}]{}
	            return true;
	      else:
	        advance_tail(tail_old, k);

	bool committed(tail_old, item_new, index):
	  if tail_old->segment[index] != item_new:
	    return true;
	  head_current = get_head();
	  tail_current = get_tail();
	  item_empty = atomic_value(EMPTY, item_new.version + 1);
	  if in_queue_after_head(tail_old, tail_current, head_current):
	    return true;
	  else if not_in_queue(tail_old, tail_current, head_current):
	    if !CAS(&tail_old->segment[index], item_new, item_empty):
	      return true;
	  else: //in queue at head
	    head_new = atomic_value(head_current.value, head_current.version + 1);
	    if CAS(&head, head_current, head_new):
	      return true;
	    if !CAS(&tail_old->segment[index], item_new, item_empty):
	      return true;
	  return false;

	item dequeue():
	  while true:
	    head_old = get_head();
	    item_old, index = find_item(head_old, k);
	    tail_old = get_tail();
	    if head_old == get_head():
	      if item_old.value != EMPTY:
	        if head_old.value == tail_old.value:
	          advance_tail(tail_old, k);
	        item_empty = atomic_value(EMPTY, item_old.version + 1);
	        if CAS(&head_old[index], item_old, item_empty):
	          return item_old.value;
	      else:
	        if head_old.value == tail_old.value && tail_old.value == get_tail():
	          return null;
	        advance_head(head_old, k);
}
\end{lstlisting}
\end{figure}

\subsubsection{Correctness Proof of LL $k$-FIFO Queue.}

Having Theorem~\ref{thm:queue}, the proof of correctness of LL $k$-FIFO queue is easy.

\begin{theorem}[Correctness of LL $k$-FIFO]
LL $k$-FIFO queue presented in Listing~\ref{lst:ll-kstack} is locally linearizable.
\end{theorem}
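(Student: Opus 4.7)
The plan is to apply the characterization of locally linearizable queue histories (Theorem~\ref{thm:queue}): a history $\history{h}$ over the queue alphabet is locally linearizable wrt~$S_Q$ iff $(\star)$ it is locally linearizable wrt~$S_P$, and $(\star\star)$ for every thread~$i$ and values $x,y$, whenever $\enq{x} <_{\history{h}}^i \enq{y}$ and $\deq{y} \in \history{h}$, then $\deq{x} \in \history{h}$ and $\deq{y} \not<_{\history{h}} \deq{x}$. Thus it suffices to verify these two conditions for any history produced by LL $k$-FIFO.

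For $(\star)$, I would argue that LL $k$-FIFO remains linearizable wrt~$S_P$. The modification over the original $k$-FIFO of~\cite{Kirsch:PACT13} only adds the guard on line~\ref{lst:ll-kfifo-check} and the bookkeeping on line~\ref{lst:ll-kfifo-set}; when the guard triggers, the thread simply calls \texttt{advance\_tail} and retries the \texttt{enqueue} loop. Since the original $k$-FIFO is linearizable wrt~$S_P$ (as established in~\cite{Kirsch:PACT13,Henzinger:POPL13}) and the added code does not change the set of successful CAS steps that actually link elements into the segment list nor the dequeue logic, the LL $k$-FIFO restricted to $\Ins\cup\Rem$ exhibits a strict subset of the behaviours of $k$-FIFO and is therefore still linearizable wrt~$S_P$. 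By Propositions~\ref{prop:lin-loc2} and~\ref{prop:DS-are-closed-under-projection} ($S_P$ is closed under data-projection), linearizability wrt~$S_P$ implies local linearizability wrt~$S_P$, giving~$(\star)$.

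For $(\star\star)$, the key invariant to extract from the code is: if thread~$i$ performs two successful enqueues with $\enq{x} <_{\history{h}}^i \enq{y}$, then the segments $\sigma_x$ and $\sigma_y$ into which the CASes on line ``\texttt{CAS(\&tail\_old->segment[index],\ldots)}'' succeed are distinct, and $\sigma_x$ strictly precedes $\sigma_y$ in the segment list at the time~$\sigma_y$ is created or reached. This follows because after the successful enqueue of~$x$, \texttt{set\_last\_tail(i,\,$\sigma_x$)} records~$\sigma_x$; on the subsequent enqueue by thread~$i$, the guard on line~\ref{lst:ll-kfifo-check} forces the thread to advance the tail whenever \texttt{get\_tail()} still returns~$\sigma_x$, and since segments are uniquely tagged, $y$ can only commit to a segment that was appended after~$\sigma_x$. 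Combined with the fact that $k$-FIFO dequeues segments in list order (the removal pointer only advances to~$\sigma_y$ after~$\sigma_x$ has been fully drained, and each element is removed at most once), we obtain: if $\deq{y} \in \history{h}$, then $\sigma_x$ had been drained before $y$ was removed, hence $\deq{x}\in\history{h}$ and $\deq{x}\precOrder{\history{h}}\deq{y}$, which yields~$(\star\star)$.

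The main obstacle is the careful argument for the segment invariant under concurrency: a thread's saved ``last tail'' may be stale, and races between \texttt{advance\_tail} calls and \texttt{committed} checks (which can move an already-inserted element back to \texttt{EMPTY}) must not allow two successful commits by the same thread into the same segment. Handling this requires showing that a successful commit of~$x$ by thread~$i$ in segment~$\sigma$ is observed by $i$'s next \texttt{get\_last\_tail} as~$\sigma$, and that any later successful commit of the same thread either sees \texttt{get\_tail()}~$\neq\sigma$ or is forced through the \texttt{advance\_tail}-and-retry path; the uniqueness of segment tags then rules out reinserting into~$\sigma$. Once this invariant is in place, $(\star\star)$ follows by straightforward case analysis on the positions of $\sigma_x$ and $\sigma_y$ along the segment list.
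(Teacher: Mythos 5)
Your proposal follows the same skeleton as the paper's proof: invoke Theorem~\ref{thm:queue} and discharge the two obligations, with your argument for the ordering condition $(\star\star)$ (per-thread enqueues land in distinct, list-ordered segments; a segment becomes the head only after all earlier segments are drained) being essentially identical to the paper's. Where you genuinely diverge is on the pool obligation $(\star)$: the paper proves it by a direct linearization-point construction (deferred to, and shared with, the LL $k$-Stack correctness proof, where the pool axioms are checked against explicitly chosen linearization points including the atomic-snapshot point for $\rem{\emptyValue}$), whereas you argue by refinement to the original $k$-FIFO plus Propositions~\ref{prop:lin-loc2} and~\ref{prop:DS-are-closed-under-projection}. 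Your route is shorter and reuses more of the paper's machinery, but it rests on the unproved claim that LL $k$-FIFO's observable behaviours form a subset of $k$-FIFO's; since the guard forces \texttt{advance\_tail} in states where the original would not advance, the low-level executions are not literally a subset, so you would either need to argue the inclusion at the level of completed histories or simply re-run the pool-linearizability proof with unchanged linearization points (which is what the paper effectively does). One further imprecision: your conclusion $\deq{x}\precOrder{\history{h}}\deq{y}$ is too strong, since the draining argument only orders the internal removal steps (linearization points) in real time, not the full method-call intervals; what actually follows, and all that $(\star\star)$ requires, is $\deq{y}\not<_{\history{h}}\deq{x}$. Your explicit flagging of the concurrency subtleties around stale last-tail values and \texttt{committed} rollbacks is a point in your favour --- the paper asserts the distinct-segments invariant without discussion --- though you, like the paper, leave that invariant's proof as a sketch.
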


\begin{proof}
Using Theorem~\ref{thm:queue},
 as a first proof obligation
we have to show that any history~$\history{h}$ of the LL $k$-FIFO queue is
locally linearizable with respect to the pool sequential specification
$S_P$. This proof is analogous to the proof that any history
of the LL $k$-Stack is locally linearizable with respect to the pool
sequential specification $S_P$, and is therefore postponed until the corresponding LL $k$-Stack theorem.

What remains to show is that
\begin{center}
\begin{tabular}{@{}l@{}}
$\forall x,y \in V. \,\,\,\forall i. \,\,\,\enq{x} \precOrder{\history{h}}^i
  \enq{y} \,\,\,\wedge\,\,\, \deq{y} \in \history{h}
  \,\,\,\Rightarrow\,\,\, \deq{x} \in \history{h} \,\,\,\wedge\,\,\,\deq{y}
  \not<_{\history{h}} \deq{x}$
\end{tabular}
\end{center}

  Assume $\enq{x} \precOrder{\history{h}}^i \enq{y}$. This means that $x$ and $y$ were
  enqueued by the same thread $i$ and therefore inserted into different
  segments. Moreover, the segment of $x$ is closer to the head of the list
  than the segment of $y$. A $\deq{y}$ method call can
  remove $y$ only if the segment of $y$ is the head segment. The segment of
  $y$ can only become the head segment if all segments closer to the head
  of the list get empty. This means that also the segment of $x$ has to become
  empty. Therefore there has to exist a $\deq{x}$ method call which removes
  $x$ from the segment, and $\deq{x} \not<_{\history{h}} \deq{y}$.

\end{proof}

\subsubsection{$k$-Stack and LL $k$-Stack Pseudo Code.}

Listing~\ref{lst:ll-kstack} shows the pseudo code for LL $k$-Stack. The
highlighted code is the code we added to the original pseudo
code~\cite{Henzinger:POPL13} to achieve local linearizability. The difference to
the original algorithm is that a thread inserts at most one element into a
segment. To achieve this property each segment in the k-stack contains a flag
per thread which is set when an element is inserted into the segment ({\tt
mark\_segment\_as\_used}; line~\ref{lst:ll-kstack-mark1} and
line~\ref{lst:ll-kstack-mark2}). If a thread encounters a segment where its flag
is already set, the thread does not insert its element into that segment but
tries to prepend a new segment ({\tt is\_segment\_marked};
line~\ref{lst:ll-kstack-is-marked}). Otherwise the element is inserted into the
existing segment and the flag of the thread in that segment is set.

\begin{lstlisting}[frame=t]
LocallyLinearizableKStack {
	SegmentPtr top;

	void init():
	  new_ksegment = calloc(sizeof(ksegment));
	  top = atomic_value(new_ksegment, 0);

	bool try_add_new_ksegment(top_old, item):
	  if top_old == top:
	    new_ksegment = calloc(sizeof(ksegment));
	    new_ksegment->next = top_old;
	    new_ksegment->s[0] = atomic_value(item, 0); // Use first slot for item.
	    top_new = atomic_value(new_ksegment, top_old.ver+1);
\end{lstlisting}
\vspace{-.4cm}
\begin{lstlisting}[firstnumber=last,backgroundcolor=\color{light-gray}]{}
			(* \label{lst:ll-kstack-mark1} *)mark_segment_as_used(top_new);
\end{lstlisting}
\vspace{-.4cm}
\begin{lstlisting}[firstnumber=last]{}
	    if CAS(&top, top_old, top_new):(*\label{lineks:cas_segment}*)
	      return true;
	  return false;

	void try_remove_ksegment(top_old):
	  if top_old == top:
	    if top_old->next != null:
	      atomic_increment(&top_old->remove);
	      if empty(top_old):
	        top_new = atomic_value(top_old->next, top_old.ver+1);
	        if CAS(&top, top_old, top_new):
	          return;
	      atomic_decrement(&top_old->remove);

	bool committed(top_old, item_new, index):
	  if top_old->s[index] != item_new:
	    return true;
	  else if top_old->remove == 0:
	    return true;
	  else:  //top_old->remove >= 1
	    item_empty = atomic_value(EMPTY, item_new.ver+1);
	    if top_old != top:
	      if !CAS(&top_old->s[index], item_new, item_empty):
	        return true;
	    else:
	      top_new = atomic_value(top_old.val, top_old.ver+1);
	      if CAS(&top, top_old, top_new):
	        return true;
	      if !CAS(&top_old->s[index], item_new, item_empty):
	        return true;
	  return false;

	void push(item):
	  while true:
	    top_old = top;
\end{lstlisting}
\vspace{-.4cm}
\begin{lstlisting}[firstnumber=last,backgroundcolor=\color{light-gray}]{}
			(* \label{lst:ll-kstack-is-marked} *)if segment_is_marked(top_old):
				if try_add_new_ksegment(top_old, item);
					return true;
				continue;  // Restart while loop.
\end{lstlisting}
\vspace{-.4cm}
\begin{lstlisting}[firstnumber=last]{}
	    item_old, index = find_empty_slot(top_old);(*\label{lineks:find_empty_slot}*)
	    if top_old == top:
	      if item_old.val == EMPTY:
	        item_new = atomic_value(item, item_old.ver+1);
	        if CAS(&top_old->s[index], item_old, item_new):(*\label{lineks:cas_element}*)
	          if committed(top_old, item_new, index):(*\label{lineks:commit}*)
\end{lstlisting}
\vspace{-.4cm}
\begin{lstlisting}[firstnumber=last,backgroundcolor=\color{light-gray}]{}
							(* \label{lst:ll-kstack-mark2} *)mark_segment_as_used(old_top);
\end{lstlisting}
\vspace{-.4cm}
\begin{lstlisting}[firstnumber=last,caption={Locally Linearizable $k$-Stack},
                   label={lst:ll-kstack},captionpos=b,frame=b]{}
	            return true;
	      else:
	        if try_add_new_ksegment(top_old, item):
	          return true;

	item pop():
	  while true:
	    top_old = top;
	    item_old, index = find_item(top_old);(*\label{lineks:find_item}*)
	    if top_old == top:(*\label{lineks:top_check_pop}*)
	      if item_old.val != EMPTY:
	        item_empty = atomic_value(EMPTY, item_old.ver+1);
	        if CAS(&top_old->s[index], item_old, item_empty):(*\label{lineks:cas_pop}*)
	          return item_old.val;
	      else:
	        if only_ksegment(top_old):
	          if empty(top_old):(*\label{lineks:empty}*)
	            if top_old == top:
	              return null;
	        else:
	          try_remove_ksegment(top_old);
}
\end{lstlisting}

\subsubsection{Correctness Proof of LL $k$-Stack.}

The local linearizability proof of LL $k$-Stack is more involved, but very interesting. We use a theorem from
the published artifact of~\cite{Dodds:POPL15}, which has been mechanically proved in
the Isabelle HOL theorem prover. 

\begin{theorem}[Empty Returns for Stack]
\label{thm:emptiness}
  Let $\history{h}$ be a history, and let $\history{h'}$ be the projection
  of $\history{h}$ to $\Sigma \setminus \pop{\emptyValue}$. If
  $\history{h}$ is linearizable with respect to the sequential
  specification $S_P$ of a pool (see Definition~\ref{example:pool}), and
  $\history{h'}$ is linearizable with respect to the sequential
  specification $S_S$ of a stack (see Definition~\ref{example:stack}), then
  $\history{h}$ is linearizable with respect to~$S_S$.
\end{theorem}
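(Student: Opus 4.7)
The plan is to build a stack linearization $s$ of $\history{h}$ by starting from the stack linearization $s'$ of $\history{h'}$ granted by hypothesis, and inserting each $\pop{\emptyValue}$ of $\history{h}$ into $s'$ at a position that (i) is compatible with $\precOrder{\history{h}}$ and (ii) corresponds to an empty stack state in $s'$. First I would fix $s' \in S_S$ linearizing $\history{h'}$ and, invoking pool-linearizability, a sequence $s_P \in S_P$ linearizing $\history{h}$. For each empty pop $p$ in $\history{h} \setminus \history{h'}$, define the \emph{admissible window} $W(p)$ as the set of insertion positions of $s'$ that lie strictly after every $m$ with $m \precOrder{\history{h}} p$ and strictly before every $m$ with $p \precOrder{\history{h}} m$. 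Since $\precOrder{\history{h}}$ is a partial order, $W(p)$ is a contiguous interval, and inserting $p$ anywhere in $W(p)$ preserves $\precOrder{\history{h}}$.

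Next I would characterize the \emph{empty positions} of $s'$: positions $k$ such that the prefix $s'[1..k]$ contains equally many pushes as non-empty pops. By pool axioms (2) and (3) (applied to $S_S \subseteq S_P$), these are exactly the positions at which inserting a $\pop{\emptyValue}$ yields a sequence still in $S_S$, since $\pop{\emptyValue}$ neither references nor orders values.

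The heart of the plan is the key lemma: for every empty pop $p$, the window $W(p)$ contains an empty position of $s'$. To prove it I would transfer evidence from $s_P$: in $s_P$, $p$ sits at an empty-pool position, so the multiset of pushes before $p$ equals the multiset of non-empty pops before $p$ in $s_P$. Both $s'$ and $s_P$ respect $\precOrder{\history{h}}$ on the non-empty operations, so every operation $m$ with $m \precOrder{\history{h}} p$ appears before $W(p)$ in $s'$ and before $p$ in $s_P$, and symmetrically for successors. The remaining operations, concurrent with $p$ in $\precOrder{\history{h}}$, are free to sit on either side. A counting argument on the running balance (pushes minus non-empty pops) along $s'$ shows that this balance function, which starts and ends at matching values dictated by the $\precOrder{\history{h}}$-predecessors and successors of $p$, must attain zero somewhere inside $W(p)$; emptiness is a multiset condition, not an ordering condition, so the equality witnessed by $s_P$ can be transported to $s'$.

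Finally, I would handle the simultaneous insertion of all empty pops. Inserting $\pop{\emptyValue}$ contributes nothing to the push/pop balance, so previously empty positions remain empty after insertion and the lemma can be applied for each $p$ independently. For empty pops whose windows overlap, the freedom to totally order concurrent operations lets us interleave the insertions consistently with $\precOrder{\history{h}}$. The result is a sequential history $s$ that extends $s'$, respects $\precOrder{\history{h}}$, and lies in $S_S$, i.e., a stack linearization of $\history{h}$.

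The main obstacle will be the key lemma. The difficulty is that $s_P$ and $s'$ may order concurrent pushes and non-empty pops very differently, so the precise multiset of completed operations at $p$'s position in $s_P$ need not coincide with the multiset at any specific position in $s'$. Bridging the two hinges on the observation that emptiness of the container depends only on the multiset of executed pushes and pops, combined with the intermediate-value behavior of the running balance along $s'$ restricted to $W(p)$.
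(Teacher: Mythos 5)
Your overall strategy---fix a stack linearization $\history{s'}$ of $\history{h'}$ and a pool linearization $\history{s}_P$ of $\history{h}$, then slot each $\pop{\emptyValue}$ into an empty position of $\history{s'}$ inside its admissible window---is genuinely different from the paper's, and you correctly identified the key lemma as the weak point. Unfortunately the lemma is false for an arbitrary fixed $\history{s'}$, and the intermediate-value argument on the running balance cannot rescue it. Concretely, let thread $T_1$ execute $\push{1}$, then $p=\pop{\emptyValue}$, then $\push{2}$ sequentially; let thread $T_2$ execute a single $\pop{1}$ whose interval overlaps $p$, $\push{2}$ and $\pop{2}$ (but starts after $\push{1}$ returns); and let thread $T_3$ execute $\pop{2}$ after $\push{2}$ returns. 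This $\history{h}$ is linearizable wrt $S_P$ via $\push{1}\,\pop{1}\,p\,\push{2}\,\pop{2}$, and $\history{h'}$ is linearizable wrt $S_S$, so the hypotheses hold. But the perfectly legitimate choice $\history{s'}=\push{1}\,\push{2}\,\pop{2}\,\pop{1}$ (it preserves all precedences of $\history{h'}$ and is LIFO-valid) leaves, as the only $\precOrder{\history{h}}$-admissible slot for $p$, the single gap between $\push{1}$ and $\push{2}$, where the balance is $1$: value $1$ has been pushed but not popped there, so pool axiom (3) forbids inserting $\pop{\emptyValue}$. The window is a single point, so there is no interval for an intermediate-value argument to act on; the balance at the window's endpoints is \emph{not} dictated by the predecessors and successors of $p$ alone, because $\history{s'}$ is free to postpone operations concurrent with $p$ (here $\pop{1}$) past $p$'s successors, draining the window of exactly the pops that made the pool position empty in $\history{s}_P$.

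The missing idea is therefore that $\history{s'}$ must be chosen, or rebuilt, compatibly with the empty pops; it cannot be fixed up front. This is what the paper's construction supplies: it keeps every $\pop{\emptyValue}$ where the pool witness $\history{s}$ places it, thereby partitioning the non-empty operations into blocks separated by empty pops, reorders each block internally according to $\history{s'}$, and then shows that any triple $\push{x}$, $\push{y}$, $\pop{x}$ that could violate the stack axiom must lie entirely within one block (an empty pop strictly between $\push{x}$ and $\pop{x}$ would contradict pool axiom (3) for the resulting sequence), so the stack order transfers from $\history{s'}$. If you want to keep your direction of construction, you would need an existential version of your lemma---there \emph{exists} a stack linearization of $\history{h'}$ whose windows all contain empty positions, consistently for all empty pops simultaneously---and the natural way to produce such an $\history{s'}$ is again the block decomposition, at which point the two proofs coincide.
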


\begin{proof}
  Here we repeat the key insights of the proof and leave out
  technical details. A complete and mechanized version of the proof is
  available in the published artifact of~\cite{Dodds:POPL15}.

  As $\history{h}$ is linearizable with respect to $S_P$, and
  $\history{h'}$ is linearizable with respect to $S_S$, there exists a
  sequential history $\history{s} \in S_P$ such that $\history{s}$ is a
  linearization of $\history{h}$, and there exists a sequential history
  $\history{s'} \in S_S$ such that $\history{s'}$ is a linearization of
  $\history{h'}$.  We show that we can construct a sequential history
  $\history{t} \in S_S$ such that $\history{t}$ is a linearization of~$\history{h}$.

  The linearization $\history{t}$ is constructed as follows: the position
  of $\pop{\emptyValue}$ in $\history{s}$ is preserved in $\history{t}$.
  This means for any method call $m \in \history{s}$ that if
  $\pop{\emptyValue} \abOrder{\history{s}} m$, then also $\pop{\emptyValue}
  \abOrder{\history{t}} m$, and if $m \abOrder{\history{s}} \pop{\emptyValue}$, then also $m
  \abOrder{\history{t}} \pop{\emptyValue}$.  Moreover, if two method calls $m, n \in
  \history{s}$ are ordered as $m \abOrder{\history{s}} \pop{\emptyValue} \abOrder{\history{s}} n$
  and therefore by transitivity it holds that $m \abOrder{\history{s}} n$, then also
  $m \abOrder{\history{t}} n$. 
  
  For all other method calls the order of $\history{s'}$ is preserved. This
  means for any two method calls $m, n \in \history{s}$ with $m \abOrder{\history{s'}}
  n$, that if for all $\pop{\emptyValue}$ it holds that $\pop{\emptyValue}
  \abOrder{\history{s}} m$ if and only if $\pop{\emptyValue} \abOrder{\history{s}} n$, then $m
  \abOrder{\history{t}} n$.

  By construction, the history $\history{t}$ is sequential and a permutation
  of $\history{h}$. Next we show that $\history{t}$ is a linearization of
  $\history{h}$ by showing that $\history{t}$ preserves the precedence
  order of $\history{h}$. Also by construction, it holds that if $m
  \abOrder{\history{t}} n$ for any two method calls $m, n \in \history{t}$, then also
  either $m \abOrder{\history{s}} n$ or $m \abOrder{\history{s'}} n$. Both $\history{s}$ and
  $\history{s'}$ are linearizations of $\history{h}$ and $\history{h'}$,
  respectively. Therefore it cannot be for any $m, n$
  with $m \precOrder{\history{h}} n$ that $n \abOrder{\history{s}} m$ or $n \abOrder{\history{s'}} m$, it
  can also not be that $n \abOrder{\history{t}} m$. Since $\history{t}$ is
  sequential, this means that $\history{t}$ preserves the precedence order
  of $\history{h}$.
 
  Next we show that $\history{t} \in S_P$ according to
  Definition~\ref{example:pool}:

  \begin{enumerate}[(1)]

    \item Every method call, but $\pop{\emptyValue}$, appears in
      $\history{s}$ at most once: This is guaranteed since $\history{t}$ is
      a permutation of $\history{s}$, and $\history{s} \in S_P$.
    
    \item If $\pop{x}$ appears in $\history{t}$, then also $\push{x}$ does
      and $\push{x} \abOrder{\history{t}} \pop{x}$: again, since $\history{t}$ is a
      permutation of $\history{s}$ and $\history{s} \in S_P$, if $\pop{x}
      \in \history{t}$, then also $\push{x} \in \history{t}$. Since
      $\push{x} \abOrder{\history{s}} \pop{x}$ and $\push{x} \abOrder{\history{s'}} \pop{x}$
      (because both $\history{s}$ and $\history{s'}$ are in $S_P$) it
      also holds that $\push{x} \abOrder{\history{t}} \pop{x}$, as we argued already
      above. 
   
    \item $\forall x \in V.\,\,\, \push{x}
      \abOrder{\history{t}}\pop{\emptyValue} \Rightarrow \pop{x}
      \abOrder{\history{t}}\pop{\emptyValue}$: this property is satisfied
      trivially as all $\pop{\emptyValue}$ operations are ordered the same
      in $\history{t}$ as in $\history{s}$, and $\history{s} \in S_P$.
   
  \end{enumerate}

  It only remains to check that all elements are removed in a stack
  fashion. We have to show the following:
\begin{center}
\begin{tabular}{@{}l@{}}
$\forall x,y \in V.\,\, \push{x} \abOrder{\history{t}} \push{y} \abOrder{\history{t}} \pop{x}
  \,\,\,\Rightarrow \,\,\, \pop{y} \in \history{t} \,\, \wedge \,\, \pop{y}
  \abOrder{\history{t}} \pop{x}$
\end{tabular}
\end{center}

  First we show that if $\push{x} \abOrder{\history{t}} \push{y} \abOrder{\history{t}}
  \pop{x}$, then also $\push{x} \abOrder{\history{s'}} \push{y} \abOrder{\history{s'}}
  \pop{x}$. We do this by showing that there cannot exist a
  $\pop{\emptyValue}$ such that $\push{x} \abOrder{\history{t}} \pop{\emptyValue}
  \abOrder{\history{t}} \push{y}$ or $\push{y} \abOrder{\history{t}} \pop{\emptyValue}
  \abOrder{\history{t}} \pop{x}$.

  Assume, towards a contradiction, $\push{x} \abOrder{\history{t}}
  \pop{\emptyValue} \abOrder{\history{t}} \push{y}$. By the transitivity of
  $\abOrder{\history{t}}$ this implies that $\push{x} \abOrder{\history{t}} \pop{\emptyValue}
  \abOrder{\history{t}} \pop{x}$, which contradicts our observation above that
  $\history{t} \in S_P$. Therefore $\push{x} \abOrder{\history{t}} \pop{\emptyValue}
  \abOrder{t} \push{y}$ is not possible, and for the same reason also
  $\push{y} \abOrder{\history{t}} \pop{\emptyValue} \abOrder{\history{t}} \pop{x}$ is not
  possible.

  Now, as $\history{s'} \in S_S$ and $\push{x} \abOrder{\history{s'}} \push{y}
  \abOrder{\history{s'}} \pop{x}$, there has to exist a
  $\pop{y} \in \history{s'}$ with $\pop{y} \abOrder{\history{s'}} \pop{x}$.  For the
  same reason as above it cannot be that $\pop{x} \abOrder{\history{s}}
  \pop{\emptyValue} \abOrder{\history{s}} \pop{y}$.  Therefore $\pop{y}$ and
  $\pop{x}$ are ordered in $\history{t}$ the same as in $\history{s'}$,
  i.e. $\pop{y} \abOrder{\history{t}} \pop{x}$, and therefore $\history{t} \in S_S$. 

\end{proof}

\begin{theorem}[Correctness of LL $k$-Stack]
  The LL $k$-Stack algorithm presented in Listing~\ref{lst:ll-kstack} is locally
  linearizable.
\end{theorem}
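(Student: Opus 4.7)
The plan is to reduce local linearizability wrt $S_S$ to two simpler obligations and then invoke Theorem~\ref{thm:emptiness}, which plays for stacks the role that Theorem~\ref{thm:queue} played in the LL $k$-FIFO proof. Concretely, for each thread $T$ I will establish (a) that the thread-induced history $\history{h}_T$ is linearizable wrt the pool specification $S_P$, and (b) that its projection $\history{h}_T'$ obtained by deleting every $\pop{\emptyValue}$ is linearizable wrt the stack specification $S_S$. Theorem~\ref{thm:emptiness} then yields a single $S_S$-linearization of $\history{h}_T$, and since this holds for every $T$ and the thread-induced histories decompose $\history{h}$, Def.~\ref{def:lc} closes the argument.

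For obligation (a), the modifications in Listing~\ref{lst:ll-kstack} only forbid certain push placements via the per-thread segment marks; they never add new behaviours to the original $k$-Stack. Hence LL $k$-Stack remains linearizable wrt $S_P$ (inheriting the proof of~\cite{Henzinger:POPL13}), and Prop.~\ref{prop:DS-are-closed-under-projection} together with Prop.~\ref{prop:lin-loc2} upgrades this to local linearizability wrt $S_P$, which is exactly (a) for each $T$. For obligation (b) I plan to start from a pool-linearization $\history{s}$ of $\history{h}$ guaranteed by (a) applied to the whole history, and set $\history{s}_T'$ to be the projection of $\history{s}$ onto the method calls that appear in $\history{h}_T'$. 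This sequence is automatically a permutation of $\history{h}_T'$ that respects its precedence order, so the substance of the proof is to verify the stack-order axiom (5) of Table~\ref{tab:axioms} on $\history{s}_T'$: whenever $\push{x} \abOrder{\history{s}_T'} \push{y} \abOrder{\history{s}_T'} \pop{x}$, one must have $\pop{y} \in \history{s}_T'$ and $\pop{y} \abOrder{\history{s}_T'} \pop{x}$.

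The delicate step is this axiom check, and it will rest on two structural facts about LL $k$-Stack: first, the per-thread mark forces the pushes $\push{x}$ and $\push{y}$ by $T$ to commit into distinct segments, and since they are issued in program order by a single thread, $y$'s segment lies strictly above $x$'s segment in the top-chain at the moment $\push{y}$ commits; second, the underlying $k$-Stack evicts segments strictly in LIFO order. I expect to close the argument by case-splitting on whether $x$'s segment has already been removed by the time $\push{y}$ commits. If it has, then $\pop{x}$ must precede $\push{y}$ in real time (a segment is evicted only after all its slots are drained), so the precedence order carried into $\history{s}_T'$ contradicts the premise $\push{y} \abOrder{\history{s}_T'} \pop{x}$ and nothing needs to be shown. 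If it has not, then the LIFO eviction of segments forces the segment holding $y$ to be fully drained, and in particular $\pop{y}$ to occur, before $x$'s segment can be touched, supplying $\pop{y} \in \history{s}_T'$ and $\pop{y} \abOrder{\history{s}_T'} \pop{x}$ as required.
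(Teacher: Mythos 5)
Your overall skeleton is the paper's: establish pool-linearizability of each thread-induced history, stack-linearizability of its $\pop{\emptyValue}$-free projection, and combine via Theorem~\ref{thm:emptiness}. The route to obligation (a) via inheritance from the original $k$-Stack plus Prop.~\ref{prop:DS-are-closed-under-projection} and Prop.~\ref{prop:lin-loc2} is a reasonable shortcut, though the claim that the LL variant ``never adds new behaviours'' is itself a refinement obligation (the LL variant prepends segments in situations where the original would not), which the paper sidesteps by reproving pool-linearizability directly from linearization points.

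The genuine gap is in obligation (b): you fix the witness $\history{s}_T'$ as the projection of a \emph{pool} linearization $\history{s}$ of $\history{h}$ and then try to verify the stack-order axiom on it. That witness is the wrong object. A pool linearization is only constrained by the pool axioms and by the precedence order of non-overlapping method calls; it knows nothing about the order in which the implementation's internal CAS events occur. Your axiom check, however, derives $\pop{y} \abOrder{\history{s}_T'} \pop{x}$ from the fact that $y$'s slot is drained before $x$'s slot \emph{at the implementation level} --- but if $\pop{y}$ and $\pop{x}$ overlap as method calls, the pool witness is free to order them the other way, and then the axiom simply fails for your chosen $\history{s}_T'$ even though the history is stack-linearizable. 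The same problem breaks your first case: if $x$'s segment is evicted before $\push{y}$ commits but the $\pop{x}$ call has not yet returned when $\push{y}$ is invoked, then $\pop{x} \not\precOrder{\history{h}} \push{y}$, the pool witness may legitimately place $\push{y}$ before $\pop{x}$, your premise is not contradicted, and yet $\pop{y}$ need not exist. The paper avoids all of this by constructing the witness for $\history{h}'_T$ \emph{directly} from linearization points of the algorithm (the successful CAS / slot-read events), so that the LIFO-eviction and distinct-segment invariants translate immediately into the order of the witness. To repair your proof you must do the same: do not reuse the pool witness for the stack obligation; build a fresh sequential witness from the internal event order and only then verify axiom (5).
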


\begin{proof}
  We have to show that every history $\history{h}$ of LL $k$-Stack is
  locally linearizable with respect to the sequential specification $S_S$
  defined in Definition~\ref{example:stack}. This means that we have to show
  that every thread-induced history $\history{h}_i$ of $\history{h}$ is
  linearizable with respect to $S_S$ for any thread $i$.

  Having Theorem~\ref{thm:emptiness}
  we only have to show that
  $\history{h}_i$ is linearizable with respect to the sequential
  specification $S_P$ of a pool (defined in Definition~\ref{example:pool}),
  and that $\history{h'}_i$, the projection of $\history{h}_i$ to $\Sigma
  \; \setminus \; \pop{\emptyValue}$, is linearizable with respect to the
  sequential specification~$S_S$ of a stack.

  We start with the proof that $\history{h}_i$ is linearizable with respect
  to $S_P$. 
  We construct a sequential history~$\history{s}_i$ from
  $\history{h}_i$ by identifying the linearization points of the push and
  pop method calls of the LL $k$-Stack. 
  This means that two method calls $m,n$ are ordered in $\history{s}_i$, 
  $m \abOrder{\history{s}_i} n$ if the linearization point of $m$ is executed before
  the linearization point of $n$ in $\history{h}_i$.

  The linearization point of push method calls is either the successful
  insertion of a new segment in line~\ref{lineks:cas_segment}, or the last successful {\tt
  CAS} which writes the element into a segment slot in line~\ref{lineks:cas_element}. The
  linearization point of pop method calls is the successful {\tt CAS} which
  removes an element from its segment slot in line~\ref{lineks:cas_pop}. 
  
  For the linearization point of $\pop{\emptyValue}$ we take the
  linearization point of the call to {\tt empty} in line~\ref{lineks:empty}. The {\tt
  empty} method creates an atomic snapshot~\cite{Herlihy:AMP08} of the
  top segment.  This atomic snapshot is the state of the top segment at
  some point (i.e. linearization point of {\tt empty}) within the execution
  of {\tt empty}. If {\tt empty} returns {\tt true}, then there exists no
  element in the atomic snapshot of the segment.

  Next we show that $\history{s}_i$ is in $S_P$ as defined in
  Definition~\ref{example:pool}.

\begin{enumerate}[(1)]

  \item Since there exists exactly one linearization point per method call,
    every method call, but $\rem{\emptyValue}$, appears in $\history{s}_i$
    at most once.
    
   \item  If $\pop{x}$ appears in $\history{s}_i$, then it reads $x$ in a
     slot of the top segment before its linearization point. Since only push
     method calls write their elements into segment slots, there has to exist
     a $\push{x}$ which wrote $x$ into that slot. Therefore the
     linearization point of $\push{x}$ is always before the linearization
     point of $\pop{x}$, and therefore $\push{x} \abOrder{\history{s}_i} \pop{x}$.
   
   \item Segments are only removed from the list of segments when they
     become empty. The call to {\tt committed} guarantees that elements are
     not inserted into segments which are about to be removed.
     
     A pop method calls {\tt empty} only if there is a single segment
     left in the LL $k$-Stack and no element was found in that segment in {\tt
     find\_item}. 
     
     Now assume a $\push{x}$ method call inserts an element $x$ which is
     missed by {\tt find\_item}. If $\push{x}$ wrote $x$ into a segment
     before the linearization point of $\pop{\emptyValue}$ and the segment
     was not the last segment, then the top segment changed since
     $\pop{\emptyValue}$ searched for an element and therefore the check in
     line~\ref{lineks:top_check_pop} would fail. If $\push{x}$ wrote $x$
     into the last segment of the LL $k$-Stack, then a $\pop{x}$ method call
     removed $x$ from the segment because otherwise $x$ would be in the
     atomic snapshot of {\tt empty} and therefore {\tt empty} would return
     {\tt false}. Therefore, if $\push{x} \abOrder{\history{s}_i} \pop{\emptyValue}$,
     then also $\pop{x} \abOrder{\history{s}_i} \pop{\emptyValue}$. 
     
\end{enumerate}
  
  \noindent Therefore $\history{s}_i$ is in the sequential specification~$S_P$ of a
  pool. 

  Next we show that $\history{h'}_i$ is linearizable with respect
  to~$S_S$. 
  We construct again a sequential history $\history{s'}_i$ from
  $\history{h'}_i$ by identifying the linearization points of the push and
  pop method calls of LL $k$-Stack.

  The linearization point of the push operations is the successful
  insertion of a new segment in line~\ref{lineks:cas_segment} if it is executed, or the
  reading of the empty slot (line~\ref{lineks:find_empty_slot}) in the last (and therefore
  successful) iteration of the main loop. The linearization point of a pop
  operation is the reading of a non-empty slot (line~\ref{lineks:find_item}) in the last
  (and therefore successful) iteration of the main loop. There do not exist
  any $\pop{\emptyValue}$ method calls in $\history{s'}_i$. Since we assume a
  sequentially consistent memory model, these read operations define a total
  order on the LL $k$-Stack method calls in $\history{h'}_i$.

  First we show that $\history{s'}_i$ is in the sequential specification
  $S_P$ of a pool as defined in Definition~\ref{example:pool}.

\begin{enumerate}[(1)]

  \item Since there exists exactly one linearization point per method call,
    every method call appears in $\history{s'}_i$
    at most once.
    
   \item  If $\pop{x}$ appears in $\history{s'}_i$, then it read $x$ in a
     slot of the top segment at its linearization point. Since only push
     operations write their elements into segment slots, there has to exist
     a $\push{x}$ which wrote $x$ into that slot. The linearization point of 
     $\push{x}$ is always before $x$ is written into a segment slot.
     Therefore $\push{x} \abOrder{\history{s'}_i} \pop{x}$.
   
   \item Since there exist no $\pop{\emptyValue}$ operations in
     $\history{s'}_i$ the third pool condition is trivially correct.
   
\end{enumerate}

  Next we show that $\history{s'}_i$ also provides a stack order, which
  means that we have to show that 
\begin{center}
\begin{tabular}{@{}l@{}}
$\forall x,y \in V.\,\, \push{x} \abOrder{\history{s'}_i} \push{y} \abOrder{\history{s'}_i}
  \pop{x} \,\,\,\Rightarrow \,\,\, \pop{y} \in \history{s'}_i \,\, \wedge
  \,\, \pop{y} \abOrder{\history{s'}_i} \pop{x}.$
\end{tabular}
\end{center}

\noindent
We start by observing some invariants. 

\begin{enumerate}

	\item  A thread never inserts
  	elements into the same segment twice. This is guaranteed by the call to
  	{\tt segment\_is\_marked}.
  
  	\item Between the linearization point of a push and the time it writes
  	its element into a segment the segment the element gets written into is
  	not removed: if the push operation inserts a new segment this is
  	trivially correct. If the push operation writes the element into an
  	existing segment, then the call to {\tt committed} in
  	line~\ref{lineks:commit} guarantees that the segment  was not removed.

  	\item At the time of the linearization point of the pop, which is the
  	time when the pop reads the non-empty slot (line~\ref{lineks:find_item})
  	in the last (and therefore successful) iteration, the pop reads the
  	non-empty slot from the top segment. This is guaranteed by the check in
  	line~\ref{lineks:top_check_pop}. 
\end{enumerate}

  Now assume there exist the operations $\push{x}$, $\push{y}$ and
  $\pop{x}$ in $\history{s'}_i$ and $\push{x} \abOrder{\history{s'}_i} \push{y}
  \abOrder{\history{s'}_i} \pop{x}$. Since $\push{x}$ and $\push{y}$ are both in
  $\history{s'}_i$, this means that both operations are executed by the same
  thread. Therefore, according to Invariant 1., $x$ and $y$ get
  inserted into different segments, with the segment $y$ on top of the
  segment of $x$. 
  
  The linearization point of $\pop{x}$ cannot be before $y$ is written into
  its segment because according to Invariant 2. the segment $y$ gets
  inserted into does not get removed between the linearization point of
  $\push{y}$ and the time $y$ is written into the segment. With Invariant 3.
  this means that $x$ is unaccessible for $\pop{x}$ before $y$
  gets written into a segment. Also because of the third invariant the top
  segment changes between the insertion of $y$ and the linearization point
  of $\pop{x}$.

  Next we observe that as long as $y$ is not removed, no segment below the
  segment of $y$ can become the top segment. Therefore for the segment of
  $x$ to become the top segment so that $\pop{x}$ can remove it, $y$ has to
  be removed first. Only a $\pop{y}$ can remove $y$, and therefore there
  exists a $\pop{y}$ and the linearization point of $\pop{y}$ is before the
  linearization point of $\pop{x}$.
  
  Hence $\history{s'_i}$ is in the sequential specification of a stack.
  Using Theorem~\ref{thm:emptiness}
  this means that LL $k$-Stack in
  listing~\ref{lst:ll-kstack} is locally-linearizable with respect to the
  sequential specification of a stack.

\end{proof}

\section{Additional Experiments}

We also evaluate the implementations on another Scal workload, the sequential alternating workload. However, we note that in this workload in the locally linearizable implementations threads only access their local backends, so no wonder they perform perfectly well.

\subparagraph{Mixed Workload.}

In order to evaluate the performance and scalability of mixed workloads, i.e.,
workloads where threads produce and consume values, we exercise the so-called
sequential alternating workload in Scal. Each thread is configured to execute
$10^6$ pairs of insert and remove operations, i.e., each insert operation is
followed by a remove operation. As in the producer-consumer workload, the
contention is controlled by adding a busy wait of $5\mu s$. 
The number of threads is configured to range between $1$ and $80$. 
Again we report the number of data structure operations per second.

Data structures that require parameters to be set are configured like in
the producer-consumer benchmark.
Figure~\ref{fig:seqalt} shows the results of the mixed workload benchmark for
all considered data structures.  

The MS queue and Treiber stack do not perform and scale for more than
10 threads. As in the producer-consumer benchmark, LCRQ and TS Stack either perform competitively with
their $k$-out-of-order relaxed counter parts $k$-FIFO and $k$-Stack or even
outperform and outscale them (in the case of LCRQ, that even outperforms the pool). 

LL$^+$D MS queue, LLD LCRQ, and LL$^+$D Treiber stack perform very well and scale (nearly) linearly in the number of threads.
A surprising result is that LLD $k$-FIFO performs poorly in this experiment. The reason is that $k$-FIFO performs poorly when it is almost empty, and in this experiment
each backend instance of LLD $k$-FIFO contains at most one element at any point in time. 
The $k$-Stack  performs better on a nearly-empty state. 
The benefit of trying to perform a local operation first in the LLD algorithms is
visible when comparing to 1-RA DQ and DS that do not utilize a local fast path. 

\begin{figure}[t]
\begin{minipage}{.5\textwidth}
	\centering
	\resizebox{.95\textwidth}{!}{
		\begingroup
		  \makeatletter
		  \gdef\gplbacktext{}%
		  \gdef\gplfronttext{}%
		  \makeatother
		      \def\colorrgb#1{\color[rgb]{#1}}%
		      \def\colorgray#1{\color[gray]{#1}}%
		      \expandafter\def\csname LTw\endcsname{\color{white}}%
		      \expandafter\def\csname LTb\endcsname{\color{black}}%
		      \expandafter\def\csname LTa\endcsname{\color{black}}%
		      \expandafter\def\csname LT0\endcsname{\color[rgb]{1,0,0}}%
		      \expandafter\def\csname LT1\endcsname{\color[rgb]{0,1,0}}%
		      \expandafter\def\csname LT2\endcsname{\color[rgb]{0,0,1}}%
		      \expandafter\def\csname LT3\endcsname{\color[rgb]{1,0,1}}%
		      \expandafter\def\csname LT4\endcsname{\color[rgb]{0,1,1}}%
		      \expandafter\def\csname LT5\endcsname{\color[rgb]{1,1,0}}%
		      \expandafter\def\csname LT6\endcsname{\color[rgb]{0,0,0}}%
		      \expandafter\def\csname LT7\endcsname{\color[rgb]{1,0.3,0}}%
		      \expandafter\def\csname LT8\endcsname{\color[rgb]{0.5,0.5,0.5}}%
		  \setlength{\unitlength}{0.0500bp}%
		  \begin{picture}(7200.00,5040.00)%
		    \gplbacktext{%
		      \csname LTb\endcsname%
		      \put(814,1584){\makebox(0,0)[r]{\strut{} 0}}%
		      \put(814,1983){\makebox(0,0)[r]{\strut{} 4}}%
		      \put(814,2382){\makebox(0,0)[r]{\strut{} 8}}%
		      \put(814,2781){\makebox(0,0)[r]{\strut{} 12}}%
		      \put(814,3180){\makebox(0,0)[r]{\strut{} 16}}%
		      \put(814,3578){\makebox(0,0)[r]{\strut{} 20}}%
		      \put(814,3977){\makebox(0,0)[r]{\strut{} 24}}%
		      \put(814,4376){\makebox(0,0)[r]{\strut{} 28}}%
		      \put(814,4775){\makebox(0,0)[r]{\strut{} 32}}%
		      \put(946,1364){\makebox(0,0){\strut{}1}}%
		      \put(1605,1364){\makebox(0,0){\strut{}10}}%
		      \put(2337,1364){\makebox(0,0){\strut{}20}}%
		      \put(3069,1364){\makebox(0,0){\strut{}30}}%
		      \put(3801,1364){\makebox(0,0){\strut{}40}}%
		      \put(4533,1364){\makebox(0,0){\strut{}50}}%
		      \put(5266,1364){\makebox(0,0){\strut{}60}}%
		      \put(5998,1364){\makebox(0,0){\strut{}70}}%
		      \put(6730,1364){\makebox(0,0){\strut{}80}}%
		      \put(176,3179){\rotatebox{-270}{\makebox(0,0){\strut{}million operations per sec (more is better)}}}%
		      \put(3874,1034){\makebox(0,0){\strut{}number of threads}}%
		    }%
		    \gplfronttext{%
		      \csname LTb\endcsname%
		      \put(1885,613){\makebox(0,0)[r]{\strut{}MS}}%
		      \csname LTb\endcsname%
		      \put(1885,393){\makebox(0,0)[r]{\strut{}LCRQ}}%
		      \csname LTb\endcsname%
		      \put(1885,173){\makebox(0,0)[r]{\strut{}k-FIFO}}%
		      \csname LTb\endcsname%
		      \put(4091,613){\makebox(0,0)[r]{\strut{}LL+D~MS}}%
		      \csname LTb\endcsname%
		      \put(4091,393){\makebox(0,0)[r]{\strut{}LLD~LCRQ}}%
		      \csname LTb\endcsname%
		      \put(4091,173){\makebox(0,0)[r]{\strut{}LLD~k-FIFO}}%
		      \csname LTb\endcsname%
		      \put(6297,613){\makebox(0,0)[r]{\strut{}1-RA~DQ}}%
		    }%
		    \gplbacktext
		    \put(0,0){\includegraphics{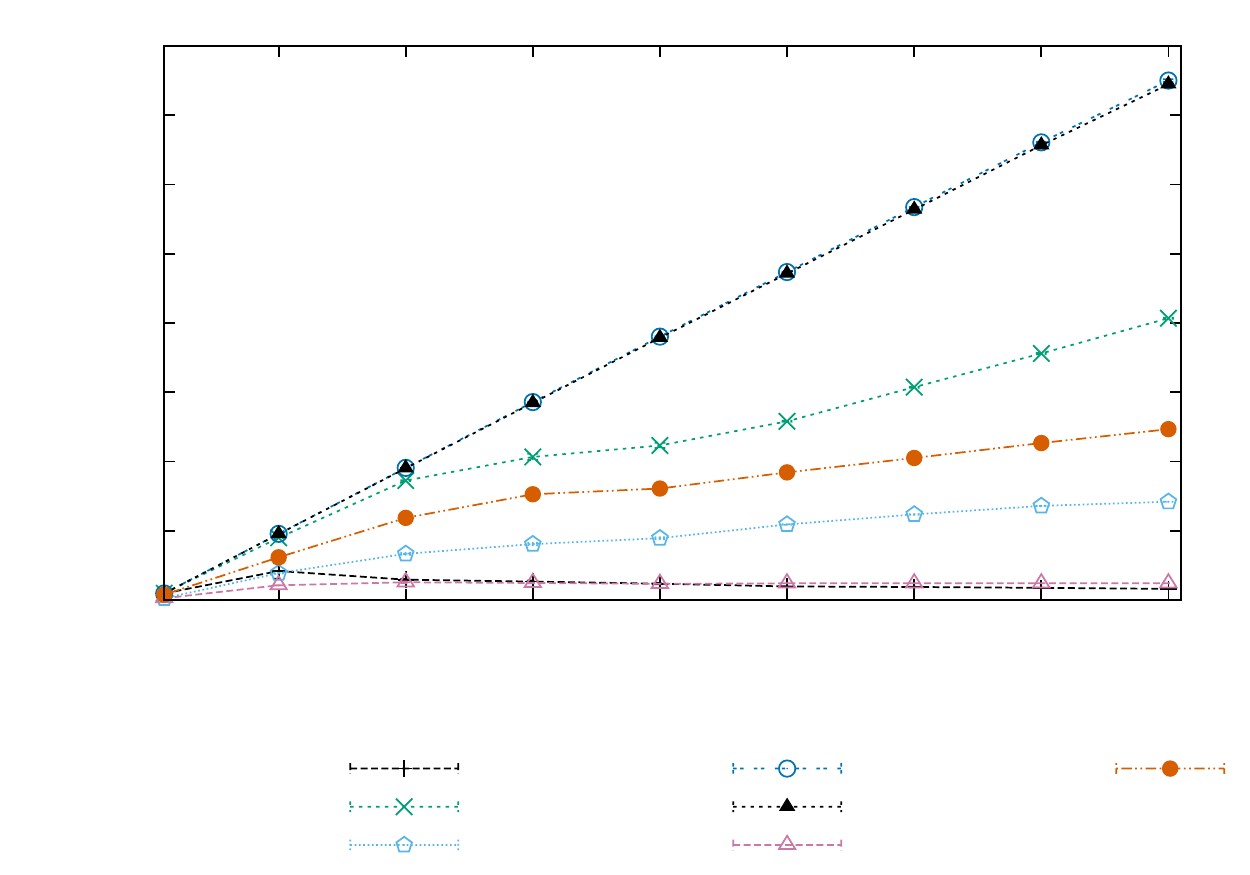}}%
		    \gplfronttext
		  \end{picture}%
		\endgroup
	}
	\begin{center}Queues, LL queues, and ``queue-like'' pools\end{center}
	\label{fig:seqalt-queues}
\end{minipage}
\begin{minipage}{.5\textwidth}
	\centering
	\resizebox{.95\textwidth}{!}{
		\begingroup
		  \makeatletter
		  \gdef\gplbacktext{}%
		  \gdef\gplfronttext{}%
		  \makeatother
		      \def\colorrgb#1{\color[rgb]{#1}}%
		      \def\colorgray#1{\color[gray]{#1}}%
		      \expandafter\def\csname LTw\endcsname{\color{white}}%
		      \expandafter\def\csname LTb\endcsname{\color{black}}%
		      \expandafter\def\csname LTa\endcsname{\color{black}}%
		      \expandafter\def\csname LT0\endcsname{\color[rgb]{1,0,0}}%
		      \expandafter\def\csname LT1\endcsname{\color[rgb]{0,1,0}}%
		      \expandafter\def\csname LT2\endcsname{\color[rgb]{0,0,1}}%
		      \expandafter\def\csname LT3\endcsname{\color[rgb]{1,0,1}}%
		      \expandafter\def\csname LT4\endcsname{\color[rgb]{0,1,1}}%
		      \expandafter\def\csname LT5\endcsname{\color[rgb]{1,1,0}}%
		      \expandafter\def\csname LT6\endcsname{\color[rgb]{0,0,0}}%
		      \expandafter\def\csname LT7\endcsname{\color[rgb]{1,0.3,0}}%
		      \expandafter\def\csname LT8\endcsname{\color[rgb]{0.5,0.5,0.5}}%
		  \setlength{\unitlength}{0.0500bp}%
		  \begin{picture}(7200.00,5040.00)%
		    \gplbacktext{%
		      \csname LTb\endcsname%
		      \put(814,1804){\makebox(0,0)[r]{\strut{} 0}}%
		      \put(814,2175){\makebox(0,0)[r]{\strut{} 4}}%
		      \put(814,2547){\makebox(0,0)[r]{\strut{} 8}}%
		      \put(814,2918){\makebox(0,0)[r]{\strut{} 12}}%
		      \put(814,3290){\makebox(0,0)[r]{\strut{} 16}}%
		      \put(814,3661){\makebox(0,0)[r]{\strut{} 20}}%
		      \put(814,4032){\makebox(0,0)[r]{\strut{} 24}}%
		      \put(814,4404){\makebox(0,0)[r]{\strut{} 28}}%
		      \put(814,4775){\makebox(0,0)[r]{\strut{} 32}}%
		      \put(946,1584){\makebox(0,0){\strut{}1}}%
		      \put(1605,1584){\makebox(0,0){\strut{}10}}%
		      \put(2337,1584){\makebox(0,0){\strut{}20}}%
		      \put(3069,1584){\makebox(0,0){\strut{}30}}%
		      \put(3801,1584){\makebox(0,0){\strut{}40}}%
		      \put(4533,1584){\makebox(0,0){\strut{}50}}%
		      \put(5266,1584){\makebox(0,0){\strut{}60}}%
		      \put(5998,1584){\makebox(0,0){\strut{}70}}%
		      \put(6730,1584){\makebox(0,0){\strut{}80}}%
		      \put(176,3289){\rotatebox{-270}{\makebox(0,0){\strut{}million operations per sec (more is better)}}}%
		      \put(3874,1254){\makebox(0,0){\strut{}number of threads}}%
		    }%
		    \gplfronttext{%
		      \csname LTb\endcsname%
		      \put(2988,833){\makebox(0,0)[r]{\strut{}Treiber}}%
		      \csname LTb\endcsname%
		      \put(2988,613){\makebox(0,0)[r]{\strut{}TS~Stack}}%
		      \csname LTb\endcsname%
		      \put(2988,393){\makebox(0,0)[r]{\strut{}k-Stack}}%
		      \csname LTb\endcsname%
		      \put(2988,173){\makebox(0,0)[r]{\strut{}LL+D~Treiber}}%
		      \csname LTb\endcsname%
		      \put(5458,833){\makebox(0,0)[r]{\strut{}LLD~TS~Stack}}%
		      \csname LTb\endcsname%
		      \put(5458,613){\makebox(0,0)[r]{\strut{}LLD~k-Stack}}%
		      \csname LTb\endcsname%
		      \put(5458,393){\makebox(0,0)[r]{\strut{}1-RA~DS}}%
		    }%
		    \gplbacktext
		    \put(0,0){\includegraphics{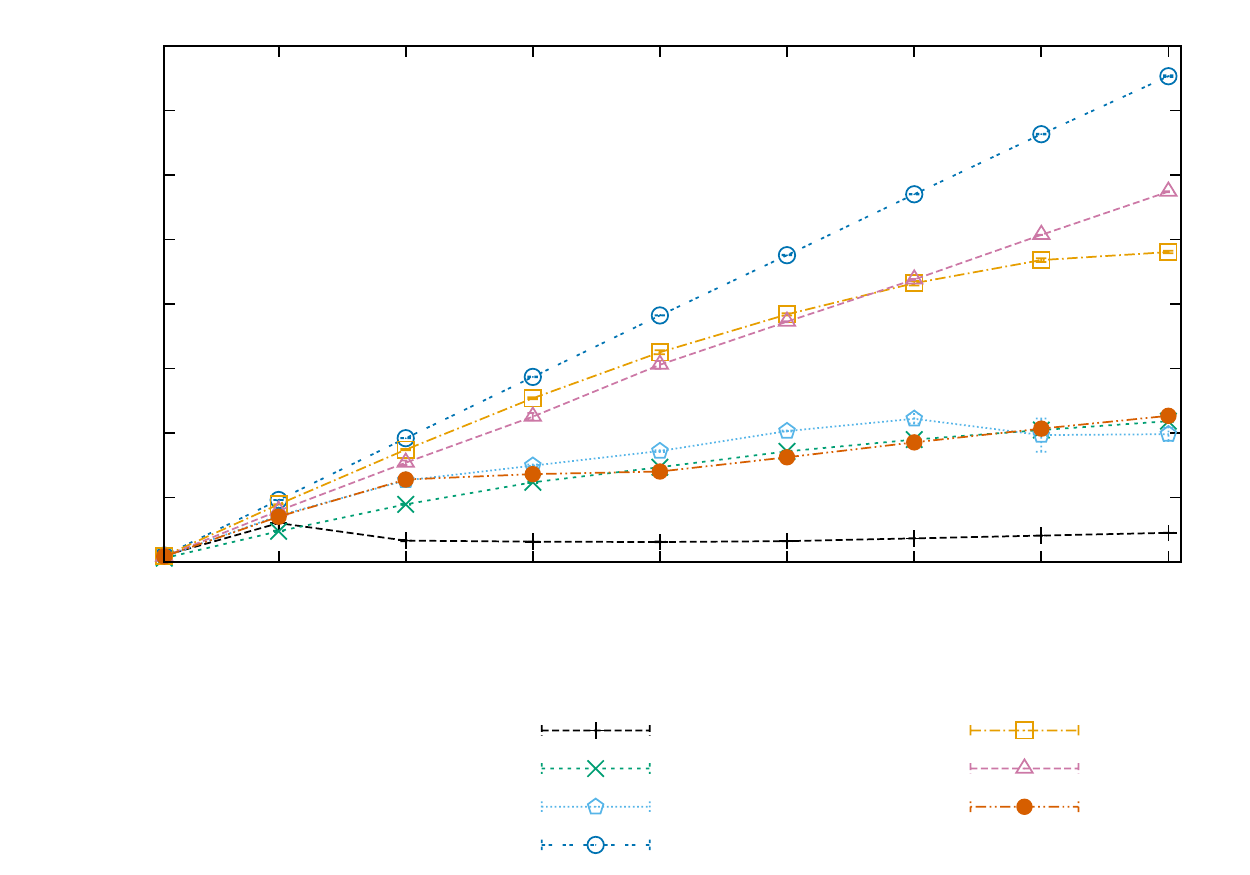}}%
		    \gplfronttext
		  \end{picture}%
		\endgroup
	}
	\begin{center}Stacks, LL stacks, and ``stack-like'' pools\end{center}
	\label{fig:seqalt-stacks}
\end{minipage}
\caption{
Performance and scalability of sequential alternating microbenchmarks with an
increasing number of threads on a 40-core (2 hyperthreads per core) machine}
\label{fig:seqalt}
\end{figure}

\section{Verifying Local Linearizability}

In general, verifying local linearizability amounts to verifying linearizability for a set of smaller histories. 
This might enable verification in a modular/compositional way.
Aside from this, it is important to mention (again) that for our locally linearizable data structures in Section~\ref{sec:implementations-queues-and-stacks} built from linearizable building blocks, the correctness proofs are straightforward assuming the building blocks are proven to be linearizable. 
In addition, for queue we can state an ``axiomatic" verification theorem for local linearizability in the style of~\cite{Henzinger:CONCUR13,Chakraborty:LMCS15}, whose main theorem we recall next (with a slight reformulation).

\begin{theorem}[Queue Linearizability]
\label{thm:queue-lin}
A queue concurrent history $\history{h}$ is linearizable wrt the queue sequential specification~$S_Q$ if and only if 
\begin{enumerate}[1.]

	\item $\history{h}$ is linearizable wrt the pool sequential specification $S_P$ (with suitable renaming of method calls), and
	
	\item $\forall x,y \in V. \,\,\,\enq{x} <_\history{h} \enq{y}\,\,\,\wedge\,\,\, \deq{y} \in \history{h} \,\,\,\Rightarrow\,\,\, \deq{x} \in \history{h} \,\,\,\wedge\,\,\,\deq{y} \not<_\history{h} \deq{x}$.
	\qed
\end{enumerate}
\end{theorem}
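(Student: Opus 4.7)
The plan is to prove both implications, relying on the fact that $S_Q \subseteq S_P$ under the natural renaming $\enqName \mapsto \insName$ and $\deqName \mapsto \remName$ (queue sequences satisfy the pool axioms).

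\emph{Forward direction} ($\Rightarrow$): Let $\history{s} \in S_Q$ linearize $\history{h}$. Then $\history{s} \in S_P$ as well, so the same $\history{s}$ witnesses pool linearizability of $\history{h}$, giving condition~1. For condition~2, assume $\enq{x} <_{\history{h}} \enq{y}$ and $\deq{y} \in \history{h}$. Precedence preservation yields $\enq{x} \abOrder{\history{s}} \enq{y}$ and $\deq{y} \in \history{s}$. Applying the queue order axiom~(4) to $\history{s}$ supplies $\deq{x} \in \history{s}$ with $\deq{x} \abOrder{\history{s}} \deq{y}$; hence $\deq{x} \in \history{h}$, and $\deq{y} <_{\history{h}} \deq{x}$ would force $\deq{y} \abOrder{\history{s}} \deq{x}$ by precedence preservation, a contradiction.

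\emph{Backward direction} ($\Leftarrow$): Start from a pool linearization $\history{s}_P$ of some completion $\history{h}_c \in \Complete{\history{h}}$ and transform it into a queue linearization $\history{s}_Q$ by a sequence of swaps of adjacent method calls that are $<_{\history{h}}$-incomparable. The guiding claim is: whenever two adjacent calls in the current sequence witness a violation of queue axiom~(4), condition~2 together with the pool axioms forces the offending pair to be $<_{\history{h}}$-incomparable, so the swap is legal and removes the violation. Concretely, suppose $\enq{x}$ precedes $\enq{y}$ in the current sequence, $\deq{y}$ is present, and either $\deq{x}$ is absent or $\deq{y}$ precedes $\deq{x}$. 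If $\enq{x} <_{\history{h}} \enq{y}$, then condition~2 gives $\deq{x} \in \history{h}$ and $\deq{y} \not<_{\history{h}} \deq{x}$; combined with precedence preservation (which would otherwise have ordered $\deq{x}$ before $\deq{y}$), the dequeues must be $<_{\history{h}}$-incomparable and may be swapped. If $\enq{x}$ and $\enq{y}$ are already $<_{\history{h}}$-incomparable, the enqueues themselves may be swapped. Iterating until no axiom~(4) violation remains produces $\history{s}_Q$.

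I expect the main obstacle to be the bookkeeping: each swap must preserve pool axioms (1)--(3) — especially axiom~(3), which couples ordinary removes with $\rem{e}$ calls for $e \in \Emp$ — and the process must terminate while keeping $\precOrder{\history{h}}$ intact. A cleaner alternative is a direct inductive construction: build $\history{s}_Q$ one symbol at a time by always appending a minimal unfinished method call consistent with $\precOrder{\history{h}}$, the pool axioms, and the queue order established so far, using condition~2 as the invariant guaranteeing that such a choice always exists. Either route reduces the queue case to the pool case augmented by a compatibility argument, which is exactly the role condition~2 plays.
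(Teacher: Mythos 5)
Your forward direction is essentially correct (up to a corner case the theorem statement itself glosses over: $\deq{y} \in \history{h}$ only means the invocation appears, so a completion may drop it as pending, in which case axiom~(4) cannot be applied to it in the witness). The backward direction is where the entire content of this theorem lives, and there your argument has genuine gaps. A violation of axiom~(4) is not witnessed by two \emph{adjacent} calls: the offending $\deq{y}$ and $\deq{x}$ (or $\enq{x}$ and $\enq{y}$) may be separated by arbitrarily many other calls, and exchanging them requires swapping one of them past every intermediate call. Nothing you say rules out an intermediate $m$ with $m \precOrder{\history{h}} \deq{x}$ blocking the leftward moves while some $m'$ with $\deq{y} \precOrder{\history{h}} m'$ blocks the rightward ones, nor that an individually legal swap destroys a pool axiom --- notably axiom~(3), since moving $\deq{y}$ to the right past some $\rem{e}$ with $e \in \Emp$ and $\enq{y}$ already to its left is forbidden. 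A swap that repairs one axiom-(4) violation can also create another (reordering two enqueues changes which dequeue order those values require), so termination needs an explicit measure you do not supply. Finally, condition~2 only yields $\deq{x} \in \history{h}$; the completion underlying your chosen pool linearization may have discarded that pending invocation, and then no amount of swapping introduces it --- you would have to revise the completion itself. Your ``cleaner alternative'' (greedily append a minimal available call, with condition~2 as the invariant that a choice always exists) simply restates the theorem: showing that the greedy construction never gets stuck \emph{is} the hard part.

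For comparison: the paper does not prove this statement at all. It is explicitly recalled, with a slight reformulation, from~\cite{Henzinger:CONCUR13,Chakraborty:LMCS15}, where the hard (backward) direction is the main technical contribution and is established by a global inductive argument over the set of values (repeatedly identifying a value that can safely be dequeued first and peeling it off), not by local adjacent transpositions. So there is no in-paper argument to measure yours against, but the bookkeeping obstacles you flag at the end of your proposal are real, and your sketch does not discharge them.
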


We note that an analogous change to the axioms in the sequential specification of a pool and a stack does not lead to a characterisation of linearizability for pools and stacks, cf.~\cite{Dodds:POPL15}. 
An axiomatic characterisation of linearizability for pools and stacks would involve an infinite number of axioms/infinite axioms, due to the need to prohibit infinitely many problematic shapes, cf.~\cite{Bouajjani:ICALP15}.

We are now able to state the queue-local-linearizability-verification result.
\begin{theorem}[Queue Local Linearizability]\label{thm:queue}
A queue concurrent history $\history{h}$ is locally linearizable wrt the queue sequential specification $S_Q$ if and only if 
\begin{enumerate}[1.]

	\item $\history{h}$ is locally linearizable wrt the pool sequential specification $S_P$ (after suitable renaming of method calls), and
	
	\item $\forall x,y \in V. \,\,\,\forall T. \,\,\,\enq{x} <_{\history{h}}^T \enq{y} \,\,\,\wedge\,\,\, \deq{y} \in \history{h} \,\,\,\Rightarrow\,\,\, \deq{x} \in \history{h} \,\,\,\wedge\,\,\,\deq{y} \not<_\history{h} \deq{x}$.
	\qed
\end{enumerate}
\end{theorem}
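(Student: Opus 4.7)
The plan is to prove the two directions by reducing each thread-induced history $\history{h}_T$ to the linearizability characterization in Theorem~\ref{thm:queue-lin}, exploiting the fact that for every thread $T$ all enqueues in $\history{h}_T$ belong to $T$ (they are in-methods), while dequeues in $\history{h}_T$ may come from any thread but only remove values inserted by $T$.

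\textbf{Forward direction.} Assume $\history{h}$ is locally linearizable wrt~$S_Q$. First I would observe that $S_Q \subseteq S_P$ (after the standard renaming), so each $\history{h}_T$ linearizable wrt~$S_Q$ is also linearizable wrt~$S_P$, giving condition~1. For condition~2, pick $x,y\in V$ and a thread~$T$ with $\enq{x} <_{\history{h}}^T \enq{y}$ and $\deq{y}\in\history{h}$. Since $\enq{x}$ and $\enq{y}$ are both performed by $T$, they belong to $\mathrm{I}_T$, hence lie in $\history{h}_T$, with $\enq{x} <_{\history{h}_T} \enq{y}$. Since $y$ was inserted by $T$, $\deq{y}$ is an out-method of $T$, so $\deq{y}\in\history{h}_T$. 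Now Theorem~\ref{thm:queue-lin} applied to the linearizable history $\history{h}_T$ gives $\deq{x}\in\history{h}_T$ and $\deq{y}\not<_{\history{h}_T}\deq{x}$, and the precedence order on $\history{h}_T$ is inherited from that on $\history{h}$, so the conclusion lifts to $\history{h}$.

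\textbf{Backward direction.} Assume conditions~1 and~2. The decomposition property needed for local linearizability wrt~$S_Q$ coincides with the one for $S_P$ and is supplied by condition~1. It remains to show that each $\history{h}_T$ is linearizable wrt~$S_Q$; by Theorem~\ref{thm:queue-lin} it suffices to verify (i) $\history{h}_T$ is linearizable wrt~$S_P$, which is immediate from condition~1, and (ii) the queue-order axiom holds at the level of $\history{h}_T$. For (ii), suppose $\enq{x},\enq{y}\in\history{h}_T$ with $\enq{x}<_{\history{h}_T}\enq{y}$ and $\deq{y}\in\history{h}_T$. By the definition of $\mathrm{I}_T$, both enqueues are performed by $T$, hence $\enq{x}<_{\history{h}}^T\enq{y}$; also $\deq{y}\in\history{h}$. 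Condition~2 then yields $\deq{x}\in\history{h}$ and $\deq{y}\not<_{\history{h}}\deq{x}$. Since $x\in\mathrm{I}_T$, the dequeue $\deq{x}$ is in $\mathrm{O}_T$, so $\deq{x}\in\history{h}_T$, and restricting the precedence relation gives $\deq{y}\not<_{\history{h}_T}\deq{x}$.

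\textbf{Anticipated obstacle.} The only subtle point is the correct bookkeeping between the global order $<_{\history{h}}$ and the thread-induced order $<_{\history{h}_T}$, together with the asymmetric roles of in- and out-methods: in-methods are always local to $T$, but out-methods can be executed by any thread. This asymmetry is exactly why condition~2 is formulated using the program-order $<_{\history{h}}^T$ among enqueues rather than the global precedence order $<_{\history{h}}$; keeping this distinction straight and checking that the projection $\history{h}\mapsto\history{h}_T$ neither loses nor adds the relevant orderings is the main thing to get right, but it follows directly from Definitions~\ref{def:in-out-m} and~\ref{def:projection}.
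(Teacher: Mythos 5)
Your proposal is correct and follows essentially the same route as the paper's proof: both directions reduce to the linearizability characterization of Theorem~\ref{thm:queue-lin} applied to each thread-induced history $\history{h}_T$, using that the enqueues in $\history{h}_T$ are exactly $T$'s own (so $<_{\history{h}}^T$ and $<_{\history{h}_T}$ agree on them) and that the precedence order of $\history{h}_T$ is the restriction of that of $\history{h}$. The bookkeeping you flag as the main subtlety is precisely what the paper's proof spells out, so nothing is missing.
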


\end{document}